\documentclass[11pt]{article}

\usepackage{bbm}
\usepackage{tikz}
\usepackage{quantikz}
\usepackage{orcidlink}
\usepackage{graphicx}
\usepackage{stmaryrd}
\usepackage{soul,xcolor}
\setstcolor{red}

\usetikzlibrary{calc,decorations.pathreplacing}

\usepackage{authblk} 

\usepackage[utf8]{inputenc}
\usepackage{lipsum}
\usepackage{adjustbox}

\usepackage{amsmath, amssymb,amsfonts,amsthm,mathtools}
\usepackage{xspace,graphicx,relsize,bm,bbm}
\usepackage{soul} 

\usepackage{parskip}  

\usepackage{booktabs}
\usepackage[table]{xcolor}
\usepackage{pifont}
 \usepackage{makecell}

\usepackage{libertine}
\usepackage{libertinust1math}
\usepackage{dsfont}
\usepackage[T1]{fontenc}
\usepackage{enumitem}

\usepackage[
    backend=biber,
    style=alphabetic,
    sorting=anyt,
    minalphanames=3,
    maxalphanames=3,
    maxnames=99,
    backref=true
    ]{biblatex}
\DefineBibliographyStrings{english}{%
  backrefpage = {page},
  backrefpages = {pages},
}

\usepackage{sepfootnotes}
\newendnotes{x}
\renewcommand\xnotesize\normalsize

\usepackage{hyperref}
\usepackage[margin=1.75cm]{geometry}
\definecolor{linkcol}{rgb}{0.0,0.55,0.7}
\definecolor{citecol}{rgb}{0.0, 0.6, 0.45}
\definecolor{urlcol}{rgb}{0.7, 0.0, 0.55}
\hypersetup{
	colorlinks,
	linkcolor={linkcol},
	citecolor={citecol},
	urlcolor={urlcol}
}

\usepackage{url}
\usepackage{subcaption}
\usepackage{mleftright}
\usepackage{hyperref}
\usepackage{multirow}
\usepackage{physics}  

\usepackage{algorithm}
\usepackage{algpseudocodex}[indLines = true,italicComments = false]

\usepackage{cleveref}

\usepackage{authblk}  

\def\01{\{0,1\}}

\newcommand{\poly}{\mathrm{poly}}

\newcommand{\iu}{\mathrm{i}}

\newtheoremstyle{mydefinitionsty}
{10pt}
{10pt}
{}
{}
{}
{}
{.5em}
{\textbf{\thmname{#1}~\thmnumber{#2}:  }\thmnote{(#3)}}
\theoremstyle{mydefinitionsty}
\newtheorem{definition}{Definition}[section]
\newtheorem{remark}[definition]{Remark}
\newtheorem{informal-definition}[definition]{Informal Definition}

\newtheorem{assumption}[definition]{Assumption}

\newtheoremstyle{myproblemsty}
{10pt}
{10pt}
{}
{}
{}
{}
{.5em}
{\textbf{\thmname{#1}~\thmnumber{#2}:  }\thmnote{(#3)}\newline}
\theoremstyle{myproblemsty}

\newtheoremstyle{mythmsty}
{10pt}
{10pt}
{\itshape}
{}
{}
{}
{.5em}
{\textbf{\thmname{#1}~\thmnumber{#2}:  }\thmnote{(#3)}}
\theoremstyle{mythmsty}

\newtheorem{theorem}[definition]{Theorem}
\newtheorem{lemma}[definition]{Lemma}
\newtheorem{corollary}[definition]{Corollary}

\newtheorem{question}[definition]{Question}
\newtheorem{construction}[definition]{Construction}

\numberwithin{equation}{section}

\definecolor{highlightrow}{RGB}{225, 245, 238}

\title{Instantiating Microcrypt: \\Obstacles and opportunities via tailored state certification}

\author[1]{Jose Carrasco\,\thanks{jose.carrasco@fu-berlin.de}}
\author[1,2,3]{Jens Eisert}
\author[4, 12]{Soumik Ghosh}
\author[11,12]{Dominik Hangleiter}
\author[5,6,7]{Nicky Kai~Hong~Li} 
\author[8,9,10]{Ryan Sweke\,\thanks{rsweke@aims.ac.za}}
\affil[1]{Dahlem Center for Complex Quantum Systems, Freie Universit\"{a}t Berlin, 14195 Berlin, Germany}
\affil[2]{Helmholtz-Zentrum Berlin für Materialien und Energie, 14109 Berlin, Germany}
\affil[3]{Fraunhofer Heinrich Hertz Institute, 10587 Berlin, Germany}
\affil[4]{Center for Theoretical Physics, Massachusetts Institute of Technology, 77 Massachusetts Ave, Cambridge, MA 02139, USA}
\affil[5]{Technische Universit\"{a}t Wien, Atominstitut, Stadionallee 2, 1020 Vienna, Austria}
\affil[6]{Vienna Center for Quantum Science and Technology, TU Wien, 1020 Vienna, Austria}
\affil[7]{Institute for Quantum Optics and Quantum Information (IQOQI), Austrian Academy of Sciences, Boltzmanngasse 3, 1090 Vienna, Austria}
\affil[8]{African Institute for Mathematical Sciences (AIMS), South Africa}
\affil[9]{Department of Mathematical Sciences, Stellenbosch University, Stellenbosch 7600, South Africa}

\affil[10]{National Institute for Theoretical and Computational Sciences (NITheCS), South Africa}
\newcommand{\simons}{%
Simons Institute for the Theory of Computing, University of California at Berkeley, USA}
\newcommand{\ethz}{%
Institute for Theoretical Physics, ETH Z\"urich, Switzerland
}
\affil[11]{\ethz}
\affil[12]{\simons}
\date{\today}

\begin{document}
\maketitle

\begin{abstract} Recent work has introduced the Hamiltonian phase state (HPS) assumptions, which postulate that Hamiltonian phase states can be used to instantiate pseudorandom and one-way state generators~\cite{bostanci2025efficientquantumpseudorandomnesshamiltonian}. Additionally, it has been conjectured that these assumptions can be true, even if one-way functions do not exist. This is exciting, because if true, then the HPS assumptions provide a route to the instantiation of cryptography which is genuinely in Microcrypt -- the world of (quantum) cryptographic protocols and primitives which can exist even if one-way functions do not. In this work we falsify this conjecture, by proving that if the HPS assumptions are true, then one-way functions exist. While this removes the possibility of instantiating genuine Microcrypt protocols and primitives with Hamiltonian phase states, it shows that the HPS assumptions provide novel inherently quantum assumptions for the construction of classical cryptography. The technical contribution that allows us to do this is a method for the construction of one-way puzzles from one-way state generators via tailored "measure first, ask later" state certification protocols. This generalizes prior constructions of one-way puzzles from one-way state generators via classical shadows and allows us to relate properties of the resulting one-way puzzle to properties of the state certification protocol used in the construction. Specifically, if the state certification protocol admits efficient classical post-processing then one obtains an efficiently verifiable one-way puzzle, and if the state certification protocol can be efficiently classically simulated in a certain sense, then one obtains a classical one-way puzzle, which implies one-way functions. Indeed, the latter observation allows us to prove that the HPS assumptions imply one-way functions, by exploiting properties of recent state certification protocols for phase states. The former observation provides a new toolbox for the construction of efficiently verifiable one-way puzzles by exploiting tailored state certification protocols for pseudorandom and one-way state generators.
\end{abstract}

\newpage

\tableofcontents

\newpage

\section{Introduction}\label{s:introduction}

The existence of \emph{one-way functions} (OWFs) is a minimal assumption for classical cryptography. More specifically, all meaningful classical cryptographic primitives imply OWFs, and thus OWFs are necessary (but not sufficient) for classical cryptography. Remarkably, however, over the last few years, evidence has emerged that meaningful cryptography may be possible in a \textit{quantum} world even if $\mathsf{P}=\mathsf{NP}$ (and hence OWFs do not exist). More specifically:

\begin{enumerate}
\item A series of breakthrough works ~\cite{Kretschmer_2021,Kretschmer_2023,Kretschmer_2025} have culminated in the construction of oracle worlds in which $\mathsf{P}=\mathsf{NP}$, but quantum cryptographic primitives such as quantum computable OWFs and pseudorandom state generators (PRSG) do exist. 
\item Khurana and Tomer have shown that one can construct one-way puzzles (OWPs) from ``quantum advantage assumptions''
for quantum random sampling schemes, which can be true even if $\mathsf{P}=\mathsf{NP}$~\cite{khurana2025founding}, under extremely mild
complexity-theoretic assumptions.
\end{enumerate}
In addition, a variety of works have also shown how to construct non-trivial quantum cryptographic protocols from such inherently quantum primitives, including variants of bit-commitments, secure multiparty computation and digital signatures amongst others~\cite{ananth2022cryptography, Prabhanjan,Morimae_2022, grilo2025quantum, khurana2024commitmentsquantumonewayness, ananth2023pseudorandomstringspseudorandomquantum, morimae2024onewaynessquantumcryptography,QCCCCrypto,Kretschmer_2025}.

In line with the ``world building'' tradition of theoretical cryptography, the name \textit{Microcrypt} has been given to the world in which OWFs do not exist, but genuinely quantum cryptographic primitives such as quantum computable OWFs, OWPs and PRSGs (amongst others) do exist. Motivated by how unexpected and exciting this world is, 
recent years have witnessed significant effort to map this world~\cite{microcryptzoo}, i.e.:
\begin{enumerate}
\item To propose and define Microcrypt primitives, understand which primitives imply which, and which complexity-theoretic conditions are necessary for a primitive to exist.
\item To understand which cryptographic protocols can be built from which Microcrypt primitives.
\item To propose \textit{concrete instantiations} of Microcrypt primitives under suitable \textit{quantum} assumptions, which are plausibly independent of OWFs.
\end{enumerate}
Given this work, we now understand that Microcrypt has a hierarchical structure -- i.e., it consists of multiple distinct ``subworlds''. Each such subworld is defined by a necessary upper bound on the computational power of quantum computers, and has an associated candidate minimal primitive -- i.e., a primitive which can be constructed from every other primitive in the subworld, and is thus necessary for its existence. More specifically, as discussed and illustrated in Ref.~\cite{goldin2024countcrypt}, we have the following Microcrypt hierarchy:
\begin{enumerate}
    \item \textbf{Quantumania:} The cryptographic primitives which are broken if $\mathsf{BQP}=\mathsf{QCMA}$. Nearly every primitive in Quantumania implies the existence of \textit{efficiently-verifiable} OWPs (EV-OWPs)~\cite{khurana2024commitmentsquantumonewayness,khurana2025founding}, which can therefore be considered as a minimal primitive for this world. This world is particularly interesting, given that it contains a variety of cryptographic protocols which can be executed with local quantum computation but only classical communication -- i.e., QCCC cryptography~\cite{QCCCCrypto}. Additionally, Ref.~\cite{Kretschmer_2025} has recently provided an oracle relative to which $\mathsf{P}=\mathsf{NP}$, yet quantum computable trapdoor OWFs (which imply EV-OWPs) exist, thereby giving evidence that Quantumania may indeed be non-empty even if OWFs do not exist.
\item \textbf{Countcrypt:} The cryptographic primitives which are not necessarily broken if $\mathsf{BQP}=\mathsf{QCMA}$, but are broken if $\mathsf{BQP}=\mathsf{PP}$. Amongst others, this world contains PRSGs~\cite{PRSGdefinition}, one-way state generators (OWSGs) and standard OWPs~\cite{khurana2024commitmentsquantumonewayness,khurana2025founding} (which appear to be a minimal primitive for this world). 
\item \textbf{Nanocrypt:} The cryptographic primitives which may exist even if $\mathsf{BQP}=\mathsf{PP}$, and for which efficiently generated, statistically far-apart, \emph{computationally indistinguishable} 
(EFI) pairs~\cite{BCQ2023} appear to be a minimal primitive.
\end{enumerate}
In order to realize any of the cryptography possible in any subworld 
of Microcrypt, one requires proposals for \textit{concrete instantiations} of the relevant minimal primitive.  Of course, any such concrete instantiation will come with its own assumptions, and importantly, for the instantiation to be genuinely in Microcrypt, its assumptions should \textit{not} imply the existence of OWFs. With the goal of providing such concrete instantiations under inherently quantum assumptions, Ref.~\cite{bostanci2025efficientquantumpseudorandomnesshamiltonian} recently proposed the \textit{Hamiltonian phase state (HPS) assumptions}. Specifically, these assumptions postulate that ensembles of Hamiltonian phase states provide concrete PRSGs and OWSGs, which can then be used to construct OWPs~\cite{khurana2024commitmentsquantumonewayness}. Importantly, Ref.~\cite{bostanci2025efficientquantumpseudorandomnesshamiltonian} also provided some preliminary evidence that the HPS assumptions may be true, even if OWFs do not exist. This is exciting, as if this is the case, then the HPS assumptions indeed provide a clear foundation for the instantiation of the Countcrypt subworld of Microcrypt.  However, given that the HPS assumptions are new and untested, the natural question that we address in this work is the following:
\begin{center}
\textit{Can the Hamiltonian phase state assumptions be true if one-way functions do not exist?}
\end{center}
We resolve this question in the \textit{negative}, showing that if the Hamiltonian phase state assumptions are true, then OWFs exist. This unfortunately provides a clear obstacle for instantiating Microcrypt, by showing that the HPS assumptions are not sufficient for the instantiation of genuine Microcrypt cryptography. It does however mean that the HPS assumptions provide a novel \textit{inherently quantum} assumption for the construction of classical cryptography, complementing recent work aiming precisely to provide such assumptions~\cite{YihuiLSN1,YihuiLSN2}.

The primary technical contribution that allows us to resolve the question above is a method for the construction of OWP variants from OWSGs via "measure first, ask later" state certification protocols for the set of states defining the OWSG. This generalizes and abstracts the existing construction of OWPs from OWSGs via classical shadows~\cite{khurana2024commitmentsquantumonewayness}. Importantly, it also allows us to relate properties of the resulting OWP to properties of the state certification protocol that was used. 
In particular, if the state certification measurement protocol can be classically simulated in a specific sense, then the resulting OWP is a \textit{classical} OWP, which then implies OWFs~\cite{khurana2024commitmentsquantumonewayness,khurana2025founding}. Indeed, this  fact is what allows us to prove that the HPS assumptions imply OWFs, by exploiting properties of existing state certification protocols for (Hamiltonian) phase states~\cite{HuangPreskillSoleimanifar2025}.

Importantly, however, the above method for constructing OWPs from OWSGs and state certification protocols also allows us to show that if the state certification protocol admits efficient classical post-processing, then the resulting OWP is \textit{efficiently verifiable}. This provides a new toolbox for constructing \textit{efficiently verifiable} OWPs, and may open up new opportunities for identifying concrete ensembles of states with which to instantiate Quantumania.

\subsection{Our contributions}\label{ss:contributions}

Motivated by the above question, we make a variety of contributions, which can be loosely categorized into those providing \textit{obstacles} and those providing \textit{opportunities} (hopefully) for the concrete instantiation of Microcrypt.

\subsubsection{Obstacles: Hamiltonian phase state assumptions imply one-way functions}\label{sss:contributions-HPSimplyowf}

We postpone formal definitions to Section~\ref{ss:HPS-prelim}, but informally, the \textit{Hamiltonian phase state (HPS)} assumptions, recently introduced in Ref.~\cite{bostanci2025efficientquantumpseudorandomnesshamiltonian}, are the following:

\begin{assumption}[Hamiltonian phase state assumptions \cite{bostanci2025efficientquantumpseudorandomnesshamiltonian}] $\,\,$
\begin{enumerate}
\item \textbf{Decision HPS assumption} (informal version of Assumption~\ref{ass:decision-HPS}): There exists a set of Hamiltonian phase states, and an efficiently sampleable distribution over this set, which can be used to construct a pseudorandom state generator.
\item \textbf{Search HPS assumption}  (informal version of Assumption~\ref{ass:search-HPS}): There exists a set of Hamiltonian phase states, and an efficiently sampleable distribution over this set, which can be used to construct a one-way state generator.
\end{enumerate}
\end{assumption}

\begin{figure}
    \centering
     \includegraphics[width=\linewidth]{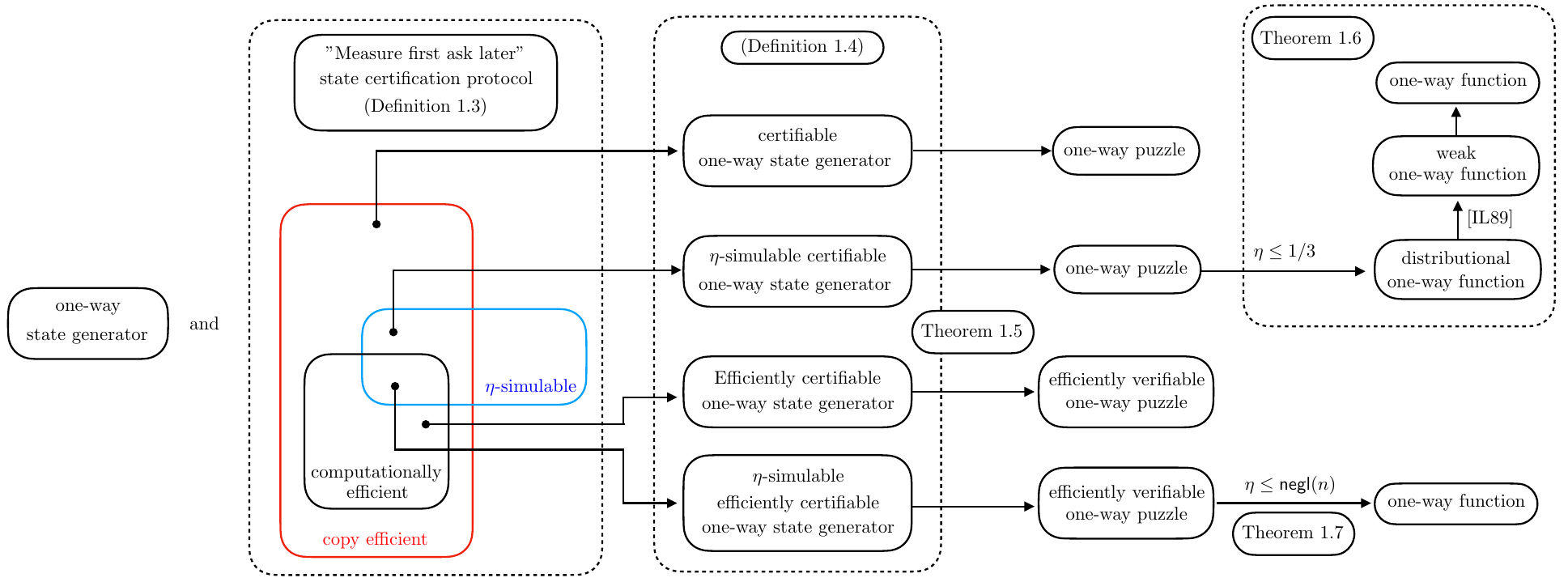} 
    \caption{Summary of the main definitions and constructions developed in this work (where $n$ is taken as the security parameter of the OWSG). Theorem~\ref{thm:HPS-imply-owf-intro} below is proven by showing that there exists an $\eta$-simulable and computationally efficient "measure first, ask later" state certification protocol for the OWSG obtained via the Search HPS assumption, and therefore one can construct a OWF by the implications of either Theorem~\ref{thm:owf-from-sim-cert-owsg-intro} or Theorem~\ref{thm:owf-from-sim-cert-owsg-direct-intro}. As we discuss in Section~\ref{ss:towards-simpler}, Theorem~\ref{thm:owf-from-sim-cert-owsg-direct-intro} yields a simpler OWF, by virtue of not passing through a distributional OWF.} 
    \label{fig:summary} 
\end{figure}

We note that the Search HPS assumption is a weaker assumption, in that the Decision HPS assumption implies the Search HPS assumption. Given that there are a wide variety of cryptographic primitives and protocols that can be built from PRSGs and OWSGs~\cite{microcryptzoo,goldin2024countcrypt}, the HPS assumptions provide a route for the \textit{concrete instantiation} of these primitives and protocols via a family of quantum states that can be easily prepared experimentally via \textit{instantaneous quantum polynomial-time} (IQP) circuits. Additionally, Ref.~\cite{bostanci2025efficientquantumpseudorandomnesshamiltonian} has provided some preliminary evidence that the HPS assumptions are \textit{independent of the existence of OWFs} -- i.e., the HPS assumptions could be true, even if OWFs do not exist. If correct, then the concrete protocols and primitives constructed via the HPS assumptions would indeed be in Microcrypt.

In this work, we establish that, contrary to prior evidence, the HPS assumptions are \textit{not} independent of OWFs. Specifically, in Section~\ref{s:HPS-imply-one-way}, we prove the following:

\begin{theorem}[One-way function from Hamiltonian phase state assumptions]\label{thm:HPS-imply-owf-intro} If the Search HPS assumption is true, then one can explicitly construct a quantum-secure one-way function.
\end{theorem}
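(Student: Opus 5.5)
The plan is to go through a classical one-way puzzle. First, from the Search HPS assumption, build an OWSG whose key is a description of a Hamiltonian phase state $\ket{\psi_k} = 2^{-n/2}\sum_x e^{\iu \theta_k(x)}\ket{x}$. Here the phase $\theta_k(x)$ is a low-degree polynomial in the bits of $x$ (a few-body diagonal Hamiltonian), and the state is prepared by an IQP circuit. Second, exhibit a ``measure first, ask later'' certification protocol for this family with two properties: it is $\eta$-simulable, and its post-processing is efficient. Third, invoke the general construction (the theorem turning simulable certification of an OWSG into a classical OWP, hence a OWF) to conclude.

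For the certification protocol I would use the phase-state certification of Huang--Preskill--Soleimanifar. One samples a random subset of qubits to measure in the $X$ (or Bell-type) basis and measures the rest in the computational basis. The outcome records pairs $(x, x\oplus e_i)$-type information, i.e.\ derivatives of the phase $\theta_k(x\oplus e_i)-\theta_k(x)$ evaluated at random $x$. Acceptance is then an efficiently computable check of these outcomes against a candidate key $k'$, and fidelity with $\ket{\psi_{k'}}$ controls the acceptance probability. The crucial observation is that the measurement statistics on $\ket{\psi_k}$ admit an efficient classical sampler given $k$. The computational-basis part is uniform, and conditioned on the computational-basis bits, each $X$-measured qubit's outcome depends only on a local phase difference computable from $k$ in polynomial time, because $\theta_k$ is low-degree. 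This is exactly $\eta$-simulability: the classical sampler $S(k;r)$ outputs a transcript distributed (close to) identically to the quantum measurement record.

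The classical OWP is then: sample $k$, output the puzzle $s$ as many simulated certification transcripts of $\ket{\psi_k}$, with key $k$. Verification accepts $(s,k')$ iff the certification test passes. Correctness follows from completeness of the certification. For security, suppose an adversary finds $k'$ passing the test on $s$. Since the simulated $s$ has the same distribution as genuine measurement records of copies of $\ket{\psi_k}$, the OWSG adversary can measure its copies, run the puzzle inverter, and output $k'$. Soundness of certification then implies that $\ket{\psi_{k'}}$ has high fidelity with $\ket{\psi_k}$, so $k'$ passes the OWSG verification with non-negligible probability, contradicting the Search HPS assumption. A classical OWP with efficient sampling yields a distributional OWF, and then a quantum-secure OWF (or a direct construction per the paper's alternate theorem).

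I expect the main obstacle to be the simulability and soundness step. One must check that the HPS distribution's phases, possibly with non-binary angles from the allowed Hamiltonian terms, still give conditional single-qubit outcome probabilities that are exactly or $1/\poly$-accurately computable. One must also check that the soundness guarantee of the phase-state certification is strong enough, i.e.\ that passing implies fidelity exceeding the OWSG threshold, when the test is run against adversarially chosen $k'$ rather than a fixed target. I would first handle the exact-simulation case, where phases are multiples of $\pi/2^c$, and then absorb approximation error into $\eta$.
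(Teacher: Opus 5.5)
Your route is the paper's route: the Search-HPS OWSG with classically sampleable $\mathsf{KeyGen}$, then the Huang--Preskill--Soleimanifar protocol as a computationally efficient and simulable ``measure first, ask later'' certification, then the general simulable-certification-to-OWF theorem. The paper uses the negligible-error version of that theorem, whose OWF is $F_\lambda(r)=(0,s_\lambda(r))$ on $\mathrm{Good}_\lambda$ and $(1,r)$ otherwise.

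\textbf{The gap.} The step you flag as an obstacle is a genuine missing piece, and your security sketch already relies on it. The inverter chooses $k'$ after seeing $s$. So a guarantee of the form ``for each fixed target, a far state is rejected with probability $1-\delta$'' says nothing about $\mathcal{F}(\mathcal{A}(s),s)$.

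What is needed is that a single transcript works for all targets at once. The paper builds this into its definition of certification for a set of states: with probability at least $1-\delta$ over $s\gets\mathcal{M}(\ket{\psi}^{\otimes t})$, for every $k\in\mathbb{K}$, $\mathcal{F}(k,s)$ accepts if $\ket{\psi}=\ket{\phi_k}$ and rejects if $|\langle\psi|\phi_k\rangle|^2<1-\epsilon$.

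The paper gets this from:
\begin{itemize}
\item the Hoeffding bound $2e^{-t\tilde\epsilon^2/2}$ for the shadow-overlap estimator;
\item a union bound over $\mathbb{K}$;
\item the choices $\tilde\epsilon=\epsilon/(4\tau)$ and relaxation time $\tau=n$, which holds for every phase state.
\end{itemize}
This gives $t=O(n^2\epsilon^{-2}\log(|\mathbb{K}|/\delta))$, which is polynomial because $|\mathbb{K}|=2^{\poly(n)}$. With $\epsilon=1/8$ and $\delta=2^{-\lambda}$, security then splits into two events. ``Accepted but fidelity below $7/8$'' has probability at most $2^{-\lambda}$ by simultaneous soundness. ``Fidelity at least $7/8$'' is negligible by Markov's inequality and OWSG security.

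\textbf{Details to correct.}
\begin{itemize}
\item \emph{The phase functions are not low-degree.} Here $f_k(x)=\sum_{j=1}^m\theta_j(-1)^{A_j\cdot x}$ with arbitrary $A_j\in\mathbb{F}_2^n$. Both the post-processing (amplitudes $2^{-n/2}e^{\iu f_k(x)}$) and the simulation are efficient because $m=\poly(n)$ and each term is a parity.
\item \emph{The protocol measures one qubit off-$Z$, in a random Pauli basis.} One uniformly random qubit $i$ is measured in a uniformly random basis among $X,Y,Z$, and all other qubits in $Z$. Having a single off-$Z$ qubit is what makes the conditional state exactly $(\ket{0}+e^{\iu g_k(z,i)}\ket{1})/\sqrt{2}$, with $g_k(z,i)=f_k(z\|1_i)-f_k(z\|0_i)$. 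With several $X$-measured qubits, the outcomes are correlated through a joint phase pattern. The $Y$ basis is also needed, since $X$ alone cannot distinguish $\ket{\phi_f}$ from $\ket{\phi_{-f}}$.
\item \emph{Your direct reduction needs negligible total error.} You define the puzzle directly from the classical sampler and transfer an inverter between simulated and genuine transcripts. For that, $1/\poly$ accuracy per copy is not enough. The paper computes $g_k$ to $l=n$ bits and samples a truncated Bernoulli, giving error $2^{-n}$ per copy and $t2^{-n}$ in total. With only constant error your direct reduction breaks; you would instead keep the quantum sampler in the OWP and use the Khurana--Tomer distributional-OWF argument.
\item \emph{There is no useful exact case.} For angles that are multiples of $\pi/2^c$, $\cos^2(\pi/8)$ is already irrational, so go straight to the approximation.
\end{itemize}
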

Given that the Decision HPS assumption implies the Search HPS assumption, an immediate corollary of the above result is that one can also construct one-way functions from the Decision HPS assumption. Unfortunately, Theorem~\ref{thm:HPS-imply-owf-intro} shows that one cannot use the HPS assumptions to obtain cryptography in Microcrypt. However, given that we are able to explicitly construct a OWF from either HPS assumption, it also shows that the HPS assumptions provide a novel, inherently quantum assumption for the construction of classical cryptography. As such, Theorem~\ref{thm:HPS-imply-owf-intro} simultaneously provides an \textit{obstacle} for the concrete instantiation of Microcrypt, and an \textit{opportunity} for the instantiation of Minicrypt from novel inherently quantum assumptions. This complements recent work aimed at providing inherently quantum constructions for cryptography~\cite{YihuiLSN1,YihuiLSN2}, and opens up a wide variety of open questions and directions for future research, which we discuss in more detail in Section~\ref{ss:discussion-questions}.

At a high level, our proof of Theorem~\ref{thm:HPS-imply-owf-intro} is enabled by a new toolbox for constructing \textit{variants} of OWPs from OWSGs and copy efficient "measure first, ask later" state certification protocols for the set of states defining the OWSG, which is illustrated in Figure~\ref{fig:summary}. In particular, this method abstracts and generalizes an existing construction of OWPs from OWSGs and classical shadows~\cite{khurana2024commitmentsquantumonewayness}, and as illustrated in Figure~\ref{fig:summary}, allows us to relate properties of the resulting OWP to properties of the state certification protocol used in the construction. Importantly, when the state certification protocol admits classically efficient post-processing then the resulting OWP is \textit{efficiently verifiable}. If the state certification protocol is classically simulable in a certain sense, then the OWP can be used to construct a OWF. With this in hand, we prove Theorem~\ref{thm:HPS-imply-owf-intro} by showing that existing state certification protocols for phase states~\cite{HuangPreskillSoleimanifar2025} satisfy all the necessary efficiency and simulability properties for (a) the construction of a OWP from the OWSG obtained from the HPS assumptions, and (b) the construction of an explicit OWF from this OWP. We stress that while the prior construction of OWPs from OWSGs via classical shadows~\cite{khurana2024commitmentsquantumonewayness} allows one to obtain a OWP from the OWSG obtained from the Search HPS assumption, it is not clear if one can use this construction to then obtain a OWF from this OWP.

Given that our method for the construction of OWP variants from OWSGs and tailored state certification protocols provides a potential route for the instantiation of EV-OWPs (and therefore Quantumania), we discuss these techniques in more detail in Section~\ref{sss:contributions-towardsQCCC} below.

\subsubsection{Opportunities: Efficiently verifiable one-way puzzles via tailored state certification protocols}\label{sss:contributions-towardsQCCC}

As mentioned above, apart from allowing us to prove Theorem~\ref{thm:HPS-imply-owf-intro}, our technique for constructing OWP variants from OWSGs and state certification protocols, illustrated in Figure~\ref{fig:summary}, also provides a new toolbox for constructing \textit{efficiently-verifiable} OWPs, and therefore a potential route towards instantiating Quantumania. Central to this toolbox is the notion of a "measure first, ask later" state certification protocol for a set of states $\{|\phi_k\rangle\}$.

\begin{definition}[``Measure first, ask later'' state certification protocol for $\{ |\phi_k\rangle\}$]\label{def:measure-first-ask-later-intro}
Let $\{|\phi_k\rangle\,|\, k\in \mathbb{K}\}$ be a set of $n$-qubit states. Consider a QPT algorithm $\mathcal{M}(|\psi\rangle^{\otimes t})\rightarrow s\in\{0,1\}^{g(t,n)}$ and let $\mathcal{F}:\{0,1\}^*\times\{0,1\}^*\rightarrow \{\mathsf{Accept},\mathsf{Reject}\}$ be such that $\mathcal{F}(\tilde{k},\cdot) = \mathsf{reject}$ for all $\tilde{k}\notin \mathbb{K}$. We say that $(\mathcal{M},\mathcal{F})$ is a ``measure first, ask later'' state certification protocol for $\{|\phi_k\rangle\}$ from $t = t(n,\epsilon,\delta,|\mathbb{K}|)$ copies if, for all states $|\psi\rangle$ and $(\epsilon,\delta)\in (0,1)$:
\begin{equation}
\underset{s\gets \mathcal{M}(|\psi\rangle^{\otimes t})}{\mathrm{Pr}}\left[\forall \, k\in \mathbb{K} \, \begin{cases} \mathcal{F}(k,s)=\mathsf{Accept}\text{ if } |\psi\rangle = |\phi_k\rangle \\  \mathcal{F}(k,s)=\mathsf{Reject}\text{ if } |\langle \psi|\phi_k\rangle|^2 < 1-\epsilon \end{cases}\right] > 1-\delta.
\end{equation}
We say that $(\mathcal{M},\mathcal{F})$ is:
\begin{enumerate}
\item \textit{Copy efficient} if $t=\mathrm{poly}(n,\epsilon^{-1},\log \delta^{-1},\log |\mathbb{K}|)$ is sufficient.
\item \textit{Computationally efficient} if it is copy efficient and the function  $\mathcal{F}$ is computationally efficient. 
\item $\eta$-simulable if there exists a classical PPT algorithm $\mathcal{M}_C(k,1^n,1^t)\rightarrow s\in\{0,1\}^{g(t,n)}$ such that
\begin{equation}
d_\mathrm{TV}(\mathcal{M}_C(k,1^n,1^t),\mathcal{M}(|\phi_k\rangle^{\otimes t})) \leq \eta(|k|,n,t)
\end{equation}
for all $k\in\mathbb{K}$ and $n,t\in\mathbb{N}$, where again $\mathcal{M}_C(k,1^n,1^t)$ and $\mathcal{M}(|\phi_k\rangle^{\otimes t})$ are understood as distributions over $\{0,1\}^{g(t,n)}$.
\end{enumerate}
\end{definition}

Informally, $(\mathcal{M},\mathcal{F})$ is a ``measure first, ask later'' state certification protocol for the set of states $\{|\phi_k\rangle\,|\,k\in\mathbb{K}\rangle\}$ if, when $s\,\gets {\mathcal{M}}(\ket \psi^{\otimes t})$, the quantity $\mathcal{F}(k,s)$ allows one to decide whether $|\psi\rangle=|\phi_k\rangle$, for all $k\in \mathbb{K}$ and all states $|\psi\rangle$. We note that by exploiting existing classical shadow protocols for the set of observables $\{|\phi_k\rangle\langle \phi_k|\}$, one can immediately construct a \textit{copy efficient} ``measure first, ask later'' state certification protocol for \textit{any} set of states $\{|\phi_k\rangle\}$~\cite{Huang_2020}. However, we stress that this standard classical shadow based state certification protocol will typically be neither computationally efficient nor $\eta$-simulable for meaningful values of $\eta$. Indeed, the idea behind the definition above is to provide an abstraction of such "classical-shadow-type" state certification protocols, and to make clear the properties beyond copy-efficiency which, as we show below, will be useful for cryptographic constructions.

With this in hand, we can then define variants of \textit{certifiable} OWSGs in a natural way:

\begin{definition}[Certifiable one-way state generators (informal version of Definition~\ref{def:certifiable-microcrypt-primitives})] Given a one-way state generator $(\mathsf{KeyGen},\mathsf{StateGen},\mathsf{Ver})$ (defined formally in Definition~\ref{def:owsg}) with output states $\{|\phi_k\rangle\,|\,k\in \mathbb{K}\}$, and a ``measure first, ask later'' state certification protocol $(\mathcal{M},\mathcal{F})$,  we say that the tuple of one-way state generator and state certification protocol $((\mathsf{KeyGen},\mathsf{StateGen},\mathsf{Ver}),(\mathcal{M},\mathcal{F}))$ is:
\begin{enumerate}
\item A \textit{certifiable} one-way state generator if $(\mathcal{M},\mathcal{F})$ is copy efficient.
\item An \textit{$\eta$-simulable certifiable} one-way state generator if $(\mathcal{M},\mathcal{F})$ is certifiable and $\eta$-simulable.
\item An \textit{efficiently certifiable} one-way state generator if $(\mathcal{M},\mathcal{F})$ is computationally efficient. 
\item An \textit{$\eta$-simulable, efficiently certifiable} one-way state generator if $(\mathcal{M},\mathcal{F})$ is efficiently certifiable and $\eta$-simulable.
\end{enumerate}
\end{definition}
We note that one can easily give an analogous definition for certifiable PRSG (see Definition~\ref{def:certifiable-microcrypt-primitives}), and prove that standard output length certifiable PRSGs imply certifiable OWSGs (see Theorem~\ref{thm:cert-prsg-imply-cert-owsg}). Additionally, we stress again that \textit{any} OWSG is a certifiable OWSG when equipped with the ``measure first, ask later'' state certification protocol provided by standard classical shadows~\cite{Huang_2020}. However, this OWSG will typically fail to be either computationally efficient or $\eta$-simulable for meaningful values of $\eta$. With this established, we then show that certifiable OWSGs can be used to construct variants of OWPs. Specifically, we prove the following:

\begin{theorem}[One way puzzles from certifiable one-way state generators]\label{thm:owp-from-cowsg-intro} $\,$
\begin{enumerate}
\item Given a certifiable one-way state generator, one can construct a one-way puzzle (Theorem~\ref{thm:owp-from-certifiable-owsg}).
\item Given an efficiently certifiable one-way state generator, one can construct an efficiently verifiable one-way puzzle (Corollary~\ref{cor:ev-owp-from-eff-certifiable-owsg}).
\end{enumerate}
\end{theorem}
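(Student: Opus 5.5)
The plan is to follow the classical-shadow construction of OWPs from OWSGs of Khurana and Tomer, replacing the shadow procedure by the abstract pair $(\mathcal{M},\mathcal{F})$. Fix a certifiable OWSG $((\mathsf{KeyGen},\mathsf{StateGen},\mathsf{Ver}),(\mathcal{M},\mathcal{F}))$ with states $\{|\phi_k\rangle\}_{k\in\mathbb{K}}$, and pick $\epsilon=\epsilon(n)$ and $\delta=\delta(n)$ inverse-polynomial. We will fix them below. Let $t=t(n,\epsilon,\delta,|\mathbb{K}|)$ be the copy complexity. Since keys have polynomial length, $\log|\mathbb{K}|=\mathrm{poly}(n)$, and so copy efficiency gives $t=\mathrm{poly}(n)$.

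The puzzle is defined as follows. $\mathsf{Samp}(1^n)$ runs $k\gets\mathsf{KeyGen}(1^n)$, prepares $|\phi_k\rangle^{\otimes t}$ with $\mathsf{StateGen}$, runs $s\gets\mathcal{M}(|\phi_k\rangle^{\otimes t})$, and outputs the puzzle $s$ with key $k$. The verifier is $\mathsf{Ver}(k',s)=\mathsf{Accept}$ iff $\mathcal{F}(k',s)=\mathsf{Accept}$.

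\textbf{Correctness.} Apply the certification guarantee with $|\psi\rangle=|\phi_k\rangle$. With probability $>1-\delta$ we get $\mathcal{F}(k,s)=\mathsf{Accept}$, so honest pairs are accepted with probability at least $1-\delta$. This can be boosted, or $\delta$ taken negligible, since $t$ depends only on $\log\delta^{-1}$.

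\textbf{Security (the main step).} Suppose a QPT adversary $\mathcal{A}$, given $s$, outputs $k'$ with $\mathcal{F}(k',s)=\mathsf{Accept}$ with non-negligible probability $p$. We build an OWSG inverter $\mathcal{B}$. On input $|\phi_k\rangle^{\otimes t}$ (the OWSG adversary may receive polynomially many copies), $\mathcal{B}$ runs $\mathcal{M}$ to get $s$ and outputs $k'=\mathcal{A}(s)$. The view of $\mathcal{A}$ is then exactly the puzzle distribution.

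Consider the good event $G$ from the certification statement. It holds simultaneously for all $k'\in\mathbb{K}$ and has probability $>1-\delta$. On $G$, if $\mathcal{F}(k',s)=\mathsf{Accept}$, then $k'\in\mathbb{K}$ (since $\mathcal{F}$ rejects keys outside $\mathbb{K}$) and $|\langle\phi_k|\phi_{k'}\rangle|^2\ge 1-\epsilon$. The crucial point is that $G$ quantifies over all $k'$ before $\mathcal{A}$ chooses one. This is what the "measure first, ask later" property buys. So with probability at least $p-\delta$, $\mathcal{B}$ outputs $k'$ whose state has fidelity $\ge1-\epsilon$ with $|\phi_k\rangle$.

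The OWSG verification measures a fresh copy of $|\phi_k\rangle$ with $\{|\phi_{k'}\rangle\langle\phi_{k'}|,\,I-|\phi_{k'}\rangle\langle\phi_{k'}|\}$, which accepts with probability $|\langle\phi_k|\phi_{k'}\rangle|^2\ge 1-\epsilon$. Hence $\mathcal{B}$ wins with probability at least $(p-\delta)(1-\epsilon)$. Taking $\delta\le p/2$ for infinitely many $n$ is problematic because $p$ is unknown. Instead we choose $\delta=\mathrm{negl}$ (affordable because $t$ is polylogarithmic in $\delta^{-1}$) and $\epsilon=1/2$. The advantage is then $\ge p/2-\mathrm{negl}$, which is non-negligible and contradicts OWSG security.

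The subtle point I expect to be the main obstacle is handling mixed or general $\mathsf{StateGen}$ outputs and the precise definitional form of $\mathsf{Ver}$, if it is not a projective test. I would restrict to pure-state OWSGs, as in the definition of $\{|\phi_k\rangle\}$, where the above fidelity argument applies directly.

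\textbf{Efficient verifiability.} For part 2 nothing changes except that the puzzle verifier is $\mathcal{F}$ itself. Computational efficiency of $(\mathcal{M},\mathcal{F})$ makes $\mathcal{F}$ classically efficient, so the resulting OWP is efficiently verifiable.
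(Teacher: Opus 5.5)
Your proposal is correct and follows essentially the same route as the paper. The construction is the same: sample $k$, measure $|\phi_k\rangle^{\otimes t}$ with $\mathcal{M}$, and verify with $\mathcal{F}$. Correctness comes from the certification guarantee at $|\psi\rangle=|\phi_k\rangle$. Security uses, as in the paper, the fact that the good event holds simultaneously for all $k'$, so acceptance forces fidelity $\ge 1-\epsilon$ except with probability $\delta$, combined with OWSG security applied to $\mathcal{A}\circ\mathcal{M}$. The paper writes this as a direct bound, splitting into events $A$ and $B$ and applying Markov's inequality with $\epsilon=1/8$, $\delta=2^{-\lambda}$, rather than as a contrapositive reduction; the two arguments are equivalent.

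One small fix: take $\delta$ negligible in the OWSG security parameter $\lambda$ rather than in $n$, since $n(\lambda)$ is only polynomially bounded. Copy efficiency still gives $t=\mathrm{poly}(\lambda)$, because $\log|\mathbb{K}_\lambda|=\mathrm{poly}(\lambda)$.
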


The constructions used to prove Theorem~\ref{thm:owp-from-cowsg-intro} generalize and abstract Khurana and Tomer's construction of OWPs via OWSGs and classical shadows~\cite{khurana2024commitmentsquantumonewayness}. Indeed, our contribution is to show that this prior construction works for \textit{any} copy-efficient ``measure first, ask later'' state certification protocol{\textemdash}not just the one derived from classical shadows{\textemdash}and that, if this protocol is in fact \textit{computationally efficient}, then the resulting OWP is efficiently verifiable. This is interesting because it allows us to obtain \textit{variants} of OWPs by using state certification protocols \textit{tailored} to the set of states defining the OWSG. Indeed, our hope is that this can be used as a tool for translating progress in state certification into progress on the proposal of explicit candidates for \textit{efficiently verifiable} OWPs -- i.e.,  for the instantiation of Quantumania. As discussed in Section~\ref{ss:related-work} below, there has recently been significant progress in the development of state certification protocols~\cite{HuangPreskillSoleimanifar2025,gupta2025singlequbitmeasurementssufficecertify,Li_2026}, and Theorem~\ref{thm:owp-from-cowsg-intro} shows that if such progress yields computationally efficient ``measure first, ask later'' state certification protocols for the output states of existing candidate OWSGs (or PRSGs), then one in fact obtains a candidate \textit{efficiently verifiable} OWP.

There is however an important caveat! As we show in Theorem~\ref{thm:owf-from-sim-cert-owsg-intro} below, if the certifiable OWSG has a classical key generation algorithm, and is also $1/3$-simulable then the resulting OWP can be used to construct a (quantum-secure) OWF. As such, the OWP is by definition not in Microcrypt! 
\begin{theorem}[OWFs from $1/3$-simulable certifiable OWSGs (informal version of Theorem~\ref{thm:owf-via-sim-certOWSG})]\label{thm:owf-from-sim-cert-owsg-intro}
Given a $\frac{1}{3}$-simulable certifiable one-way state generator, with a classical key generation algorithm, one can construct a quantum-secure one-way function.
\end{theorem}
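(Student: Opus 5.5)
Via Theorem~\ref{thm:owp-from-cowsg-intro}, the certifiable OWSG $((\mathsf{KeyGen},\mathsf{StateGen},\mathsf{Ver}),(\mathcal{M},\mathcal{F}))$ gives a one-way puzzle $(\mathsf{Samp},\mathsf{Ver}_{\mathrm{OWP}})$ in the Khurana--Tomer style. $\mathsf{Samp}(1^\lambda)$ samples $k\gets\mathsf{KeyGen}(1^\lambda)$, prepares $|\phi_k\rangle^{\otimes t}$, runs $s\gets\mathcal{M}(|\phi_k\rangle^{\otimes t})$, and outputs the puzzle $s$ with key $k$. Verification is $\mathsf{Ver}_{\mathrm{OWP}}(k,s)=\mathcal{F}(k,s)$, possibly unbounded. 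The plan is to show this OWP can be made \emph{classical}, i.e.\ its sampler replaced by a PPT algorithm, while keeping one-wayness up to a constant loss. We then appeal to the known fact that classical OWPs (or classical samplers for hard distributional search problems) imply quantum-secure (distributional) OWFs~\cite{khurana2024commitmentsquantumonewayness,khurana2025founding}, and finally to the standard equivalence of distributional OWFs and OWFs.

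\textbf{Step 1 (classical sampler).} Define $\mathsf{Samp}_C(1^\lambda)$ as follows. Using the classical $\mathsf{KeyGen}$ with random coins $r$, sample $k$. Then run the simulator $s\gets\mathcal{M}_C(k,1^n,1^t)$ and output $(k,s)$. This is a PPT map $f(r,r')=s$ from coins to puzzles. By $\eta$-simulability with $\eta\le 1/3$, for every fixed $k$ the distributions of $s$ under $\mathsf{Samp}$ and $\mathsf{Samp}_C$ are $1/3$-close in TV. Averaging over $k$, the joint distributions of $(k,s)$ are $1/3$-close as well.

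\textbf{Step 2 (hardness transfers).} We argue that the candidate $f$ is a distributional/weak OWF. Suppose a QPT inverter $\mathcal{A}$, given $s\gets\mathsf{Samp}_C$, finds $k'$ with $\mathcal{F}(k',s)=\mathsf{Accept}$ with probability $p$. Then on real puzzles it succeeds with probability at least $p-1/3$. Correctness of the certification protocol bounds the ways a wrong $k'$ can pass. Indeed, with probability $1-\delta$ over $s\gets\mathcal{M}(|\phi_k\rangle^{\otimes t})$, acceptance of $k'$ implies $|\langle\phi_k|\phi_{k'}\rangle|^2\ge1-\epsilon$. For small $\epsilon$, this means $\mathsf{Ver}(k',|\phi_k\rangle)$ accepts with high probability, so $\mathcal{A}$ breaks the OWSG. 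We must keep $\mathcal{A}$ efficient: it receives $|\phi_k\rangle^{\otimes (t+m)}$, runs the \emph{real} $\mathcal{M}$ on $t$ copies, and feeds the result to $\mathcal{A}$. It then checks $k'$ with $\mathsf{Ver}$ on the remaining copies, never needing $\mathcal{F}$. The inversion advantage of $\mathcal{A}$ on the real distribution is thus at most $\mathrm{negl}+\delta$. Hence on the classical distribution, no QPT adversary succeeds with probability above $1/3+\mathrm{negl}+\delta$.

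\textbf{Step 3 (amplification to OWF) and the main obstacle.} This yields only a \emph{weak} classical OWP: the success gap is a constant, arising from the $1/3$ TV loss. The obstacle is that puzzle verification is via the inefficient $\mathcal{F}$, so standard OWF amplification does not apply directly. The plan is to avoid relying on $\mathcal{F}$ and instead show that $f(r,r')=\mathsf{Samp}_C$'s puzzle output is a \emph{distributional} OWF. Suppose $f$ were not distributionally one-way. Then by Impagliazzo--Luby there is a QPT inverter outputting near-uniform preimages $(r,r')$ of $s$, and in particular a key $k'=\mathsf{KeyGen}(r)$ distributed close to the posterior of $k$ given $s$ under $\mathsf{Samp}_C$. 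Using the $1/3$-closeness, this posterior sampler applied to real measurement outcomes returns, with constant probability, a key $k'$ that is consistent and therefore accepted by $\mathcal{F}$. By Step~2 this contradicts OWSG security. The delicate part is the constant bookkeeping: the inversion guarantee holds for $s$ drawn from $\mathsf{Samp}_C$, and one must show it still yields success bounded away from $1/3$ on real samples. I would handle this by parallel repetition, i.e.\ using $\ell$ independent keys and copies. Note, however, that TV distance does not shrink under repetition, so I would instead set the threshold $1/3$ against the correctness margin $1-\delta$ of the certification protocol. Once $f$ is shown to be a distributional OWF, the standard reduction gives a quantum-secure OWF.
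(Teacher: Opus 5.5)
Your route is the paper's: build $\mathsf{SampPuzz}_C$ from the classical $\mathsf{KeyGen}$ and $\mathcal{M}_C$, and note that the joint TV distance is at most $1/3$ because $k$ is sampled identically. Then run the Khurana--Tomer argument to get a quantum-secure distributional OWF, and finish with Kashefi--Kerenidis plus weak-to-strong amplification. The paper handles the middle step by invoking Corollary~\ref{cor:classical-sampling-implies-qsecDOWF}, proved in Appendix~\ref{app:owp-to-doOWF}. You re-derive it instead, and you leave the step you call ``delicate'' unresolved. You also give it the wrong target, and your proposed fix (repetition) would not work, as you note yourself.

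No repetition is needed, and you do not need success bounded away from $1/3$: OWP security is violated by any non-negligible success.

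\begin{itemize}
\item Suppose $f$ is not a quantum-secure distributional OWF. Negate Definition~\ref{def:qsecDOWF} with $p(\lambda)=\lambda^2$ and measure the adversary's output. Fuchs--van de Graaf and Jensen then give a sampler $\tilde{\mathcal{A}}$ with $d_{\mathrm{TV}}\big((r,f(r)),(\tilde{\mathcal{A}}(f(r)),f(r))\big)<1/\lambda$ for infinitely many $\lambda$. This comes from Fuchs--van de Graaf, not Impagliazzo--Luby; Impagliazzo--Luby is only used for the final D-OWF$\to$OWF step.
\item Set $\mathcal{B}(s)=k_\lambda(\tilde{\mathcal{A}}(s))$ and take $(k,s)\gets\mathsf{SampPuzz}(1^\lambda)$. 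Chain three hops: real to classical pair (at most $1/3$), the inversion error (below $1/\lambda$), and classical to real puzzle marginal under $\mathcal{B}$ (at most $1/3$). This gives $d_{\mathrm{TV}}\big((k,s),(\mathcal{B}(s),s)\big)\le 2/3+1/\lambda$.
\item By correctness, $\Pr[\mathcal{F}(\mathcal{B}(s),s)=\mathsf{Accept}]\ge 1/3-1/\lambda-2^{-\lambda}$. This contradicts the security in Theorem~\ref{thm:owp-from-certifiable-owsg}.
\item Equivalently, on the classical distribution $\mathcal{B}$ succeeds with probability at least $2/3-1/\lambda-\mathsf{negl}(\lambda)$. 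That exceeds your Step~2 bound of $1/3+\mathsf{negl}(\lambda)$.
\end{itemize}

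With this bookkeeping your proof is complete. The extra copies and $\mathsf{Ver}$-check in Step~2 are unnecessary: the OWSG adversary is just $\mathcal{A}\circ\mathcal{M}$, followed by Markov's inequality as in the paper.
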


We prove Theorem~\ref{thm:owf-from-sim-cert-owsg-intro} by showing that $1/3$-simulability of the certifiable OWSG implies the existence of an efficient approximate \textit{classical} algorithm for simulating the \textit{quantum} key/puzzle sampling algorithm of the OWP resulting from the construction used to prove Theorem~\ref{thm:owp-from-cowsg-intro}. Given this, it follows from existing results that
\begin{enumerate}
\item One can construct a quantum-secure \textit{distributional} OWF from the OWP~\cite{khurana2025founding}.
\item One can then construct a quantum-secure \textit{weak} OWF from the distributional OWF~\cite{impagliazzo1989one,kashefi2007statisticalzeroknowledgequantum}, from which one can construct a OWF~\cite{goldreich2001foundations,Radian_2019}.
\end{enumerate}

Taken together, we see that Theorem~\ref{thm:owp-from-cowsg-intro} and Theorem~\ref{thm:owf-from-sim-cert-owsg-intro} provide a new opportunity for the construction of EV-OWPs which are genuinely in Microcrypt, via any OWSG which:
\begin{enumerate}
\item Is itself not constructed from OWFs (i.e.,  relies on a purely quantum assumption which is plausibly independent of OWFs).
\item Admits a computationally efficient ``measure first, ask later'' state certification protocol, which is \textit{not} also $1/3$-simulable.
\end{enumerate}
In Section~\ref{s:towards-concrete-instantiations} we discuss a variety of candidate OWSGs and the extent to which they may or may not satisfy the criteria above. However, as already mentioned in Section~\ref{sss:contributions-HPSimplyowf}, we also stress that when a OWSG fails to provide a OWP that is genuinely in Microcrypt by virtue of being simulable, this implies that the quantum assumptions used for the OWSG provide new assumptions for \textit{classical} cryptography, by virtue of the fact that the OWSG can be used to construct OWFs. As mentioned before, the obstacle for instantiating Microcrypt becomes an opportunity for instantiating Minicrypt from novel inherently quantum assumptions.

\subsubsection{Towards simpler one-way functions from the Search HPS assumption}\label{ss:towards-simpler}

Given Theorem~\ref{thm:owf-from-sim-cert-owsg-intro},  it is clear that to prove Theorem~\ref{thm:HPS-imply-owf-intro} it is sufficient to provide a "measure first, ask later" state certification protocol for Hamiltonian phase states, which is both copy efficient \textit{and} $1/3$-simulable. However, as we discussed in the previous section, this allows one to directly construct a distributional OWF, which can be compiled into a weak OWF, which can then be compiled into a OWF, with each step adding complexity to the construction of the OWF.  With the goal of providing a simpler concrete OWF construction from the Search HPS assumption, we show that given an $\eta$-simulable \textit{efficiently} certifiable OWSG, for an $\eta$ which is negligible in a specific sense, then one can directly construct a simple quantum-secure OWF, without going via distributional OWFs. In other words, we leverage both the increased accuracy of the measurement simulation algorithm, and the computational efficiency of the post-processing functions, to obtain a simpler OWF. 

\begin{theorem}[Simple OWFs from simulable and efficiently certifiable OWSGs (informal version of Theorem~\ref{thm:owf-via-sim-effcertOWSG})]\label{thm:owf-from-sim-cert-owsg-direct-intro}
Given an $\eta$-simulable efficiently certifiable one-way state generator with classical key generation algorithm, one can directly construct a quantum-secure one-way function, without going via a distributional one-way function, whenever $\eta$ is such that $\eta(|k|,n,t)$ is negligible with respect to $n$, for all $|k|$ and $t$ that are at most polynomial in $n$.
\end{theorem}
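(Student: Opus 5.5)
The natural candidate is the function
\[
f(r_K, r_M) \;=\; \bigl(\mathcal{M}_C(k,1^n,1^t; r_M)\bigr), \qquad k = \mathsf{KeyGen}(1^n; r_K),
\]
where $t=\mathrm{poly}(n)$ is the number of copies required by the certification protocol at parameters $\epsilon,\delta$ chosen to match the OWSG verification threshold. In other words, the simulated measurement record serves as the "puzzle". This $f$ is deterministic and classically efficient, because $\mathsf{KeyGen}$ is classical and $\mathcal{M}_C$ is PPT.

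To prove one-wayness, suppose a QPT adversary $\mathcal{A}$ inverts $f$ with non-negligible probability $p$. The reduction would break the OWSG as follows. Given $|\phi_k\rangle^{\otimes t}$ (plus extra copies), apply $\mathcal{M}$ to obtain $s$. Run $\mathcal{A}(s)$ to get $(r_K',r_M')$, and output $k'=\mathsf{KeyGen}(1^n;r_K')$.

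The key steps, in order, are:

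\begin{enumerate}
\item \textbf{Hybrid argument.} Because $d_\mathrm{TV}(\mathcal{M}(|\phi_k\rangle^{\otimes t}),\mathcal{M}_C(k,1^n,1^t))\le\eta$ is negligible for every $k$, the joint distribution of $(k,s)$ in the reduction is negligibly close to that of $(k,f(r_K,r_M))$. Hence $\mathcal{A}$ still inverts with probability at least $p-\mathrm{negl}(n)$.

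\item \textbf{Efficient check of a valid preimage.} A successful inversion gives a key $k'$ whose simulated record $\mathcal{M}_C(k',\cdot;r_M')$ equals $s$. The reduction can compute $\mathcal{F}(k',s)$ efficiently, since the protocol is efficiently certifiable. The question is then whether acceptance implies that $k'$ passes $\mathsf{Ver}$.

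\item \textbf{Main obstacle: from "$s$ is a possible simulated output of $k'$" to "$\mathcal{F}(k',s)=\mathsf{Accept}$".} Knowing only that $s$ lies in the support of $\mathcal{M}_C(k',\cdot)$ is not enough. I would handle this in two parts.
\begin{itemize}
\item \emph{Preimages are typical outputs.} For the honest key $k'$, the completeness of $(\mathcal{M},\mathcal{F})$ together with $\eta$-closeness gives $\Pr_{r_M}[\mathcal{F}(k',\mathcal{M}_C(k';r_M))=\mathsf{Reject}]\le\delta+\eta$. An averaging argument then shows that an inverter can only rarely produce preimages landing in the rejecting set. If the losses do not fit, I would first try modifying $f$ to output $\bot$ whenever $\mathcal{F}(k,s)=\mathsf{Reject}$, which is allowed because $\mathcal{F}$ is efficient.
\item \emph{Acceptance implies closeness.} Soundness says that, with probability $1-\delta$ over $s\gets\mathcal{M}(|\phi_k\rangle^{\otimes t})$, $\mathcal{F}$ simultaneously rejects every $k'$ with $|\langle\phi_k|\phi_{k'}\rangle|^2<1-\epsilon$. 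Here the uniformity over all of $\mathbb{K}$ is essential: $k'$ is chosen adaptively after seeing $s$. So whenever $\mathcal{F}(k',s)$ accepts, $|\phi_{k'}\rangle$ is $\epsilon$-close to $|\phi_k\rangle$.
\end{itemize}

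\item \textbf{Conclusion.} Running $\mathsf{Ver}(k',|\phi_k\rangle)$ on a fresh copy then accepts with probability at least about $1-\epsilon$. Choosing $\epsilon,\delta$ as small inverse polynomials keeps the success probability non-negligible, contradicting OWSG security.
\end{enumerate}

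Throughout, I would require $\eta$ to be negligible uniformly for $|k|,t\le\mathrm{poly}(n)$, exactly as stated in the theorem. This uniformity is what lets the single hybrid step in item 1 cost only a negligible amount, with no amplification through distributional OWFs.
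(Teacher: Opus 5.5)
Your main route has a genuine gap at the ``averaging argument'' in step~3, and it cannot be repaired for the function you chose. The set $\mathrm{Bad}=\{r:\mathcal{F}(k(r),s(r))=\mathsf{Reject}\}$ does have mass at most $\delta+\eta$. But the inverter chooses \emph{which} preimage to return, and the puzzles $s$ that have at least one preimage in $\mathrm{Bad}$ can carry essentially all of the mass. Concretely, take any valid simulator and modify it so that, whenever the first $n$ bits of $r_M$ are all zero (probability $2^{-n}$), it outputs the remaining bits of $r_M$ verbatim as $s$. It is still PPT and $(\eta+2^{-n})$-simulable, so every hypothesis of the theorem still holds. Yet every $s$ now has the preimage $(r_K,0^n\|s)$ for arbitrary $r_K$. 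So your $f$ is inverted with probability $1$ by an algorithm that learns nothing about $k$ and returns a key $k'=\mathsf{KeyGen}(r_K)$ unrelated to $k$. That key is rejected on $s$, and soundness says nothing about rejected keys. This is exactly the problematic adversary the paper describes before Theorem~\ref{thm:owf-from-classical-evowp}; the unmodified $f(r)=s(r)$ is in general only a \emph{distributional} OWF. Note also that your main route never genuinely uses the efficiency of $\mathcal{F}$: the reduction in step~2 has no use for computing $\mathcal{F}(k',s)$. That is a warning sign, since efficiency of $\mathcal{F}$ is exactly the hypothesis separating this theorem from Theorem~\ref{thm:owf-via-sim-certOWSG}.

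The modification you list only as a fallback is therefore not optional; it is the paper's construction. The paper uses $F_\lambda(r)=(0,s_\lambda(r))$ for $r\in\mathrm{Good}_\lambda$ and $(1,r)$ for $r\in\mathrm{Bad}_\lambda$, and your $\bot$ works equally well. Efficient certifiability is needed precisely to make this function computable. An inverter handed $(0,s)$ must return a seed in $\mathrm{Good}_\lambda$, so $\mathcal{F}(k',s)=\mathsf{Accept}$ holds by construction. With that in place, your hybrid step, the ``acceptance implies closeness'' bullet and your final step go through. They coincide with the paper's chain (Lemma~\ref{lem:owp-robustness}, Theorem~\ref{thm:owf-from-classical-evowp}, and the security argument of Theorem~\ref{thm:owp-from-certifiable-owsg}), just collapsed into a single reduction to the OWSG.

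One further correction: $\delta$ must be negligible, not a small inverse polynomial. The paper uses $\delta=2^{-\lambda}$, which is affordable because $t$ is polynomial in $\log\delta^{-1}$. Otherwise $\mathrm{Bad}$ may have inverse-polynomial mass, and the modified function is trivially invertible there (output $r$, or any bad seed in the $\bot$ version). Your bound $p-O(\delta+\eta)$ is then vacuous for inverters with $p\lesssim\delta$. A constant $\epsilon$ such as $1/8$ is fine.
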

With the above in mind, to prove Theorem~\ref{thm:HPS-imply-owf-intro} we don't simply prove the sufficient statement that there exists a copy efficient \textit{and} $1/3$-simulable state certification protocol for Hamiltonian phase states. Instead, we prove the stronger statement that the existing state certification protocol from Ref.~\cite{HuangPreskillSoleimanifar2025} is in fact an $\eta$-simulable computationally efficient state certification protocol, for negligible $\eta$ in the appropriate sense, which allows us to invoke Theorem~\ref{thm:owf-from-sim-cert-owsg-direct-intro} to construct a OWF from the Search HPS assumption.

\subsection{Related work}\label{ss:related-work}

Below we describe a variety of existing and active research directions, which intersect with the goals and contributions of this work.

\textbf{Microcrypt:} Our work  contributes to the rapidly growing literature on understanding and characterizing Microcrypt. As this area has become too extensive to survey completely here, we refer the reader to the \textit{Microcrypt Zoo}~~\cite{microcryptzoo} for a broad overview. Nevertheless, we highlight the following points to place our work in context:
\begin{enumerate}
\item Over the past years, a wide variety of inherently quantum cryptographic primitives -- such as EFI pairs~\cite{BCQ2023}, variants of PRSGs~\cite{PRSGdefinition, brakerski2020scalablepseudorandomquantumstates}, pseudorandom unitaries~\cite{ma2025constructrandomunitaries}, OWSGs~\cite{Morimae_2022}, and (efficiently verifiable) OWPs~\cite{khurana2024commitmentsquantumonewayness, QCCCCrypto,PKE-deletion} amongst others -- have been proposed and studied. Our work introduces two new primitives, namely certifiable and efficiently certifiable OWSGs, and shows their utility for the construction of \textit{variants} of OWPs via state certification protocols tailored to the underlying OWSG. Apart from allowing us to prove Theorem~\ref{thm:HPS-imply-owf-intro}, this is particularly interesting as it provides a new route for the construction of \textit{efficiently verifiable} OWPs, the minimal primitive of Quantumania~\cite{goldin2024countcrypt}. Previous work has shown that EV-OWPs can be constructed from quantum pseudorandom generators~\cite{,QCCCCrypto} (which can be constructed from \textit{logarithmic output length} PRSGs~\cite{ananth2023pseudorandomstringspseudorandomquantum}, which in turn can be constructed from quantum computable pseudorandom functions~\cite{Kretschmer_2025}) as well as from a variety of (non-interactive) QCCC primitives~\cite{QCCCCrypto}, and our work adds a new method to this toolbox.

\item In order to actually implement any cryptographic protocol, one requires a \textit{concrete instantiation} of the primitive on which the protocol is built. To this end, there has been a wide variety of work aimed at providing candidate concrete instantiations of Microcrypt primitives, from a wide variety of alternative assumptions~\cite{metger2024simpleconstructionslineardepthtdesigns, brakerski2019pseudorandomquantumstates, Chen_2024, bostanci2025efficientquantumpseudorandomnesshamiltonian,fefferman2025hardnesslearningquantumcircuits}. However, we note that many of these constructions assume the existence of (quantum-secure) OWFs as a starting point! As Microcrypt is particularly interesting due to its potential existence even if OWFs do not exist, one ideally wants candidate constructions from assumptions which are potentially independent of OWFs. Prior to our work, the Hamiltonian phase state assumptions~\cite{bostanci2025efficientquantumpseudorandomnesshamiltonian} were precisely such assumptions, and one of our main contributions is to show that these assumptions are in fact \textit{not} independent of the existence of OWFs, as originally hoped.
\end{enumerate}

\textbf{Cryptography from (quantum) hardness of learning:} There is a long line of work on both learning theoretic characterizations of cryptographic primitives, as well as concrete instantiations of cryptographic protocols and primitives from \textit{hardness of learning} assumptions -- such as learning parities with noise, and learning with errors~\cite{regevOnLattices2009}.  Inspired by this, and motivated by the growing number of inherently quantum cryptographic protocols and primitives, recent work has attempted to understand the extent to which inherently \textit{quantum} learning theoretic assumptions can be used to both characterize and instantiate (quantum) cryptography. Indeed, the Hamiltonian phase state assumptions~\cite{bostanci2025efficientquantumpseudorandomnesshamiltonian}, are precisely an assumption on the hardness of learning Hamiltonian phase states (or equivalently, random IQP circuits) in a specific sense, and one of our primary contributions is to show that while these assumptions \textit{cannot} be used to instantiate genuine Microcrypt primitives, they are sufficient for the construction of OWFs, and therefore provide a means for the concrete instantiation of Minicrypt (at least).

However, a wide variety of other quantum learning theoretic assumptions have also been recently proposed and explored in the context of quantum cryptography. Similar in spirit to the HPS assumptions are the \textit{computational no-learning} and \textit{computational no-cloning} assumptions introduced in Ref.~\cite{fefferman2025hardnesslearningquantumcircuits}, which posit the hardness of learning and cloning the output states of sufficiently deep \textit{brickwork} random quantum circuits from copies of the output state of the circuit. Specifically, the authors of Ref.~\cite{fefferman2025hardnesslearningquantumcircuits} have shown that this assumption is sufficient for the construction of OWSGs, quantum commitments and digital signatures. We note that these assumptions differ from the HPS assumptions central to this work in both the families of circuits considered, and the details of the learning task which is assumed to be hard. While not strictly learning theoretic, we also mention that Ref.~\cite{khurana2025founding} has also explored the cryptographic potential of random quantum circuits, by showing that the standard ``quantum advantage assumptions'' -- namely, the $\#\mathsf{P}$-hardness of estimating the output probabilities of a random quantum circuit from its classical description~\cite{Hangleiter_2023} -- is sufficient for the construction of OWPs. 

Moving away from random quantum \textit{circuits}, Refs.~\cite{hiroka2024computationalcomplexitylearningefficiently,cojocaru2026equivalenceaveragecasehardnesslearning} have given characterizations of both EFI pairs and OWSGs in terms of the average-case complexity of learning (efficiently generatable) random quantum \textit{states}. Additionally, Ref.~\cite{hiroka2025hardnessquantumdistributionlearning} has provided a characterization of OWP in terms of the average-case hardness of proper quantum \textit{distribution} learning. Ref.~\cite{Arapinis_2021} has also studied the possibility of building quantum cryptography from \textit{hardware} assumptions via quantum Physically Unclonable Functions. Finally, we mention the recent series of works~\cite{YihuiLSN1,YihuiLSN2,KhesinLSN3} that have proposed and studied the \textit{learning stabilizers with noise} assumption -- which at a high level posits the average-case hardness of decoding random quantum stabilizer codes and generalizes the well studied learning parities with noise assumption -- and shown that this assumption is sufficient for instantiating Cryptomania -- i.e., this inherently quantum assumption is useful for \textit{classical} cryptography. Indeed, our proof that the HPS assumptions imply OWFs, place the HPS assumptions on a similar footing to that of the learning stabilizers with noise assumptions.

\textbf{State certification:} State certification protocols are central to the results and contributions of this work. Indeed, our hope is that by using the notion of certifiable OWSGs as a tool, the community can more easily translate progress on quantum state certification into concrete instantiations of Microcrypt primitives. To aid with this, its helpful to put (significant) recent progress on state certification ~\cite{HuangPreskillSoleimanifar2025,gupta2025singlequbitmeasurementssufficecertify,coladangelo2026powerbasesrobustcopyoptimal,coladangelo2026robustquantumstatecertification} into context. In particular, the following is a list of criteria with respect to which different state certification protocols are often compared and contrasted (adapted and extended from the comparitive table in Figure 1 of Ref.~\cite{coladangelo2026robustquantumstatecertification}), together with a brief discussion of the relevance of a given criterion for our applications here:

\begin{enumerate}
\item Set of states: The set of states for which the state certification protocol is guaranteed to work for. In our setting, we require the protocol to work for the output states of the candidate OWSG we would like to use as input to our constructions.
\item Target dependence of measurements: Whether or not the protocol only requires \textit{target-independent} measurements and target-state dependent classical post-processing, as opposed to target-dependent measurement protocols. In the latter case, one can also distinguish between adaptive and non-adaptive measurement strategies. Borrowing from Ref.~\cite{elben2023randomized} we have called the former protocols "measure first, ask later".  Here, we require "measure first, ask later" state certification protocols.
\item Copy complexity: The number of measurements required in the worst case. We only require polynomial copy complexity.
\item Measurement complexity: The complexity of the largest measurement required. Again, we only require efficient measurements.
\item Oracle model: The type of oracle access to the target state which is assumed. For example, many protocols assume the ability to query specific amplitudes of the target state in the computational and/or Hadamard basis. For the construction of \textit{certifiable} OWSGs (and therefore OWP), the oracle model is not relevant to us. However, for the construction of \textit{efficiently certifiable} OWSGs (and therefore \textit{efficiently verifiable} OWPs) we require that the oracle can be \textit{efficiently simulated} classically, for all target states of interest.
\item Computational complexity: Combined time complexity of the entire protocol, including any classical post-processing of measurement outcomes, assuming oracle access to the target state. As above, for the construction of certifiable OWSGs this is not relevant for us. However, for the construction of efficiently certifiable OWSGs, we require protocols which are computationally efficient \textit{including} the simulation of the oracle model. 
\item Robustness: The extent to which the state certification protocol can be made \textit{tolerant}. For our purposes even \textit{non-tolerant} state certification protocols suffice.
\end{enumerate}
With the above in mind, we note that the paradigmatic "measure first, ask later" state certification protocol is that of classical shadows and its variants~\cite{Huang_2020,HelsenWalter2023,GrewalIngram2024,KohGrewal2022,ZhaoRubinMiyake2021,WanHugginsLeeBabbush2023, BertoniEtAl2024ShallowShadows,Conrad_2026,Becker_2024}. Indeed, global Clifford shadows provide a state certification protocol which works for \textit{all} target states, and requires only polynomially many efficient measurements~\cite{Huang_2020} -- which is precisely why Khurana and Tomer are able to use such classical shadows to construct OWPs from \textit{any} OWSG~\cite{khurana2024commitmentsquantumonewayness}. Unfortunately however, global Clifford shadows are \textit{not} computationally efficient for arbitrary target states, and therefore cannot be used out of the box to obtain \textit{efficiently verifiable} OWPs from arbitrary OWSGs.  While variants of classical shadows can be made computationally efficient for some sets of structured states (such as stabilizer states~\cite{Huang_2020}, fermionic Gaussian states~\cite{ZhaoRubinMiyake2021,WanHugginsLeeBabbush2023} and bosonic Gaussian states~\cite{Becker_2024}) it is not clear whether such states can be used for the construction of OWSGs.

With the view to identifying state certification protocols which satisfy \textit{all} the criteria for the construction of an \textit{efficiently certifiable} OWSG (especially for the HPS derived OWSG we are particularly concerned with) we note that a significant amount of recent work ~\cite{HuangPreskillSoleimanifar2025,gupta2025singlequbitmeasurementssufficecertify,coladangelo2026powerbasesrobustcopyoptimal,coladangelo2026robustquantumstatecertification} has gone into developing novel state certification protocols, which try to simultaneously optimize as many as the above criteria as possible (we refer to Ref.~\cite{coladangelo2026robustquantumstatecertification} for a detailed comparison). These protocols are also computational efficient for any set of target states for which the assumed oracle model can be efficiently simulated. As a result, at least for the OWSG we are primarily concerned with here, which has Hamiltonian phase states as output states, the only protocol which is both "measure first, ask later" \textit{and} admits computationally efficient post-processing is that of Ref.~\cite{HuangPreskillSoleimanifar2025} (as a result of the fact that the oracle which is assumed can be simulated efficiently for phase states). While the measurement complexity, copy complexity and robustness of this protocol is not as good as the more recent state certification protocols developed in~\cite{,coladangelo2026powerbasesrobustcopyoptimal,coladangelo2026robustquantumstatecertification} this is not a concern for us, given that we do not require any robustness, and polynomial copy and measurement complexity suffices. Finally, we note that a state certification protocol which satisifes all our desired criterion for phase states is also developed implicitly in Ref.~\cite{park2026samplehardwareefficientfidelityestimation}, however for simplicity of presentation we focus on the protocol from Ref.~\cite{HuangPreskillSoleimanifar2025}.

\subsection{Discussion and open questions}\label{ss:discussion-questions}

We highlight the following open questions and directions arising from our contributions here:

\textbf{Concrete instantiations of efficiently certifiable OWSGs:} One of the primary contributions of this work is to provide a concrete set of sufficient conditions for a state ensemble, in terms of its learnability and certifiability, for the construction of efficiently certifiable OWSGs, and therefore for efficiently verifiable OWPs. Given this, perhaps the most natural open question is whether one can identify explicit ensembles of quantum states which are (a) pseudorandom under a plausible conjecture, and (b) admit a computationally efficient ``measure first, ask later'' state certification protocol which is \textit{not} also simulable. Of course, we are particularly interested in state ensembles which are plausibly pseudorandom \textit{even if} OWFs do not exist -- i.e.,  state ensembles which are not constructed assuming the existence of a OWF, and whose pseudorandomness would not imply OWFs. Before this work, Hamiltonian phase states were precisely such a candidate ensemble, and our hope is that by understanding precisely the requirements in terms of certifiably, new candidate ensembles for the construction of EV-OWPs can be identified. We discuss a variety of existing candidates and their shortcomings in Section~\ref{s:towards-concrete-instantiations}.

\textbf{QCCC cryptography via efficiently certifiable OWSGs:} Quantumania is particularly interesting because it contains a wide variety of QCCC cryptographic protocols and primitives~\cite{QCCCCrypto,Kretschmer_2025}. With this in mind, are there QCCC cryptographic protocols which can be constructed directly from efficiently certifiable OWSGs? Could one construct interactive QCCC bit commitments~\cite{ananth2023pseudorandomstringspseudorandomquantum}, or QCCC public-key encryption, directly from efficiently certifiable OWSGs? Said simply, can we show that the primitives that we introduce in this work are \textit{useful}?

\textbf{Separations and implications between Quantumania primitives:} Refs.~\cite{Behera_2025,owsgevowpseperations} proved the existence of an oracle world in which OWPs exist, but OWSGs do not exist, showing that unlike classical Minicrypt, the world of Countcrypt \textit{cannot} be collapsed to a single minimal primitive~\cite{QCCCCrypto}. Does Quantumania also have "subworlds"? More specifically, can one prove an analogous black box separation between \textit{efficiently certifiable} OWSGs and \textit{efficiently verifiable} OWPs? More generally, how are efficiently certifiable OWSGs related to other Quantumania primitives such as quantum computable OWFs~\cite{Kretschmer_2025}, logarithmic depth output PRSGs, and quantum pseudorandom generators~\cite{ananth2023pseudorandomstringspseudorandomquantum}? 

\textbf{Refinements of certifiable primitives:} One can straightforwardly adapt our definition of a ``measure first, ask later'' state certification protocol (Definition~\ref{def:measure-first-ask-later-intro}) to define constrained notions such as a single-copy or non-adaptive ``measure first, ask later'' state certification protocols. This subsequently allows us to define single-copy and non-adaptive versions of certifiable PRSGs and OWSGs. Intuitively, since single-copy and non-adaptive state certification protocols for a given set of states are \textit{harder} to construct than their multi-copy and adaptive counterparts, it may be the case that single-copy and non-adaptive versions of the efficiently certifiable primitives we have introduced here are \textit{more powerful} than the unconstrained versions which allow multi-copy and adaptive measurement protocols for certification. Said another way, it is by now well established that in the context of learning and testing, multi-copy and adaptive measurements provide a powerful resource which enables exponential separations (see e.g., Ref.~\cite{chen2022exponential}). Do such separations manifest cryptographically? Can one construct cryptographic protocols using single-copy/non-adaptive certifiable primitives that one cannot using the unconstrained certifiable primitives we define here? 

\textbf{Evidence for efficiently-certifiable OWSGs without OWFs:} The major motivation for studying many Microcrypt primitives comes from evidence that they can exist, even if $\mathsf{P}=\mathsf{NP}$ and OWFs do not exist~\cite{Kretschmer_2021,Kretschmer_2023,Kretschmer_2025}. Can we provide similar evidence that efficiently certifiable OWSGs can exist even if OWFs do not exist? From Ref.~\cite{Kretschmer_2025} we know that there exists an oracle world in which $\mathsf{P}=\mathsf{NP}$  but quantum computable OWFs, and therefore also logarithmic output length PRSGs, quantum pseudorandom generators and EV-OWPs exist. As such, one may be able to answer this question by better understanding the implications between existing Quantumania primitives, as per the previous question.

\textbf{Classical cryptography from quantum assumptions:} Given the fact that the HPS assumptions are sufficient for the construction of an explicit OWF (Theorem~\ref{thm:HPS-imply-owf-intro}), one can instantiate all cryptography in \textit{Minicrypt} under the HPS assumption. An immediate natural question is whether one could in fact instantiate cryptographic primitives and protocols in \textit{Cryptomania} under the HPS assumptions? To this end, an immediate direction would be to try to construct suitable \textit{trapdoor} OWFs from the HPS assumptions. Additionally, our results show that one can construct OWFs from any OWSG which admits a simulable and computationally efficient state certification protocol. Could one construct other OWSGs under inherently quantum assumptions, which also admit such state certification protocols, and can therefore be used to construct OWFs under inherently quantum assumptions?

\textbf{Independence and relation of HPS assumptions from classical assumptions:} As has been mentioned before, one of the primary reasons that Theorem~\ref{thm:HPS-imply-owf-intro} is interesting, is because it shows that the HPS assumptions provide a novel \textit{inherently quantum} assumption for the construction of classical cryptography. However, unlike existing and well studied assumptions like "Learning with Errors"~\cite{regevOnLattices2009} the HPS assumptions have not been tested in any way, and their relation to existing assumptions is largely unclear. With this in mind, if one is to take the HPS assumptions seriously as a foundation for classical cryptography, then significantly more effort is needed to investigate both the plausibility of these assumptions, and their potential relation to existing known assumptions.

\textbf{Quantumania via state certification protocols with quantum post-processing:} We note that our definition of a computationally efficient "measure first, ask later" state certification protocol implicitly requires the post-processing functions to be classically efficient to compute. This is done to ensure that the constructions using such protocols lead to EV-OWPs as normally understood -- i.e. OWPs with classically efficient verification algorithms. However, if one was to allow for efficient deterministic \textit{quantum} verification algorithms in the definition of an EV-OWP, then in principle any copy-efficient state certification protocol with efficient \textit{quantum} postprocessing algorithms would suffice. We believe it is an interesting direction to understand the power of such "measure first, ask later" state certification protocols with \textit{quantum} post-processing functions, and their relation to quantum computable OWFs and the instantiation of concrete QCCC cryptographic protocols like one-time digital signatures or interactive quantum bit commitments.

\subsection{A short story on the highs and lows of developing this work}\label{ss:historical-context}

This work grew out of the observation that one could construct \textit{efficiently verifiable} OWPs from OWSGs, if one had a computationally efficient ``measure first, ask later'' state certification protocol for the output states of the OWSG. With this in mind, we went looking for such a state certification protocol for Hamiltonian phase states, knowing that they were conjectured to provide a OWSG (plausibly independent of OWFs), with the hope of providing a concrete candidate instantiation of an EV-OWP, and therefore of a QCCC 
one-time digital signature, which was plausibly genuinely in Microcrypt. In particular, a recent experimental work implemented an "almost" QCCC digital signature~\cite{niroula2026digitalsignaturesclassicalshadows}, which was just lacking efficient verifiability, by virtue of using standard classical shadows. We know that if we could find a computationally efficient ``measure first, ask later'' state certification protocol tailored for Hamiltonian phase states, this would give us an explicit EV-OWP (under the HPS assumptions), from which we would be able to construct a genuine and experimentally feasible QCCC one-time digital signature, answering an open question from Ref.~\cite{fefferman2025hardnesslearningquantumcircuits}.

We then found the state certification protocol for (Hamiltonian) phase states from Ref.~\cite{park2026samplehardwareefficientfidelityestimation} and happily put together the pieces, writing a draft whose main claim was the construction of an explicit candidate for an EV-OWP, and therefore experimentally feasible QCCC one-time digital signature, which was plausibly independent of OWFs, under the HPS assumptions. In other words, a plausible concrete and explicit instantiation of Quantumania! 

After uploading the manuscript to the arXiv, and sharing the completed draft among friends and colleagues, Matthias Caro asked us why we hadn't used the state certification protocol from Ref.~\cite{HuangPreskillSoleimanifar2025}? Quickly we realized that one \textit{could} of course also use that state certification protocol, and that if we did this, the resulting OWP sampling algorithm would be \textit{classically} simulable, and imply OWFs. In essence, we \textit{did} have a construction of an efficiently verifiable OWP under the HPS assumptions, but one could construct a OWF from this OWP, and therefore \textit{the HPS assumptions implied OWFs}. Nothing we had written in our original draft was wrong, except the foundational assumption! We pulled the paper from the arXiv before it was announced (luckily), and the result is the work you are now reading.

\subsection{Structure of this work}\label{ss:structure}

 We start in Section~\ref{s:preliminaries} by introducing all relevant notation, existing definitions and assumptions. Following this, we provide in Section~\ref{ss:state certification-prelim} formal definitions for the "measure first, ask later" state certification protocols which are central to this work. Building on this, we then define a variety of \textit{certifiable} Microcrypt primitives in Section~\ref{s:certifiable-microcrypt}. With this established we then prove in Section~\ref{s:owp-from-cert-owsg} that one can build variants of OWPs from certifiable OWSGs. We then show in Section~\ref{s:owf-from-simulable-owsg} that if a OWSG is \textit{simulable} in addition to being certifiable, then one can construct variants of OWFs from the OWPs constructed from the OWSG. Given this, we can then finally prove Theorem~\ref{thm:HPS-imply-owf-intro} in Section~\ref{s:HPS-imply-one-way}. We conclude in Section~\ref{s:towards-concrete-instantiations} with a discussion of candidate state ensembles for instantiating Quantumania via the tools developed in this work.

 \subsection*{AI Disclosure} The first version of this manuscript, containing all the essential ideas, constructions, proofs and method of presentation, was obtained without the assistance of generative AI. Both Claude Opus 4.8 (Max) and ChatGPT 5.6 Sol (Ultra) were then used to generate a detailed and critical referee reports, with a focus on verifying mathematical correctness. These report identified a variety of subtle mathematical issues, which were fixed via interaction with both models. This materially affected details in the proofs of Corollary~\ref{cor:classical-sampling-implies-qsecDOWF}, Theorem~\ref{thm:owp-from-certifiable-owsg}, Theorem~\ref{thm:owf-via-sim-certOWSG} (as it relies on Corollary~\ref{cor:classical-sampling-implies-qsecDOWF}), Theorem~\ref{thm:owf-via-sim-effcertOWSG} and Lemma~\ref{lem:eff-approx-Bernoulli}. The authors verified the correctness and originality of all content including references.

 \subsection*{Acknowledgements} We are grateful for helpful and inspiring conversations with Carlos Cid, Janek Denzler, David Elkouss, Bill Fefferman, Elies Gil-Fuster, Tommaso Guaita, Manuel Goul\~{a}o, Jonas Haferkamp, Zephrina Aniska Be, Mina Doosti and Martina Onetti. JC thanks the support of Berlin Quantum. NKHL acknowledges support from  the Austrian Science Fund (FWF) (\href{https://doi.org/10.55776/P36478}{10.55776/P36478}), the Austrian Federal Ministry of Education, Science and Research via the Austrian Research Promotion Agency (FFG) through the flagship project FO999921415 (Vanessa-QC) funded by the European Union{\textemdash}NextGenerationEU, the European Research Council (Consolidator grant `Cocoquest' 101043705), and the Croucher Foundation. RS thanks the Alexander von Humboldt foundation for their support, under the German Research Chair program at the African Institutes for Mathematical Sciences. JE has been funded by the BMFTR (QuSol, Hybrid++), Berlin Quantum, the Munich Quantum Valley, the Quantum Flagship
(Millenion, PasQuans2), 
the DFG (CRC 183, SPP 2514), the Clusters of Excellence (ML4Q, MATH+),
and the European Research Council (DebuQC). Part of this work was carried out while JC and JE visited the African Institute for Mathematical Sciences (AIMS), Cape Town, for
the ``1st AIMS Workshop on the Theory of Quantum Learning Algorithms'' (2025).
DH was supported by
a Simons postdoctoral fellowship through DOE QSA and NSF QLCI Grant No. 2016245, and
by the Swiss National Science Foundation through Ambizione Grant No. 223764.

\section{Preliminaries}\label{s:preliminaries}

Throughout this work we use the notation $x\gets \mathcal{X}$ to denote $x$ sampled uniformly from the set $\mathcal{X}$, and the notation $x\gets D$ to denote $x$ sampled from a distribution $D$.  Given some function $f:\mathcal{X}\rightarrow\mathcal{Y}$ we will use the notation $(f(x))_{x\gets \mathcal{X}}$ to denote the distribution over $\mathcal{Y}$ which one samples from by first sampling $x\gets\mathcal{X}$ and then outputting $f(x)$. Additionally, we use the notation $\mathsf{negl}$ to represent a negligible function -- i.e. any function $f:\mathbb{N}\rightarrow\mathbb{R}$ such that for every positive polynomial $p$ there exists an $N$ such that for all $n\geq N$ one has $f(n) < 1/p(n)$. We use the abbreviations QPT and PPT for ``quantum polynomial time'' and ``probabilistic polynomial time'' respectively.

\subsection{Classical cryptographic primitives}\label{ss:classical-primitives}

We begin by defining the classical cryptographic primitives relevant to this work. The first such primitives are (quantum-secure) one-way functions.

\begin{definition}[One-way function (OWF)]\label{def:one-way-function} Let $m:\mathbb{N}\rightarrow\mathbb{N}$ be some fixed polynomial. A family of efficiently computable functions $\{F_\lambda:\{0,1\}^{m(\lambda)}\rightarrow\{0,1\}^*\}_{\lambda\in\mathbb{N}}$ is:
\begin{enumerate}
\item A (quantum-secure) \textit{one-way function} if for all PPT (QPT) algorithms $\mathcal{A}$ there exists a negligible function $\mathsf{negl}$ such that
\begin{equation}
\underset{\substack{x\gets\{0,1\}^{m(\lambda)} \\y\gets F_\lambda(x)}}{\mathrm{Pr}}\left[\mathcal{A}(1^\lambda,y) \in F_\lambda^{-1}(y)\right] \leq \mathsf{negl}(\lambda)
\end{equation}
for all sufficiently large $\lambda\in \mathbb{N}$. 
\item A (quantum-secure)  \textit{weak one-way function} if there exists a polynomial $p:\mathbb{N}\rightarrow\mathbb{N}$ such that for all (QPT) PPT algorithms $\mathcal{A}$
\begin{equation}
\underset{\substack{x\gets\{0,1\}^{m(\lambda)} \\y\gets F_\lambda(x)}}{\mathrm{Pr}}\left[\mathcal{A}(1^\lambda,y) \notin F_\lambda^{-1}(y)\right] > \frac{1}{p(\lambda)}
\end{equation}
for all sufficiently large $\lambda\in \mathbb{N}$. 
\end{enumerate}
\end{definition}

In order to break a candidate OWF $\{F_\lambda\}$, on input $F_\lambda(x)$ the adversary needs to output a \textit{single} element of the preimage of $F_\lambda(x)$. Given a weak OWF, there exists a polynomial $p$ such that for all adversaries the failure probability is at least $1/p$ with respect to $x$ drawn uniformly randomly. Given a OWF, the success probability of any adversary is negligible. We note that weak OWFs imply OWFs~\cite{goldreich2001foundations} and that quantum-secure weak OWFs imply quantum-secure OWFs~\cite{kashefi2007statisticalzeroknowledgequantum,Radian_2019}.

The second classical cryptographic primitive relevant to this work is a \textit{distributional} one-way function, for which, on input $F_\lambda(x)$, the adversary has the harder task of sampling from the uniform distribution over the preimage of $F_\lambda(x)$. A function family is a distributional one-way function if there is an inverse polynomial lower bound on the accuracy achievable by any PPT adversary:

\begin{definition}[Distributional one-way function (D-OWF) \cite{impagliazzo1989one}]\label{def:dist-one-way-function} Let $m:\mathbb{N}\rightarrow\mathbb{N}$ be some fixed polynomial. A family of efficiently computable functions $\{F_\lambda:\{0,1\}^{m(\lambda)}\rightarrow\{0,1\}^*\}_{\lambda\in\mathbb{N}}$ is a \textit{distributional one-way function} if there exists a polynomial $p$ such that for all PPT algorithms $\mathcal{A}$,  one has
\begin{equation}
d_\mathrm{TV}\left((x,F_\lambda(x))_{x\gets\{0,1\}^{m(\lambda)}}, (\mathcal{A}(1^\lambda,F_{\lambda}(x)),F_\lambda(x))_{x\gets\{0,1\}^{m(\lambda)}}\right) \geq \frac{1}{p(\lambda)}
\end{equation}
all sufficiently large $\lambda\in \mathbb{N}$.
\end{definition}

In the above definition, only classical PPT adversaries have been considered. To define a \textit{quantum-secure} distributional one-way function, one insists that, on average with respect to function inputs, no QPT adversary can prepare a state close to a uniform superposition over preimages. To make this precise, given any two quantum states $\rho,\sigma$ let's denote by $F_{\mathrm{sq}}(\rho,\sigma) = \|\sqrt{\rho}\sqrt{\sigma}\|_1^2$ the squared fidelity between the two states. We then have: 
\begin{definition}[Quantum-secure distributional one-way function (Adapted from~\cite{kashefi2007statisticalzeroknowledgequantum})]\label{def:qsecDOWF} Let $m:\mathbb{N}\rightarrow\mathbb{N}$ be some fixed polynomial. A family of efficiently computable functions $\{f_\lambda:\{0,1\}^{m(\lambda)}\rightarrow\{0,1\}^*\}_{\lambda\in\mathbb{N}}$ is a quantum-secure \textit{distributional one-way function} if there exists a polynomial $p$ such that for all QPT algorithms $\mathcal{A}$ and all sufficiently large $\lambda$ one has that
\begin{align}
\overline{F}_{\mathcal{A}}(\lambda) \coloneqq  \frac{1}{2^{m(\lambda)}}\sum_{r\in\{0,1\}^{m(\lambda)}}  F_\mathrm{sq}\left(\sigma_{f_\lambda(r)}, \rho^{\mathcal{A}}_{f_\lambda(r)}\right) \leq 1-\frac{1}{p(\lambda)}
\end{align}
where  $\sigma_{y} =  |H_y\rangle\langle H_y|$ and $\rho^{\mathcal{A}}_{y} = \mathcal{A}(1^\lambda,|y\rangle\langle y|)$ with
\begin{equation}
|H_{y}\rangle = \frac{1}{\sqrt{|f_\lambda^{-1}(y)|}}\sum_{z\in f_\lambda^{-1}(y)}|z\rangle.
\end{equation}
\end{definition}

We note that the definition we have given above, where the adversary can output mixed quantum states, is a strengthening of the definition given in~\cite{kashefi2007statisticalzeroknowledgequantum} and implies their definition. As sampling from the preimage of $F_\lambda(x)$ is a harder task than outputting a single element of the preimage of $F_\lambda(x)$, any (quantum-secure) OWF is immediately also a (quantum-secure) distributional OWF. The other direction however does not hold. In particular, it is possible for a family of functions $\{F_\lambda\}$ to be a distributional OWF but \textit{not} a OWF. Perhaps surprisingly, however, Impagliazzo and Luby showed that given a distributional OWF, one can construct a OWF~\cite{impagliazzo1989one}. Particularly, we have the following result:

\begin{theorem}[D-OWFs imply OWFs (Lemma 1~\cite{impagliazzo1989one}) and Theorem 4.2.2~\cite{impagliazzo1992pseudo})]\label{thm:dowf-imply-owf} Given a distributional OWF, one can construct a OWF. 
\end{theorem}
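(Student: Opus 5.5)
Following Impagliazzo and Luby, I would first construct a weak OWF from the distributional OWF and then amplify it to a OWF using Yao's direct-product construction, which is \cite{goldreich2001foundations} as already cited. Let $\{f_\lambda\}$ be a distributional OWF with polynomial $p$. The key device is hashing. Define
\begin{equation}
F_\lambda(x,i,h) = \bigl(f_\lambda(x),\, i,\, h,\, h(x)_{1..i}\bigr),
\end{equation}
where $i\in\{0,\dots,m(\lambda)\}$ and $h$ is drawn from a pairwise-independent hash family $\{0,1\}^{m}\to\{0,1\}^{m}$; here $h(x)_{1..i}$ denotes the first $i$ output bits. The intuition is the following. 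When $i\approx \log|f_\lambda^{-1}(f_\lambda(x))|$, the hash prefix cuts the preimage set down to $O(1)$ elements, and by the leftover hash lemma $x$ is close to uniform among them. Consequently, an inverter for $F$ lets us sample a nearly uniform preimage of $f$.

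The argument would proceed by contradiction. Suppose $\mathcal{A}$ inverts $F_\lambda$ with probability $1-1/q(\lambda)$ for every polynomial $q$, infinitely often. Build a sampler $\mathcal{S}$ that, on input $y$, picks $i$ uniformly, a random $h$, and a random string $z\in\{0,1\}^i$. It runs $\mathcal{A}(y,i,h,z)$ and outputs the $x'$ returned. The analysis rests on three points:
\begin{enumerate}
\item With probability $\Omega(1/m)$ the guess $i$ lands within an additive constant of $\log|f^{-1}(y)|$.
\item For such $i$, pairwise independence gives that $(h,h(x)_{1..i})$ for uniform $x\in f^{-1}(y)$ is statistically close to $(h,z)$ with $z$ uniform, up to a small constant. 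One can shift $i$ down by $O(\log p)$ to make this error $\le 1/(4p)$.
\item Conditioned on $(h,z)$, the true $x$ is uniform on the set $\{x\in f^{-1}(y): h(x)_{1..i}=z\}$. Hence any inverter that succeeds with high probability outputs a point of this set, but not necessarily a uniform one.
\end{enumerate}

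The main obstacle is point 3. A plain inverter may output a fixed, non-uniform preimage, and I would handle this with the standard trick of appending extra hash bits. Choose $i$ slightly larger than $\log|f^{-1}(y)|$, so that the surviving set has size $0$ or $1$ with good probability. Then, conditioned on the event that the set is a singleton, the unique surviving preimage is distributed uniformly over $f^{-1}(y)$, by pairwise independence and symmetry. The sampler repeats with fresh $(i,h,z)$ until $\mathcal{A}$ returns a valid preimage consistent with $z$, for at most polynomially many attempts. Accepting only on success, its output is within $1/(2p)$ of uniform on $f^{-1}(y)$. This requires tracking the probability that the random $z$ is in the image (roughly a constant fraction) and bounding the failure mass by $1/q$ plus the hashing error. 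Choosing $q$ large relative to $p$ and $m$ gives $d_\mathrm{TV}<1/p$, which contradicts distributional one-wayness.

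Therefore $F_\lambda$ is a weak OWF with inversion failure at least $1/\mathrm{poly}(\lambda)$. Applying the weak-to-strong amplification stated in the preliminaries, namely taking $F'(x_1,\dots,x_N)=(F(x_1),\dots,F(x_N))$ with $N$ polynomial, yields a OWF.
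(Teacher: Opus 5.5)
Your overall route is the one the paper relies on. The paper does not reprove this theorem: it cites Impagliazzo--Luby and Impagliazzo's thesis and notes that the construction goes distributional OWF $\to$ weak OWF $\to$ OWF. That matches your hashing function $F_\lambda(x,i,h)$ followed by Yao's direct product, and the amplification step is fine.

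The gap is in the distributional sampler you build from the inverter. It draws a fresh uniform $i$ on every attempt and accepts the first valid answer. It cannot know $\log|f^{-1}(y)|$, so the accepted outputs are dominated by the wrong levels:
\begin{itemize}
\item For $i\le\log|f^{-1}(y)|-O(1)$, a random $z$ lies in the image almost surely, so such an attempt succeeds with probability close to $1$.
\item At the levels you want, $|f^{-1}(y)|\,2^{-i}\le 1/(4p)$, so that $S=\{x\in f^{-1}(y):h(x)_{1..i}=z\}$ is a singleton except with probability $O(1/p)$. There an attempt succeeds only with probability $O(1/p)$, not ``a constant fraction''.
\end{itemize}
At the low levels $|S|\approx|f^{-1}(y)|\,2^{-i}$ is large, and the inverter may return any element of $S$. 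Your analysis uses only the inversion guarantee. The inverter that always returns the lexicographically smallest element of $S$ satisfies that guarantee perfectly, yet with it your sampler's output concentrates on small preimages. Its total variation distance from uniform is then a constant; for instance, if $|f^{-1}(y)|\approx 2^{m/2}$, about half of all levels are low. So the claim that the accepted output is within $1/(2p)$ of uniform fails, and no contradiction is reached.

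Relatedly, your point 2 (leftover hash lemma, near-uniform prefix) concerns exactly the regime $i<\log|f^{-1}(y)|$ that must be avoided. In the singleton regime a uniform $z$ is far from $h(x)_{1..i}$; closeness holds only conditioned on $z$ hitting the image.

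What is missing is a mechanism for locating the right level. One way is to scan $i=m,m-1,\dots,0$, make $T=\Theta(p\log p)$ attempts with fresh $(h,z)$ at each level, and output the first valid answer. Let $i_0$ be the level with $|f^{-1}(y)|\,2^{-i_0}\in[1/(8p),1/(4p)]$. For $y$ on which the inverter is reliable at every level, a success then occurs at some level $i\ge i_0$ with probability $1-O(1/p)$, so the low levels are never reached. At any such level, conditioned on a hit, $|S|=1$ except with probability $O(|f^{-1}(y)|\,2^{-i})$, so the inverter has no freedom.

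Pairwise independence gives $\mathrm{Pr}[S=\{x^*\}]\in[2^{-i}(1-|f^{-1}(y)|\,2^{-i}),\,2^{-i}]$ for every $x^*\in f^{-1}(y)$. So the returned point is $O(1/p)$-close to uniform; this is approximate, not exact ``by symmetry''. You also need two transfer steps:
\begin{itemize}
\item Conditioned on a hit, $(h,z)$ is $O(|f^{-1}(y)|\,2^{-i})$-close to $(h,h(x)_{1..i})$ with $x$ uniform in $f^{-1}(y)$, so the inverter's guarantee carries over to the sampler's queries.
\item An averaging step: failure at most $(m+1)/q$ on each fixed level, hence by Markov small failure on all levels simultaneously for most $y$, once $q$ is a large enough polynomial in $m$ and $p$.
\end{itemize}
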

We note that the construction underneath Theorem~\ref{thm:dowf-imply-owf} works by first constructing a \textit{weak} OWF from the distributional OWF, then constructing a OWF from the weak OWF. An analogous result for \textit{quantum-secure} distributional OWF was proven in Ref.~\cite{kashefi2007statisticalzeroknowledgequantum}:

\begin{theorem}[Quantum-secure D-OWFs imply quantum-secure OWFs (Theorem 5~\cite{kashefi2007statisticalzeroknowledgequantum})]\label{thm:qsecDWOFsimplyqsecOWFS} Given a quantum-secure distributional OWF, one can construct a quantum-secure OWF.
\end{theorem}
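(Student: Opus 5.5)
The plan follows the classical two-step template behind Theorem~\ref{thm:dowf-imply-owf}, lifted to quantum adversaries. First we turn the quantum-secure distributional OWF $\{f_\lambda\}$ into a quantum-secure \emph{weak} OWF. Then we amplify the weak OWF into a quantum-secure OWF via the standard direct-product construction $F(x_1,\dots,x_q)=(f(x_1),\dots,f(x_q))$, whose quantum security is known~\cite{Radian_2019}.

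\textbf{Step 1: distributional to weak.} Following Impagliazzo--Luby, define
\[
g_\lambda(r,i,h)=\bigl(f_\lambda(r),\,i,\,h,\,h(r)_{1:i}\bigr),
\]
where $h$ is drawn from a pairwise-independent hash family on $m(\lambda)$ bits and $i\in\{1,\dots,m(\lambda)\}$. Suppose a QPT $\mathcal{B}$ inverts $g$ with probability $1-1/q(\lambda)$ for every polynomial $q$. We then build a QPT $\mathcal{A}$ that, given $|y\rangle$, approximately prepares $|H_y\rangle$. When $i\approx \log|f^{-1}(y)|$, the leftover hash lemma says the extra hash bits almost uniquely isolate an element of the preimage while revealing almost nothing. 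Running $\mathcal{B}$ coherently on a superposition over $(i,h)$ and over uniformly random hash outputs $w$ would then produce, after uncomputing the auxiliary registers, a state close to a uniform superposition over $f^{-1}(y)$.

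\textbf{Main obstacle.} This coherent uncomputation step is the delicate part. Classically one only needs samples, but Definition~\ref{def:qsecDOWF} asks for high fidelity to the \emph{pure} state $|H_y\rangle$. Garbage left entangled with the output would ruin the fidelity. My first attempt would therefore purify $\mathcal{B}$, apply it to $\sum_w |w\rangle$, and use the near-injectivity of $r\mapsto (f(r),h(r)_{1:i})$ on the good range of $i$ to uncompute $w$ and $h$. Averaging over $y$, this would give
\[
\overline{F}_{\mathcal{A}}(\lambda)\ge 1-1/p(\lambda)
\]
for every polynomial $p$, contradicting distributional one-wayness. If that fails, I would fall back on the argument of Ref.~\cite{kashefi2007statisticalzeroknowledgequantum}, which handles exactly this issue (Theorem~5), and simply invoke it.

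\textbf{Step 2: weak to strong.} Apply the quantum-secure hardness amplification for weak OWFs~\cite{kashefi2007statisticalzeroknowledgequantum,Radian_2019}, stated in the preliminaries as known, to obtain a quantum-secure OWF.
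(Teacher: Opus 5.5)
Your overall route is the paper's. The paper gives no argument of its own here: it chains Theorem~5 of Ref.~\cite{kashefi2007statisticalzeroknowledgequantum} (quantum-secure distributional OWF $\Rightarrow$ quantum-secure weak OWF) with the quantum-secure weak-to-strong amplification of Ref.~\cite{Radian_2019}. It also notes that Definition~\ref{def:qsecDOWF}, which allows mixed-state outputs, implies the Kashefi--Kerenidis notion, so their theorem applies; you should state that check too. With your fallback of invoking Theorem~5 directly, the proposal is complete and matches the paper.

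The self-contained Step~1 you sketch first would not work as written. Write $N=|f_\lambda^{-1}(y)|$ and $B_w=\{r\in f_\lambda^{-1}(y): h(r)_{1:i}=w\}$. For $r$ uniform on $f_\lambda^{-1}(y)$, the leftover hash lemma makes $h(r)_{1:i}$ nearly uniform only for $i\le\log N-2\log(1/\epsilon)$. Pairwise independence makes a random preimage nearly isolated only for $i\ge\log N+\log(1/\epsilon)$. At $i\approx\log N$ neither holds to $1/\mathrm{poly}$ accuracy: in general a constant fraction of buckets are empty and a constant fraction of preimages share a bucket.

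Concretely, take an inverter that returns $\bot$ on empty buckets and a fixed $r_w\in B_w$ otherwise. Running it on $2^{-i/2}\sum_w|w\rangle$ and uncomputing $w$ gives an output whose fidelity with $|H_y\rangle$ is exactly $K^2/(2^iN)$, where $K=\#\{w:B_w\neq\emptyset\}\le\min(2^i,N)$. This equals $1$ only if $h(\cdot)_{1:i}$ is a bijection from $f_\lambda^{-1}(y)$ onto $\{0,1\}^i$, so you cannot reach $1-1/p$ for every $p$.

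A correct argument has to work in the isolation regime. There only about an $N/2^i$ fraction of the $w$-superposition lands on nonempty buckets, so it must:
\begin{itemize}
\item postselect on the checkable flag $[f_\lambda(r)=y\wedge h(r)_{1:i}=w]$, with $O(2^i/N)$ repetitions or amplitude amplification;
\item clean the inverter's workspace by compute--copy--uncompute, which is legitimate only because the output is essentially unique there;
\item locate the right $i$ without knowing $N$;
\item cope with the inverter succeeding only on average over $(r,i,h)$.
\end{itemize}
Strictly, the negation of weak one-wayness also gives a different inverter $\mathcal{B}_q$ for each polynomial $q$, not one inverter for all $q$.

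Do not try to uncompute $h$, since $r$ does not determine it. Sample $h$ and $i$ classically instead; this costs nothing because $F_{\mathrm{sq}}(\sigma_y,\rho)=\langle H_y|\rho|H_y\rangle$ is linear in $\rho$. These are exactly the issues Theorem~5 of Ref.~\cite{kashefi2007statisticalzeroknowledgequantum} must handle, so your Step~1 currently rests on the citation, not on your sketch.
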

We note that in Ref.~\cite{kashefi2007statisticalzeroknowledgequantum} they actually only proved that given a quantum-secure distributional OWF one can construct a quantum-secure \textit{weak} OWF, however as mentioned, quantum-secure weak OWFs imply quantum-secure OWFS~\cite{Radian_2019}.

\subsection{Microcrypt primitives}\label{ss:microcrypt-primitives}

In this section, we provide definitions for all existing Microcrypt primitives that are relevant for this work. We begin with the definition of a \textit{one-way puzzle}, originally introduced by Khurana and Tomer in Ref.~\cite{khurana2024commitmentsquantumonewayness}. 

\begin{definition}[One-way puzzle (OWP)]\label{def:OWP}
A \emph{one-way puzzle} is a pair $({\sf Samp},{\sf Ver})$ consisting of a sampling algorithm ${\sf Samp}$ and a verification function ${\sf Ver}$. Here, ${\sf Samp}(1^\lambda)
\to
(k,s)$ is either a uniform quantum polynomial time (QPT) or probabilistic polynomial time (PPT) algorithm that on input of a security parameter $\lambda$ outputs a pair of classical strings; we call $k$ the {\em key} and $s$ the {\em puzzle}. The function ${\sf Ver}(k,s)=b$ takes a key and a puzzle as inputs and returns a bit $b\in\{0,1\}$. When ${\sf Ver}(k,s)=1$, we say that the pair $(k,s)$ is {\em accepted}. For all sufficiently large $\lambda$, these must satisfy the following properties:
\begin{enumerate}
    \item {\em Correctness}: Outputs of the sampling algorithm are accepted with high probability,
    \begin{equation}
    \underset{(k,s)\,\gets\,{\sf Samp}(1^\lambda)}{\rm Pr}\big[{\sf Ver}(k,s)=1\big]\ge1-{\sf negl}(\lambda)\,.
    \end{equation}
    \item {\em Security}: Given just a puzzle $s$, it is computationally hard to find a key $k$ such that the pair $(k,s)$ is accepted. That is, for all QPT algorithms $\mathcal A(1^\lambda,s)\to k'$ that take a puzzle as input and output a 
    key one has
    \begin{equation}
    \underset{\substack{
    (k,s)\,\gets\,{\sf Samp}(1^\lambda)\,\,\\
    k'\,\gets\,{\mathcal A}(1^\lambda,s)
    }}{\rm Pr}\big[{\sf Ver}(k',s)=1\big]\le{\sf negl}(\lambda)\,.
    \end{equation}
\end{enumerate}
\end{definition}

We note that the definition above allows for classical PPT or quantum QPT sampling algorithms. However, when the sampling algorithm is classical, it is known that one can construct a distributional OWF from the OWP:

\begin{theorem}[Classical sampling for one-way puzzles implies a distributional one-way function (Claim D.1~\cite{khurana2025founding})]\label{thm:KTowf} Let $m:\mathbb{N}\rightarrow\mathbb{N}$ be some fixed polynomial.  Given a one-way puzzle $(\mathsf{Samp},\mathsf{Ver})$, assume that there exists an efficient deterministic classical algorithm $\mathsf{Samp}_{\mathrm{C}}$ such that 
\begin{equation}
d_\mathrm{TV}\left(\mathsf{Samp}(1^\lambda),\mathsf{Samp}_{\mathrm{C}}(1^\lambda,r)_{r\gets\{0,1\}^{m(\lambda)}}\right) \leq 1/3,
\end{equation}
where $\mathsf{Samp}_{\mathrm{C}}(1^\lambda,r)_{r\gets\{0,1\}^{m(\lambda)}}=(k_\lambda(r),s_\lambda(r))_{r\gets\{0,1\}^{m(\lambda)}}$ is understood as the distribution over outputs of $\mathsf{Samp}_{\mathrm{C}}(1^\lambda,r)$ with respect to input bitstrings $r\in\{0,1\}^{m(\lambda)}$ drawn uniformly at random. Then, the family of functions $\{f_\lambda:\{0,1\}^{m(\lambda)}\rightarrow\{0,1\}^*\}_{\lambda\in\mathbb{N}}$ with $f_\lambda(r) = s_\lambda(r)$ is a distributional one-way function.
\end{theorem}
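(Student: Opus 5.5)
We argue by contradiction. Suppose $f_\lambda(r)=s_\lambda(r)$ is not a distributional OWF. Then for every polynomial $p$ there is a PPT algorithm $\mathcal{A}$ such that, for infinitely many $\lambda$,
\begin{equation}
d_\mathrm{TV}\big((r,f_\lambda(r)),(\mathcal{A}(1^\lambda,f_\lambda(r)),f_\lambda(r))\big)<1/p(\lambda).
\end{equation}
From such an $\mathcal{A}$ we build an adversary against the security of $(\mathsf{Samp},\mathsf{Ver})$. On puzzle $s$, the adversary runs $r'\gets\mathcal{A}(1^\lambda,s)$ and outputs $k'=k_\lambda(r')$. This is PPT, so in particular QPT, because $\mathsf{Samp}_{\mathrm C}$ is efficient and deterministic.

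The analysis has three steps. First, we work in the idealised classical world $r\gets\{0,1\}^{m(\lambda)}$, $(k,s)=\mathsf{Samp}_{\mathrm C}(1^\lambda,r)$. Since the pair $(r',s)$ is $1/p$-close in total variation to $(r,s)$, and the map $(r,s)\mapsto (k_\lambda(r),s)$ is deterministic, data processing gives
\begin{equation}
\Pr[\mathsf{Ver}(k_\lambda(r'),s)=1]\ \ge\ \Pr[\mathsf{Ver}(k_\lambda(r),s)=1]-1/p.
\end{equation}
Second, we transfer to the real sampler. The event $\mathsf{Ver}(k,s)=1$ is a test on the joint output, so by the $1/3$ distance hypothesis together with correctness,
\begin{equation}
\Pr_{r}[\mathsf{Ver}(k_\lambda(r),s_\lambda(r))=1]\ \ge\ 1-\mathsf{negl}-1/3.
\end{equation}
Hence in the idealised world the adversary succeeds with probability at least $2/3-\mathsf{negl}-1/p$.

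Third, we must move this success back to puzzles $s$ drawn from the real $\mathsf{Samp}(1^\lambda)$. The marginal on $s$ changes by at most $1/3$ in total variation. Success is a function of $s$ plus the adversary's randomness, so the real success probability is at least $2/3-1/3-\mathsf{negl}-1/p\approx 1/3$. This is non-negligible, and it occurs for infinitely many $\lambda$, which contradicts OWP security. Any constant $p$, say $p=12$, already suffices here, so the negation of the definition (which quantifies over all polynomials) certainly provides a suitable $\mathcal{A}$.

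The main obstacle is this bookkeeping of total-variation losses. The $1/3$ gap is paid twice: once for correctness and once for the puzzle marginal. The margins are tight but still leave a constant advantage. Some care is needed to charge the correctness loss against the joint distribution and the marginal loss against $s$ alone, rather than conflating them. One must also state clearly that the contradiction holds on infinitely many $\lambda$, which is exactly what "for all sufficiently large $\lambda$" in OWP security forbids. If the accounting turns out worse, the fallback is to fold the two steps into a single coupling: consider the event $\mathsf{Ver}(k_\lambda(\mathcal{A}(s)),s)=1$ as a test on $s$ alone and compare it directly across the two distributions of $s$.
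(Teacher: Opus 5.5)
Your proposal is correct and follows essentially the same argument as the paper's reproduction of Claim D.1 in Appendix A. Both define the puzzle adversary $\mathcal{B}(1^\lambda,s)=k_\lambda(\mathcal{A}(1^\lambda,s))$ and pay $1/3+1/p+1/3$ in total variation, obtaining success probability at least $1/3-1/p-\mathsf{negl}(\lambda)$ on infinitely many $\lambda$. The paper packages your three steps as a single triangle inequality on the joint distribution, $d_{\mathrm{TV}}\bigl((k,s),(\mathcal{B}(1^\lambda,s),s)\bigr)\le 2/3+1/\lambda$, and uses $p(\lambda)=\lambda$ where you use the constant $p=12$.
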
\label{thm:classical-sampling-implies-owf}
In light of Theorem~\ref{thm:dowf-imply-owf}, the above means that one can construct OWFs from OWPs with classical sampling algorithms. The result cited above only explicitly proved the ability to construct a distributional OWF, however, the proof can be generalized to obtain a quantum-secure distributional OWF. In particular, we have the following corollary:

\begin{corollary}[Classical sampling for one-way puzzles implies quantum-secure distributional one-way function]\label{cor:classical-sampling-implies-qsecDOWF} Under the same conditions as Theorem~\ref{thm:KTowf}, the family of functions $\{f_\lambda\}_{\lambda\in\mathbb{N}}$ is in fact a quantum-secure distributional one-way function.
\end{corollary}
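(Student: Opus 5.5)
We argue by contraposition, following the structure of Claim D.1 of~\cite{khurana2025founding} but with a quantum inverter in place of a classical one. Suppose that for every polynomial $p$ there is a QPT algorithm $\mathcal{A}$ with $\overline{F}_{\mathcal{A}}(\lambda) > 1-1/p(\lambda)$ for infinitely many $\lambda$. We will turn $\mathcal{A}$ into a QPT adversary that breaks the security of the one-way puzzle $(\mathsf{Samp},\mathsf{Ver})$.

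\textbf{The adversary.} On input a puzzle $s$, the adversary $\mathcal{B}$ runs $\mathcal{A}(1^\lambda,|s\rangle\langle s|)$ and measures the output in the computational basis to obtain $r'$. It then outputs $k' = k_\lambda(r')$, which it computes using the deterministic algorithm $\mathsf{Samp}_{\mathrm{C}}$.

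\textbf{Step 1: reduce fidelity to statistical closeness.} Measuring $\sigma_{f(r)}=|H_{f(r)}\rangle\langle H_{f(r)}|$ in the computational basis gives the uniform distribution $U_{f(r)}$ on $f^{-1}(f(r))$. Measuring $\rho^{\mathcal{A}}_{f(r)}$ gives some distribution $D_{f(r)}$. By monotonicity of fidelity under measurement and the Fuchs--van de Graaf inequality,
\[
d_\mathrm{TV}(U_{f(r)},D_{f(r)})\le\sqrt{1-F_\mathrm{sq}(\sigma_{f(r)},\rho^{\mathcal{A}}_{f(r)})}.
\]
Averaging over $r$ and applying Jensen's inequality (concavity of the square root) bounds the average by $\sqrt{1-\overline{F}_{\mathcal{A}}}<1/\sqrt{p}$. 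It follows that the joint distributions $(r',f(r))$ and $(r,f(r))$ for uniform $r$ are within $1/\sqrt{p(\lambda)}$ in total variation. Since $k_\lambda$ is a fixed deterministic post-processing, $(k_\lambda(r'),s_\lambda(r))$ is within $1/\sqrt{p}$ of $\mathsf{Samp}_{\mathrm{C}}$'s output distribution $(k_\lambda(r),s_\lambda(r))$.

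\textbf{Step 2: transfer to the true sampler and obtain a contradiction.} Let $\mathcal{D}_C$ denote the distribution of $\mathsf{Samp}_{\mathrm{C}}$'s output and $\mathcal{D}_Q$ that of $\mathsf{Samp}(1^\lambda)$. By correctness, $\Pr_{\mathcal{D}_Q}[\mathsf{Ver}=1]\ge 1-\mathsf{negl}$, and $d_\mathrm{TV}(\mathcal{D}_Q,\mathcal{D}_C)\le 1/3$, so $\Pr_{\mathcal{D}_C}[\mathsf{Ver}(k,s)=1]\ge 2/3-\mathsf{negl}$. Combining with Step 1, when $s$ is the puzzle marginal of $\mathcal{D}_C$ we get $\Pr[\mathsf{Ver}(\mathcal{B}(s),s)=1]\ge 2/3-1/\sqrt p-\mathsf{negl}$.

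The adversary's behaviour depends only on $s$, and the puzzle marginals of $\mathcal{D}_Q$ and $\mathcal{D}_C$ are also within $1/3$. Hence, on puzzles from the real sampler, $\mathcal{B}$ succeeds with probability at least $1/3-1/\sqrt p-\mathsf{negl}$. Taking $p$ to be, say, $100$ shows this is a constant, non-negligible advantage for infinitely many $\lambda$, contradicting security of the OWP. The polynomial promised in Definition~\ref{def:qsecDOWF} can therefore be taken to be a constant.

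\textbf{Main obstacle.} The delicate step is Step 1. One must check that the adversary's possibly mixed output, together with the per-$y$ fidelities, converts correctly into an average total variation bound on the \emph{joint} distribution, including the weighting by preimage sizes. The choice of measurement in the computational basis is what makes this conversion work, and the Jensen step must be applied carefully to the average over $r$.
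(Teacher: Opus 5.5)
Your proposal is correct and follows essentially the same route as the paper's proof in Appendix~A. You measure the inverter's output, use Fuchs--van de Graaf plus Jensen to turn the average fidelity into a total-variation bound on the joint distribution of (preimage, image), and apply $k_\lambda$. You then lose $1/3$ once via correctness under $\mathsf{Samp}_{\mathrm{C}}$ and once via the puzzle marginals, reaching success probability $1/3 - o(1)$. The only differences are cosmetic: you use monotonicity of fidelity under measurement where the paper uses contractivity of trace distance, and you instantiate the negated definition with a constant polynomial where the paper uses $p=\lambda^2$.
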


For completeness, we provide a proof of Corollary~\ref{cor:classical-sampling-implies-qsecDOWF} in Appendix~\ref{app:owp-to-doOWF}. In light of Theorem~\ref{thm:qsecDWOFsimplyqsecOWFS}, the above means that one can construct quantum-secure OWFs from OWPs with classical sampling algorithms. Given these results, from the perspective of Microcrypt, we are most interested in OWPs genuinely \textit{quantum} sampling algorithms. As discussed in the introduction, multiple recent works have given evidence that such \textit{quantum} OWP (i.e.,  with QPT sampling algorithm $\mathsf{Samp}$) can be constructed even if OWFs do not exist~\cite{khurana2025founding,Kretschmer_2025}. Additionally, OWPs are the minimal assumption for \textit{Countcrypt}~\cite{goldin2024countcrypt}.

While the standard definition of a OWP given in Definition~\ref{def:OWP} does not require the verification algorithm ${\sf Ver}$ to be efficient, in this work we will be particularly interested in OWP where the verification algorithm ${\sf Ver}$ \textit{is} efficient. In this case, we obtain an \textit{efficiently verifiable} OWP defined formally as follows.

\begin{definition}[Efficiently verifiable one-way puzzle (EV-OWP)]
An \emph{efficiently verifiable} one-way puzzle is a one-way puzzle $({\sf Samp},{\sf Ver})$ as above, with the added requirement that the verification function $\sf Ver$ is efficiently computable.
\end{definition}

Note that since ${\sf Samp}(1^\lambda)\to(k,s)$ is efficient, one implicitly has that $(k,s)\in\{0,1\}^{O(\poly(\lambda))}$. Therefore, ${\sf Ver}(k,s)=b$ is efficiently computable if the runtime is $O(\poly(\lambda))$ -- i.e., polynomial with respect to the security parameter. Additionally, as discussed at length in the introduction, the existence of EV-OWP is the minimal assumption for Quantumania~\cite{goldin2024countcrypt}.

With this, we proceed to define the notion of a \textit{pure one-way state generator}, initially introduced by Morimae and Yamakawa~\cite{Morimae_2022}.

\begin{definition}[Pure one-way 
state generator (OWSG)]\label{def:owsg}
A \emph{pure one-way state generator} is a triple $({\sf KeyGen},{\sf StateGen}, {\sf Ver})$ of QPT algorithms satisfying the following conditions. The sampling algorithm ${\sf KeyGen}(1^\lambda)\to k$ takes as input a security parameter $\lambda$ and outputs a classical key $k\in \mathbb{K}_\lambda \coloneqq \mathrm{supp}(\mathsf{KeyGen}(\lambda))$. The state generator algorithm ${\sf StateGen}(k)\to\ket{\phi_k}$ takes a key $k\in\mathbb{K}_\lambda$ as input and outputs an $n(\lambda)$-qubit pure quantum state $|\phi_k\rangle$. The verification algorithm ${\sf Ver}(k,\ket\phi)\to b$ takes as input a key and a pure quantum state $|\phi\rangle$ and returns a bit $b\in\{0,1\}$ as the result of applying the projective measurement $\{\Pi_k^1,\Pi_k^0\}=\{\ketbra{\phi_k}{\phi_k},\mathbbm{1}-\ketbra{\phi_k}{\phi_k}\}$ on the state $\ket\phi$, i.e., ${\rm Pr}[{\sf Ver}(k,\ket\phi)=b]=\langle\phi|\Pi_k^b|\phi\rangle$. For sufficiently large $\lambda$, these algorithms must satisfy the following properties:
\begin{enumerate}
    \item {\em Correctness}: Outputs of the sampling algorithm ${\sf KeyGen}(1^\lambda)\to k$ and the state generator algorithm ${\sf StateGen}(k)\to\ket{\phi_k}$ on input $k$ are such that the pair $(k,\ket{\phi_k})$ is accepted with high probability, 
    \begin{equation}
    \underset{
    \substack{
            k\,\gets\,{\sf KeyGen}(1^\lambda)\\
            \ket{\phi_k}\,\gets\,{\sf StateGen}(k)
    }
    }{\rm Pr}\Big[{\sf Ver}(k,\ket{\phi_k})\to1\Big]\ge1-{\sf negl}(\lambda)\,.
    \end{equation}
    \item {\em Security}: Given just copies of the states $\ket{\phi_k}$, it is 
    computationally hard to find a key $k'$ such that the pair $(k',\ket{\phi_k})$ passes verification. That is, for all QPT algorithms $\mathcal A(1^\lambda,\ket\phi^{\otimes t})\to k'$ 
    that, on input $t$ copies of a state, outputs a key, one has
    \begin{equation}
    \underset{
    \substack{
            k\,\gets\,{\sf KeyGen}(1^\lambda)\\
            \ket{\phi_k}\,\gets\,{\sf StateGen}(k)\\
            k'\,\gets\, {\mathcal A}(1^\lambda,\ket{\phi_k}^{\otimes t})
    }
    }{\rm Pr}\Big[{\sf Ver}\big(k',\ket{\phi_k}\big)\to1\Big]\le{\sf negl}(\lambda),
    \end{equation}
    for all $t=O(\poly(\lambda))$.
\end{enumerate}
\end{definition}

Before proceeding, a number of remarks concerning the above definition are in order.
\begin{enumerate}
        \item In principle, as discussed in Ref.~\cite{morimae2024onewaynessquantumcryptography},  one can generalize the definition above in two ways: (a) allowing the QPT algorithm ${\sf StateGen}(k)\to\rho_k$ to output mixed states and (b) allowing arbitrary QPT algorithms ${\sf Ver}(k,\rho)\to b$ where the bit $b$ need not be the result of a rank-$1$ projective measurement. 
        \item However, following Ref.~\cite{morimae2024onewaynessquantumcryptography}, we note that if all $\rho_k=\ketbra{\phi_k}{\phi_k}$ are pure, as we consider in the above definition, then one can always define ${\sf Ver}(k,\rho)\to b$ as the result of applying the projective measurement $\{\Pi_k^1,\Pi_k^0\}=\{\ketbra{\phi_k}{\phi_k},\mathbbm{1}-\ketbra{\phi_k}{\phi_k}\}$ on the state $\rho$. Therefore, for OWSG with pure-state outputs, as we consider here, our definition does not lose generality.
        \item As $\mathsf{KeyGen}$ is efficient we have $\mathbb{K}_\lambda\subseteq\{0,1\}^{O(\poly(\lambda))}$. Given this, it follows from the efficiency of $\mathsf{StateGen}$ that $n(\lambda) = O(\mathrm{poly}(\lambda))$. 
\end{enumerate}
        
Given that we only consider OWSG with pure outputs in this work, we will from now on drop the explicit \textit{pure} qualification. Additionally, we note that in their original work defining OWP~\cite{khurana2024commitmentsquantumonewayness}, Khurana and Tomer showed that OWSG are sufficient for the construction of OWP, by utilizing classical shadows to obtain a classical puzzle from the output of a OWSG. One of the main contributions of this work, in Section~\ref{s:owp-from-cert-owsg}, will be to generalize this construction to allow for arbitrary "measure first, ask later" state certification protocols in place of classical shadows. 

Finally, the last Microcrypt primitive relevant to us is the \textit{pseudorandom state generator}, originally defined by Ji, Liu and Song~\cite{PRSGdefinition}.
        
\begin{definition}[Pseudorandom state generator (PRSG)]\label{def:PRSG}
    A \emph{pseudorandom state generator} is  a pair of QPT algorithms $({\sf KeyGen},{\sf StateGen})$ consisting of a sampling algorithm ${\sf KeyGen}$ and a state generator algorithm ${\sf StateGen}$. The sampling algorithm ${\sf KeyGen}(1^\lambda)\to k$ takes as input a security parameter $\lambda$ and outputs a classical key $k\in \mathbb{K}_\lambda \coloneqq \mathrm{supp}(\mathsf{KeyGen}(\lambda))$. The state generator algorithm ${\sf StateGen}(k)\to\ket{\phi_k}$ takes a key $k\in\mathbb{K}_\lambda$ as input and outputs an $n(\lambda)$-qubit pure quantum state $|\phi_k\rangle$. For sufficiently large $\lambda$ and for all QPT algorithms ${\mathcal A}(1^\lambda, \ket\phi^{\otimes t})\to b$ that, on input $t$ copies of a state, output a bit $b\in\{0,1\}$, one has
    \begin{equation}
    \Bigg|\,\,\underset{
    \substack{
    k\,\gets\,{\sf KeyGen}(1^\lambda)\\
    \ket{\phi_k}\,\gets\,{\sf StateGen}(k)
    }
    }
    {\rm Pr}\big[{\mathcal A}(1^\lambda,\ket{\phi_k}^{\otimes t})=1\big]-\underset{\ket\psi\gets{\rm Haar}(n)}{\rm Pr}\big[{\mathcal A}(\ket{\psi}^{\otimes t})=1\big]\,\Bigg|\le{\sf negl}(\lambda),
    \end{equation}
for all $t=O(\poly(\lambda))$, where ${\rm Haar}(n)$ is the Haar distribution over $n$-qubit states. 
\end{definition}

As for a OWSG, given that $({\sf KeyGen},{\sf StateGen})$ are both efficient in the above definition, one again implicitly has that $\mathbb{K}_\lambda\subseteq \{0,1\}^{O(\poly(\lambda))}$ and $n=O(\poly(\lambda))$. As alluded to in the introduction, we will refer to the function $n(\lambda)$ as the \textit{stretch} of the PRSG. We note that it is helpful to distinguish the following regimes: 

\begin{enumerate}
\item $n(\lambda) = \omega(\log \lambda)$. This is the regime in which (a) Ji, Liu and Song were able to show that PRSGs can be built from (quantum-secure) OWF~\cite{PRSGdefinition}, (b) PRSG immediately yield OWSG~\cite{Cavalar_2025} which in turn yield OWP~\cite{khurana2025founding} and (c) in which Kretschmer gave a black box separation between PRSG and OWF~\cite{Kretschmer_2021} (giving evidence that such PRSG can exist even if OWFs do not).  We will call such PRSG \textit{standard} PRSG.
\item $n(\lambda) = c\log \lambda$ with $(c \geq 1)$. These are called \textit{short} or \textit{logarithmic length} PRSGs. Given that one can perform tomography on a quantum state of $O(\log \lambda)$ qubits to any desired precision $\epsilon$ in time $\poly(\lambda, \epsilon)$, short PRSG behave more like cryptographic objects with classical output. Using this insight, such PRSG can be used to build quantum pseudorandom generators (QPRG), which are \textit{pseudodeterministic} generators of pseudorandom bit strings~\cite{ananth2023pseudorandomstringspseudorandomquantum}, from which one can construct EV-OWP~\cite{QCCCCrypto}. Such short PRSG can also be constructed from (quantum-secure) OWFs~\cite{brakerski2020scalablepseudorandomquantumstates}, and as a result the recent black box separation between OWFs and \textit{quantum computable} OWFs gives evidence that short PRSG can also exist in a world without OWFs~\cite{Kretschmer_2025}.
\item $n(\lambda) \leq c\log \lambda$ for $c\in (0,1)$. Here, there exist $c$ such that PRSGs can exist \textit{unconditionally}~\cite{brakerski2020scalablepseudorandomquantumstates,Prabhanjan}.
\end{enumerate}
In this work, given that we are interested in novel constructions of EV-OWP from OWSGs, we focus on the regime of \textit{standard} PRSG. Finally, we note that one can also define a weaker notion of a \textit{single-copy secure} PRSG, in which the condition of Definition~\ref{def:PRSG} is only required to hold for $t=1$. We will, however, be concerned here with standard multi-copy secure PRSG.

\subsection{Hamiltonian phase states and Microcrypt}\label{ss:HPS-prelim}

In this section, we introduce the notion of phase states, and the specific set of phase states known as \textit{Hamiltonian phase states} (HPS). Doing this allows us to introduce the central assumptions in this work -- namely, the search and decision HPS assumptions recently introduced in Ref.~\cite{bostanci2025efficientquantumpseudorandomnesshamiltonian}. We start with the definition of phase states.

\begin{definition}[Phase states]\label{def:phase-states}
Let $n$ be a positive integer and consider arbitrary functions $f:\{0,1\}^n\to[0,2\pi)$ that map classical strings to angles. Every such {\em phase function} $f$ defines a {\em phase state} $|{\phi_f}\rangle=\sum_{{\bf x}}\alpha_f({\bf x})\ket{{\bf x}}$ on $n$ qubits with $\alpha_f({\bf x})=2^{-n/2}\exp(\iu f({\bf x}))$. The class ${\rm PS}_n$ of $n$-qubit phase states is defined as
\begin{equation}
{\rm PS}_n\coloneqq\Big\{|{\phi_f}\rangle\coloneqq\frac1{\sqrt{2^n}}\sum_{{\bf x}\in{\mathbb F}_2^n}\exp(\iu f({\bf x}))\ket{\bf x}\Big\}_{f:\{0,1\}^n\to[0,2\pi)}\,.
\end{equation}
There are uncountably many such states. We will be interested in restricted families of states associated to sets $\{f_k\}_{k\in{\mathbb K}}$ of phase functions that one can at least label by the elements $k\in{\mathbb K}$ of some numerable set. Given a family of functions $\{f_k\}_{k\in{\mathbb K}}$ with $f_k:\{0,1\}^n\to[0,2\pi)$ for all $k\in{\mathbb K}$, we define its corresponding set of phase states ${\rm PS}_n(\{f_k\}_{k\in{\mathbb K}})$ simply as
\begin{equation}
{\rm PS}_n(\{f_k\}_{k\in{\mathbb K}})\coloneqq\Big\{\ket{\phi_k}\coloneqq\frac1{\sqrt{2^n}}\sum_{{\bf x}\in{\mathbb F}_2^n}\exp(\iu f_k({\bf x}))\ket{\bf x}:\,k\in{\mathbb K}\Big\}\subset {\rm PS}_n.
\end{equation}
 \end{definition}

 Given this, we will primarily be interested in sets of phase states defined by \textit{efficiently computable} phase functions, defined as follows:
 
 \begin{definition}[Efficiently computable phase functions]\label{def:eff-computable-phase-functions} Given a set of phase functions $\{f_k\}_{k\in\mathbb{K}}$ with $f_k:\{0,1\}^n\rightarrow[0,2\pi)$, we say that $\{f_k\}_{k\in\mathbb{K}}$ is \textit{efficiently computable} if there exists a deterministic algorithm $\mathcal{A}$ which on input $k\in\mathbb{K}$, $x\in\{0,1\}^n$ and $1^l$ with $l\in \mathbb{N}$, runs in time $\mathrm{poly}(|k|,n,l)$ and outputs a dyadic rational $\mathcal A(k,x,l)=\hat{f}_k(x,l)$ satisfying  
\begin{equation}\label{eq:hat-f-def}
\bigl|\hat{f}_k(x,l)-f_k(x)\bigr|\leq \frac1{2^{l+1}},
\end{equation}
where $|k|$ denotes the bit-length of the encoding of index $k$.
 \end{definition}

With this established, we now define the specific set of phase states known as Hamiltonian phase states.

\begin{definition}[Hamiltonian phase states]\label{def:HPS}
Let $q,m,n$ be positive integers and $\Theta_q=\{0,2\pi/q,4\pi/q\ldots,2(q-1)\pi/q\}$. The class ${\rm HPS}_{q,m,n}$ of \emph{Hamiltonian phase states} (HPS) contains $n$-qubit states of the form $\ket{\phi(\boldsymbol\theta,\bf A)}=U({\boldsymbol\theta},{\bf A})\ket{+^n}$, where the unitary $U(\boldsymbol\theta,\bf A)$ is diagonal and given by
\begin{equation}
\label{HPS}
U({\boldsymbol\theta},{\bf A})=\exp\Big(\iu \sum_{j=1}^m\theta_j\,Z^{A_{j,1}}\otimes Z^{A_{j,2}}\otimes\cdots\otimes Z^{A_{j,n}}\Big)=\sum_{{\bf x}\in{\mathbb F}_2^n}\exp(\iu \sum_{j=1}^m\theta_j\,(-1)^{{\bf A}_{j}\cdot{\bf x}})\ket{\bf x}\!\!\bra{\bf x}
\end{equation}
for all $\boldsymbol\theta=(\theta_1,\ldots,\theta_m)\in\Theta_q^m$, and ${\bf A}=((A_{1,1},\ldots,A_{1,n}),\ldots,(A_{m,1},\ldots,A_{m,n}))\in{\mathbb F}_2^{mn}$. We define ${\mathbb K}_{q,m,n}\coloneqq\Theta_q^m\times{\mathbb F}_2^{mn}$ for the set of labels and write $k=({\boldsymbol\theta},{\bf A})\in{\mathbb K}_{q,m,n}$ for its elements. Therefore,
\begin{equation}\label{eq:HPS-phik}
    {\rm HPS}_{q,m,n}\coloneqq\Big\{\ket{\phi_k}\coloneqq\frac1{\sqrt{2^n}}\sum_{{\bf x}\in{\mathbb F}_2^n}\exp(\iu f_k({\bf x}))\ket{\bf x}\in({{\mathbb C}^2})^{\otimes n}:k\in{\mathbb K}_{q,m,n}\Big\}
\end{equation}
where, for all $k=({\boldsymbol\theta},{\bf A})\in{\mathbb K}_{q,m,n}$, we have
\begin{equation}\label{eq:def-fk}
f_{k}({\bf x})\coloneqq\sum_{j=1}^m\theta_j\,(-1)^{{\bf A}_{j}\cdot{\bf x}}=\sum_{j=1}^m\theta_j\,(-1)^{A_{j,1}\cdot{x_1}}\cdots(-1)^{A_{j,n}\cdot{x_n}}\,.
\end{equation}
\end{definition}

\begin{remark}[Preparing Hamiltonian phase states]\label{remark:efficient-preparation-HPS}  As discussed in detail in Section 2.4 of Ref.~\cite{bostanci2025efficientquantumpseudorandomnesshamiltonian}, states in the class ${\rm HPS}_{q,m,n}$ can be prepared via \textit{instantaneous quantum polynomial-time} (IQP) circuits with $\poly(n,m)$ gates acting on $\ket{0^n}$, independently of the value of $q$. Specifically, preparation of any state $|\phi_k\rangle \in {\rm HPS}_{q,m,n}$ only requires a single layer of Hadamard gates, followed by $\lceil m/n\rceil$ alternating layers of single-qubit $Z$ rotations and CNOT circuits. The commuting structure of these circuits makes Hamiltonian phase
states natural candidates for implementation in programmable quantum
platforms, including Rydberg-atom and superconducting architectures, as discussed in Appendix~\ref{app:experimental-implementation}.
\end{remark}

\begin{remark}[On parameter regimes for poly-size keys]\label{remark:HPS} For integer values of $q$, the class ${\rm HPS}_{q,m,n}$ contains $|{\mathbb K}_{q,m,n}|=q^m2^{nm}$ states that can be labeled by keys $k$ with $|k|= m(n+\lceil\log q\rceil)$ bits. In general, we will take the number of qubits $n$ as our security parameter for cryptographic applications and thus require that both $m$ and $\log q$ are at most $\poly(n)$, so that the class ${\rm HPS}_{q,m,n}$ can be labeled by keys with $\poly(n)$ bit-length encodings.
\end{remark}

\begin{remark}[Efficient computability of HPS phase functions]\label{remark:eff-computability-HPS} For any positive integers $q,m,n$, the set of phase functions $\{f_k\}_{k\in\mathbb{K}_{q,m,n}}$ defining the class $\mathrm{HPS}_{q,m,n}$ in Eq.~\eqref{eq:def-fk} is efficiently computable. More specifically, there exists an algorithm $\mathcal{A}$ which on input $(k,x,1^l)$ runs in time $\mathrm{poly}(n,m,\log q, l)$ and outputs $\mathcal{A}(k,x,l)=\hat{f}_k(x,l)$ satisfying Eq.~\eqref{eq:hat-f-def}. In the regime of Remark~\ref{remark:HPS} -- i.e. when both $m$ and $\log q$ are at most $\poly(n)$ -- the runtime of $\mathcal{A}$ on input $(k,x,l)$ is $\mathrm{poly}(n,l)$.
\end{remark}

We note that Hamiltonian phase states, and the IQP circuits which prepare them, were initially proposed as a class of states that is both easy to prepare and hard to simulate classically~\cite{shepherd2009temporally,bremner2011classical}. However from our perspective, our primary interest in Hamiltonian phase states stems from the recently introduced assumptions, motivated by the apparent hardness of \textit{learning} these states, that they can be used to construct both PRSG and OWSG. We start from the \textit{Decision} HPS assumption:

\begin{assumption}[Decision HPS assumption -- Definition 2 in Ref.~\cite{bostanci2025efficientquantumpseudorandomnesshamiltonian}]\label{ass:decision-HPS}
There exist functions $q(n) = 2^{O(\poly(n))}$ and $m(n) = \poly(n)$, and a classically efficiently sampleable distribution $\chi_{q,m,n}$ over $\mathbb K_{q,m,n}$, such that for all QPT algorithms $\mathcal A(\ket\phi^{\otimes t})\to b$ that, on input $t$ copies of an $n$-qubit state $|\psi\rangle$, output a bit $b\in\{0,1\}$, one has
\begin{equation}
\Bigg|\underset{k\,\gets\,\chi_{q,m,n}}{\rm Pr}\big[{\mathcal A}(\ket{\phi_k}^{\otimes t})=1\big]-\underset{\ket\psi\gets{\rm Haar}(n)}{\rm Pr}\big[{\mathcal A}(\ket{\psi}^{\otimes t})=1\big]\Bigg|\le{\sf negl}(n)
\end{equation}
for all $t=\poly(n)$, where ${\rm Haar}(n)$ is the Haar distribution over $n$-qubit states.
\end{assumption}

Note that an immediate corollary of the above assumption is that, under the Decision HPS assumption, the following construction yields a PRSG.

\begin{construction}[PRSG from Decision HPS]\label{con:PRSG-from-HPS} Given $m(n)$, $q(n)$ and a distribution $\chi_{q,m,n}$ which satisfy the Decision HPS assumption, define:
\begin{enumerate}
\item ${\sf KeyGen}(1^n)$: Sample a key $k\gets \chi_{q,m,n}$.
\item ${\sf StateGen}(k\in \mathbb{K}_{q,m,n})$: Output an $n$-qubit state $\ket{\phi_k}\in {\rm HPS}_{q,m,n}$ defined in Eqs.~\eqref{eq:HPS-phik}-\eqref{eq:def-fk}.
\end{enumerate}
\end{construction}
In particular, note that we have used $n$ as the security parameter, and since $\log|\mathbb K_{q,m,n}|=m(n+\log q)=\poly(n)$, ${\sf KeyGen}$ is efficient with respect to $n$.

With this established, we now introduce the \textit{Search} HPS assumption.

\begin{assumption}[Search HPS assumption -- Definition 1 in Ref.~\cite{bostanci2025efficientquantumpseudorandomnesshamiltonian}]\label{ass:search-HPS}
There exist functions $q(n) = 2^{O(\poly(n))}$ and $m(n) = \poly(n)$, and a classically efficiently sampleable distribution $\chi_{q,m,n}$ over $\mathbb K_{q,m,n}$, such that for all QPT algorithms $\mathcal A(\ket\phi^{\otimes t})\to k'$ that, on input $t$ copies of a $n$-qubit state, output a key $k'\in{\mathbb K}_{q,m,n}$, one has
\begin{equation}
\underset{\substack{
k\,\gets\,\chi_{q,m,n}\\
k'\,\gets\,{\mathcal A}(\ket{\phi_k}^{\otimes t})
}}{\mathbb E}\Big[|\braket{\phi_{k'}}{\phi_k}|^2\Big]\le{\sf negl}(n)
\end{equation}
for all $t=\poly(n)$, where $\mathbb E$ denotes the expectation value.
\end{assumption}

In this case, an immediate corollary of the Search HPS assumption is that the following construction yields a OWSG:

\begin{construction}[OWSG from Search HPS]\label{con:OWSG-from-HPS} 
Given $m(n)$, $q(n)$ and a distribution $\chi_{q,m,n}$ which satisfy the 
Search HPS assumption, define:
\begin{enumerate}
\item ${\sf KeyGen}(1^n)$: Sample a key $k\gets \chi_{q,m,n}$.
\item ${\sf StateGen}(k\in \mathbb{K}_{q,m,n})$: Output an $n$-qubit state $\ket{\phi_k}\in {\rm HPS}_{q,m,n}$.
\item $\mathsf{Ver}(k,\ket\phi)$: Apply the projective measurement $\{\Pi_k^1,\Pi_k^0\}=\{\ketbra{\phi_k}{\phi_k},\mathbbm{1}-\ketbra{\phi_k}{\phi_k}\}$ on the state $\ket\phi$ and output the result. More specifically:
\begin{equation}
\begin{aligned}
{\rm Pr}\big[{\sf Ver}(k,\ket\phi)\to1\big]&={\rm tr}(\Pi_k^1\ketbra{\phi}{\phi})=|\braket{\phi_k}{\phi}|^2\,,\\
{\rm Pr}\big[{\sf Ver}(k,\ket\phi)\to0\big]&={\rm tr}(\Pi_k^0\ketbra{\phi}{\phi})=1-|\braket{\phi_k}{\phi}|^2\,.
\end{aligned}
\end{equation}
\end{enumerate}
\end{construction}

\begin{remark}[Relation between HPS assumptions]
Note that the Decision HPS assumption (Assumption~\ref{ass:decision-HPS}) implies the Search HPS assumption (Assumption~\ref{ass:search-HPS}). Additionally, it can be that the functions $q(n)$ and $m(n)$ required to satisfy the Search HPS assumption (and therefore have a OWSG) are smaller than those required to satisfy the Decision HPS assumption (and therefore have a PRSG). Given that the Search HPS assumption is the weaker assumption, we focus from here on 
showing that the Search HPS assumption implies OWFs, which immediately yields the same conclusion for the Decision HPS assumption.
\end{remark}

\section{State certification}\label{ss:state certification-prelim}

The notion of a "measure first, ask later" state certification protocol is central to this work. In order to define this rigorously, and to provide intuition for the definition, we start with the standard definition of a state certification protocol for an unknown target state~\cite{badescu2017quantumstatecertification}.

\begin{definition}[State certification protocol for $|\phi\rangle$] We say that an algorithm $\mathcal{A}$ is a state certification protocol for an $n$-qubit state $|\phi\rangle$ from $t(n,\epsilon,\delta)$ copies if, for all states $|\psi\rangle$ and $(\epsilon,\delta)\in (0,1)$:
\begin{enumerate}
\item If $|\psi\rangle = |\phi\rangle$ then 
\begin{equation}
\mathrm{Pr}\left[\mathcal{A}\left(|\psi\rangle^{\otimes t}\right) = \mathsf{Accept}\right] \geq 1-\delta
\end{equation}
\item If $|\langle \psi|\phi\rangle|^2 < 1-\epsilon$ then 
\begin{equation}
\mathrm{Pr}\left[\mathcal{A}\left(|\psi\rangle^{\otimes t}\right) = \mathsf{Reject}\right] \geq 1-\delta
\end{equation}
\end{enumerate}
\end{definition}
We note that for \textit{any} fixed state $|\phi\rangle$ there is a trivial state certification protocol with \textit{constant} copy complexity with respect to $n$: Given $O(\epsilon^{-2}\log(\delta^{-1}))$ copies of $|\psi\rangle$ simply perform the projective measurement $\{|\phi\rangle\langle \phi|, \mathds{1}-|\phi\rangle\langle \phi|\}$ on each copy, use this to estimate $|\langle \psi|\phi\rangle|^2$ sufficiently accurately, then make the appropriate decision. In other words, when one is allowed to measure in a $|\phi\rangle$-dependent way -- i.e.,  in a way which \textit{depends on the target state} -- then state certification is trivial. This observation makes clear that one could also consider the refined notion of a ``measure first, ask later'' state certification protocol. In such a protocol we insist that all measurements are \textit{target state independent}, and that only the classical post-processing of measurement outcomes is target state dependent. More specifically, we have the following definition:

\begin{definition}[``Measure first, ask later'' state certification protocol for $|\phi\rangle$]\label{def:mfal-for-a-single-state} Given an $n$-qubit state $|\phi\rangle$, consider a QPT algorithm $\mathcal{M}(|\psi\rangle^{\otimes t})\rightarrow s\in\{0,1\}^{g(t,n)}$ \textit{which does not depend} on $|\phi\rangle$, and let $F_{|\phi\rangle}:\{0,1\}^*\times\{0,1\}^*\times\{0,1\}^*\rightarrow \{\mathsf{Accept},\mathsf{Reject}\}$ be a \textit{classical} function  which can depend on $|\phi\rangle$. We say that $(\mathcal{M},F_{|\phi\rangle})$ is a ``measure first, ask later'' state certification protocol for $|\phi\rangle$ from $t(n,\epsilon,\delta)$ copies if, for all states $|\psi\rangle$ and $\epsilon,\delta\in (0,1)$ with finite binary encodings:
\begin{enumerate}
\item If $|\psi\rangle = |\phi\rangle$ then 
\begin{equation}
\underset{s\gets \mathcal{M}(|\psi\rangle^{\otimes t})}{\mathrm{Pr}}\left[F_{|\phi\rangle}(s,\epsilon,\delta) = \mathsf{Accept}\right] \geq 1-\delta
\end{equation}
\item If $|\langle \psi|\phi\rangle|^2 < 1-\epsilon$ then 
\begin{equation}
\underset{s\gets \mathcal{M}(|\psi\rangle^{\otimes t})}{\mathrm{Pr}}\left[F_{|\phi\rangle}(s,\epsilon,\delta) = \mathsf{Reject}\right] \geq 1-\delta
\end{equation}
\end{enumerate}
We say that $(\mathcal{M},F_{|\phi\rangle})$ is 
\begin{enumerate}
\item \textit{Copy efficient} if $t=\mathrm{poly}(n,\epsilon^{-1},\log \delta^{-1})$ is sufficient,
\item  \textit{Computationally efficient from $\mathsf{O}$-access} if it is copy efficient and $F_{|\phi\rangle}$ is computationally efficient, given access to an oracle $\mathsf{O}(|\phi\rangle)$ which provides some notion of classical access to $|\phi\rangle$ (for example, an oracle which when queried on $x$ provides the amplitude $\langle x|\phi\rangle$).
\item \textit{Computationally efficient} if it is copy efficient and $F_{|\phi\rangle}$ is computationally efficient.
\item\textit{$\eta$-simulable} if there exists a classical PPT algorithm $\mathcal{M}_C(1^n,1^t)\rightarrow s\in\{0,1\}^{g(t,n)}$ such that 
\begin{equation}
d_{\mathrm{TV}}(\mathcal{M}_C(1^n,1^t),\mathcal{M}(|\phi\rangle^{\otimes t})) \leq \eta(n,t),
\end{equation}
for all $t\in\mathbb{N}$, where $\mathcal{M}_C(1^n,1^t)$ and $\mathcal{M}(|\phi\rangle^{\otimes t})$ are understood as distributions over $\{0,1\}^{g(t,n)}$. 
\end{enumerate} 
\end{definition}

\begin{remark}[Oracle access to the target state] We note that recent works on state certification assume different models of oracle access to the target state. For example, Ref.~\cite{HuangPreskillSoleimanifar2025} assumes an oracle which when queried on a bit string $x$, provides the amplitude $\langle x|\phi\rangle$, Ref.~\cite{coladangelo2026powerbasesrobustcopyoptimal} considers an oracle which can provide amplitudes in either the computational or Hadamard basis, and Ref.~\cite{gupta2025singlequbitmeasurementssufficecertify} considers an oracle which when queried on any sequence of single-qubit measurements and outcomes provides the probability of that specific outcome sequence. Whenever a ``measure first, ask later'' state certification protocol is efficient from $\mathsf{O}$-access \textit{and} that oracle can be efficiently classically simulated (with respect to $n$) for the specific target state, then the protocol will be computationally efficient. 
\end{remark}

We now want to generalize the definition above to a ``measure first, ask later'' state certification protocol for a \textit{set} of states. To this end, it will be helpful to note that we can combine the two conditions given in the definition above into the single condition that
\begin{equation}
\underset{s\gets \mathcal{M}(|\psi\rangle^{\otimes t})}{\mathrm{Pr}}\left[\begin{cases} F_{|\phi\rangle}(s,\epsilon,\delta)=\mathsf{Accept}\text{ if } |\psi\rangle = |\phi\rangle \\  F_{|\phi\rangle}(s,\epsilon,\delta)=\mathsf{Reject}\text{  if }|\langle \psi|\phi\rangle|^2 < 1-\epsilon \end{cases}\right] \geq 1-\delta.
\end{equation}
With this in hand, we can finally define the notion of a ``measure first, ask later'' state certification protocol for \textit{simultaneously} certifying a \textit{set} of states $\{|\phi_k\rangle\,|\,k\in [K]\}$ from a \textit{single set} of measurement outcomes $s\gets\mathcal{M}(|\psi\rangle^{\otimes t})$.

\begin{definition}[``Measure first, ask later'' state certification protocol for $\{ |\phi_k\rangle\}$]\label{def:measure-first-ask-later}
Let $\{|\phi_k\rangle\,|\, k\in \mathbb{K}\}$ be a finite set of $n$-qubit states. Consider a QPT algorithm $\mathcal{M}(|\psi\rangle^{\otimes t})\rightarrow s\in\{0,1\}^{g(t,n)}$ and let $\mathcal{F}:\{0,1\}^*\times \{0,1\}^*\times \{0,1\}^*\times \{0,1\}^* \rightarrow \{\mathsf{Accept},\mathsf{Reject}\}$ be such that $\mathcal{F}(\tilde{k},\cdot,\cdot,\cdot) = \mathsf{Reject}$ for all $\tilde{k}\notin \mathbb{K}$. We say that $(\mathcal{M},\mathcal{F})$ is a ``measure first, ask later'' state certification protocol for $\{|\phi_k\rangle\}$ from $t = t(n,\epsilon,\delta,|\mathbb{K}|)$ copies if, for all states $|\psi\rangle$ and all $\epsilon,\delta\in (0,1)$ with finite binary encodings:
\begin{equation}
\underset{s\gets \mathcal{M}(|\psi\rangle^{\otimes t})}{\mathrm{Pr}}\left[\forall \, k\in \mathbb{K} \, \begin{cases} \mathcal{F}(k,s,\epsilon,\delta)=\mathsf{Accept}\text{ if } |\psi\rangle = |\phi_k\rangle \\  \mathcal{F}(k,s,\epsilon,\delta)=\mathsf{Reject}\text{ if } |\langle \psi|\phi_k\rangle|^2 < 1-\epsilon \end{cases}\right] \geq 1-\delta.
\end{equation}
We say that $(\mathcal{M},\mathcal{F})$ is:
\begin{enumerate}
\item \textit{Copy efficient} if $t=\mathrm{poly}(n,\epsilon^{-1},\log \delta^{-1},\log |\mathbb{K}|)$ is sufficient.
\item \textit{Computationally efficient from $\mathsf{O}$-access} 
if it is copy efficient and the function $\mathcal{F}$ is computationally efficient given access to an oracle $\tilde{\mathsf{O}}$ which when queried on $(k,x)$ returns the response from querying $\mathsf{O}(|\phi_k\rangle)$ on $x$ for some oracle $\mathsf{O}(|\phi_k\rangle)$ providing some notion of classical access to $|\phi_k\rangle$.
\item \textit{Computationally efficient} if it is copy efficient and $\mathcal{F}$ is computationally efficient.
\item $\eta$-simulable if there exists a classical PPT algorithm $\mathcal{M}_C(k,1^n,1^t)\rightarrow s\in\{0,1\}^{g(t,n)}$ such that
\begin{equation}
d_\mathrm{TV}(\mathcal{M}_C(k,1^n,1^t),\mathcal{M}(|\phi_k\rangle^{\otimes t})) \leq \eta(|k|,n,t)
\end{equation}
for all $k\in\mathbb{K}$ and $t\in\mathbb{N}$, where again $\mathcal{M}_C(k,1^n,1^t)$ and $\mathcal{M}(|\phi_k\rangle^{\otimes t})$ are understood as distributions over $\{0,1\}^{g(t,n)}$.
\end{enumerate}
\end{definition}

Before we continue it is worth making a few remarks on the above definition.

\begin{remark}[On the relation to classical shadows]\label{rem:classical-shadows} We note that global Clifford shadows~\cite{Huang_2020} provides a paradigmatic \textit{copy efficient} ``measure first, ask later'' state certification protocol for \textit{any} set of states. However, for arbitrary sets of states this protocol will \textit{not} be computationally efficient or $\eta$-simulable for meaningfully small values of $\eta$. Indeed, the term "measure first, ask later" has been deliberately borrowed from existing literature on randomized measurement protocols~\cite{elben2023randomized} to signify that Definition~\ref{def:measure-first-ask-later} is meant to provide a generalization and abstraction of ``classical shadow type" state certification protocols.
\end{remark}

\begin{remark}[Relevant parameter ranges]\label{rem:paramter-ranges} Let us assume that the pair $(\mathcal{M},\mathcal{F})$ is a measurement protocol for the set of $n$-qubit states $\{|\phi_k\rangle\,|\,k\in\,\mathbb{K}\}$:
\begin{enumerate}
    \item In many settings, one will typically be happy with both $\epsilon$ and $\delta$ being some small constants. However, in the cryptographic context we consider here, where we would like failure to occur with negligible probability, we will usually fix $\epsilon=1/8$ and $\delta=2^{-n}$. Additionally, we will work only with sets of states of size $|\mathbb{K}|=2^{O(\poly(n))}$. In this case, we have that $t=\mathrm{poly}(n,\epsilon^{-1},\log \delta^{-1},\log |\mathbb{K}|)$ implies $t=O(\poly(n))$ and that computational efficiency of $\mathcal{F}$ implies it can be evaluated in time $O(\mathrm{poly}(n))$.
    \item In order to be PPT, the algorithm $\mathcal{M}_C$ appearing in the definition of $\eta$-simulability should have runtime at most $\mathrm{poly}(|k|,n,t)$ on input $(k,1^n,1^t)$. However, in the parameter ranges mentioned above -- i.e. where $t=O(\mathrm{poly}(n))$ and $K=2^{O(\poly(n))}$ -- the runtime of $\mathcal{M}_C$ will be $O(\mathrm{poly}(n))$.
\end{enumerate}
\end{remark}

\begin{remark}[Refinements of ``measure first, ask later'' state certification protocols]\label{remark:single-copy} We note that in Definition~\ref{def:measure-first-ask-later}, we have placed no restrictions on the algorithm $\mathcal{M}$ beyond the fact that it is efficient. However, one can easily define the notion of a single-copy, or non-adaptive, ``measure first, ask later'' state certification protocol, by placing the appropriate restriction on $\mathcal{M}$. Intuitively, the more restrictions one places on $\mathcal{M}$, the harder it should be to construct such a state certification protocol for a given set of states. However, perhaps surprisingly, recent work has shown that (given access to an amplitude oracle) almost all quantum states can be certified by a ``measure first, ask later'' state certification protocol with \textit{single qubit non-adaptive measurements}~\cite{HuangPreskillSoleimanifar2025}. 
\end{remark}

\begin{remark}[On $\eta$-simulability] We note that $\eta$-simulability means that for any $k\in\mathbb{K}$, one can efficiently sample (up to $\eta$ accuracy) from the distribution over measurement outcomes obtained by measuring $|\phi_k\rangle^{\otimes t}$ via $\mathcal{M}$. While the utility or relevance of this property may not be as immediately clear as the efficiency properties, we will see in the following sections that simulability of a measurement protocol is what allows us to obtain a OWF from the OWP constructed via "measure first, ask later" state certification protocols.
\end{remark}

\section{Certifiable Microcrypt primitives}\label{s:certifiable-microcrypt}

Equipped with the notion of a ``measure first, ask later'' state certification protocol for a set of states we can finally define \textit{certifiable} variants of OWSG and PRSG. 

\begin{definition}[Certifiable one-way state generator (C-OWSG) and certifiable pseudorandom state generator (C-PRSG)]\label{def:certifiable-microcrypt-primitives}
Consider a one-way state generator $({\sf KeyGen},{\sf StateGen}, {\sf Ver})$ (or pseudorandom state generator $({\sf KeyGen},{\sf StateGen})$),  with $\mathbb K_\lambda\coloneqq{\rm supp}({\sf KeyGen}(1^\lambda))$ and $\ket{\phi_k} = {\sf StateGen}(k)$ for all $k\in \mathbb{K}_\lambda$. Given a ``measure first, ask later'' state certification protocol $(\mathcal{M},\mathcal{F})$ for the set of states $\{|\phi_k\rangle\,|\,k\in \mathbb{K}_\lambda\}$, we say that the tuple $\big({\sf KeyGen},{\sf StateGen}, {\sf Ver}),(\mathcal{M},\mathcal{F})\big)$ (or tuple $\big({\sf KeyGen},{\sf StateGen}),(\mathcal{M},\mathcal{F})\big)$) is: 
\begin{enumerate}
\item A \textit{certifiable} one-way state generator (pseudorandom state generator) if $(\mathcal{M},\mathcal{F})$ is copy efficient.
\item An \textit{$\eta$-simulable certifiable} one-way state generator  (pseudorandom state generator)  if $(\mathcal{M},\mathcal{F})$ is copy efficient and $\eta$-simulable.
\item An \textit{efficiently certifiable} one-way state generator (pseudorandom state generator) if $(\mathcal{M},\mathcal{F})$ is computationally efficient. 
\item An \textit{$\eta$-simulable efficiently certifiable} one-way state generator (pseudorandom state generator) if $(\mathcal{M},\mathcal{F})$ is computationally efficient and $\eta$-simulable. 
\end{enumerate}
\end{definition}

A-priori the utility of distinguishing carefully between all the certifiable OWSG variants defined above may not be clear. However, we show in the following sections that each of the above variants can be used to construct different cryptographic primitives. In particular:

\begin{enumerate}
\item Certifiable OWSGs can be used to construct OWPs (see Theorem~\ref{thm:owp-from-certifiable-owsg}).
\item Efficiently certifiable OWSGs can be used to construct EV-OWPs (see Corollary~\ref{cor:ev-owp-from-eff-certifiable-owsg}).
\item $1/3$-simulable certifiable OWSGs can be used to construct OWFs via distributional OWFs (see Theorem~\ref{thm:owf-via-sim-certOWSG}).
\item $\eta$-simulable efficiently certifiable OWSGs can be used to directly construct OWFs (without going via distributional OWFs) whenever $\eta(|k|,n,t)=\mathsf{negl}(\lambda)$ for $|k|,n$ and $t$ which are all $O(\mathrm{poly}(\lambda))$. 
\end{enumerate}
Given the above, we see that the same OWSGs can be used to construct a variety of different cryptographic primitives (in different cryptographic worlds), depending on the properties of the "measure first, ask later" state certification protocol with which it is equipped! As such, all the variants of certifiable OWSG defined above really should be considered as different primitives.

\begin{remark}[All OWSG are certifiable] Recall from  Remark~\ref{rem:classical-shadows} that global Clifford shadows provides a copy-efficient ``measure first, ask later'' state certification protocol for \textit{any} set of states. Therefore, \textit{any} OWSG (PRSG) equipped with the state certification protocol provided by global Clifford shadows is immediately a certifiable OWSG (PRSG). Using the results mentioned above (in particular Theorem~\ref{thm:owp-from-certifiable-owsg}) this means that one can construct a OWP from any OWSG, by using the global Clifford shadows state certification protocol. Indeed, this is precisely how Khurana and Tomer have proven that OWSG can be used to construct OWP~\cite{khurana2024commitmentsquantumonewayness}.
\end{remark}

\begin{remark}[Further variants of certifiable primitives] As per Remark~\ref{remark:single-copy}, it is clear that one can define restricted variations of ``measure first, ask later'' state certification protocols, and using these one could define variants of certifiable OWSG (PRSG) such as single-copy and non-adaptive certifiable and efficiently certifiable OWSG and PRSG. Given that the more restrictions one places on the state certification protocol the harder it is to construct the primitive, one could conjecture that it is possible to construct more powerful cryptographic protocols from restricted variants of the certifiable microcrypt primitives we introduce here.
\end{remark}

Finally, with the above established, we make the observation that variants of certifiable \textit{standard} PRSG (i.e.,  with super-logarithmic stretch) can be used to construct variants of certifiable OWSG. This follows directly from the fact that given a standard PRSG $(\mathsf{KeyGen},\mathsf{StateGen})$ (i.e.,  with stretch $n(\lambda)=\omega(\log \lambda)$) the tuple $(\mathsf{KeyGen},\mathsf{StateGen},\mathsf{Ver})$ with the standard verification algorithm \textit{is} a OWSG~\cite{Cavalar_2025}. Specifically, as an immediate corollary of this prior result, together with the definitions of certifiable PRSG and OWSG, we have the following:

\begin{corollary}[Certifiable PRSG imply certifiable OWSG]\label{thm:cert-prsg-imply-cert-owsg}
Let $(\mathsf{KeyGen},\mathsf{StateGen})$ be a \textit{standard} pseudorandom state generator -- i.e.,  with stretch $n(\lambda) = \omega(\log \lambda)$. Let $\mathsf{Ver}$ be as per Definition~\ref{def:owsg}. Then:
\begin{enumerate}
    \item If $((\mathsf{KeyGen},\mathsf{StateGen}),(\mathcal{M},\mathcal{F}))$ is a (efficiently) certifiable PRSG then $((\mathsf{KeyGen},\mathsf{StateGen},\mathsf{Ver}),(\mathcal{M},\mathcal{F}))$ is a (efficiently) certifiable OWSG
    \item If $((\mathsf{KeyGen},\mathsf{StateGen}),(\mathcal{M},\mathcal{F}))$ is an $\eta$-simulable (efficiently) certifiable PRSG then $((\mathsf{KeyGen},\mathsf{StateGen},\mathsf{Ver}),(\mathcal{M},\mathcal{F}))$ is an $\eta$-simulable (efficiently) certifiable OWSG
\end{enumerate}
\end{corollary}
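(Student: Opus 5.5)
The plan is to observe that every certifiability property in Definition~\ref{def:certifiable-microcrypt-primitives} is a property of the pair $(\mathcal{M},\mathcal{F})$ relative to the set of output states $\{|\phi_k\rangle\,|\,k\in\mathbb{K}_\lambda\}$ alone. That set is determined entirely by $(\mathsf{KeyGen},\mathsf{StateGen})$. Consequently, the only non-trivial ingredient is that the underlying triple $(\mathsf{KeyGen},\mathsf{StateGen},\mathsf{Ver})$ is a genuine OWSG, and for this I will invoke the prior result~\cite{Cavalar_2025} quoted just before the statement: a standard PRSG with stretch $n(\lambda)=\omega(\log\lambda)$, equipped with the canonical projective verification of Definition~\ref{def:owsg}, is a OWSG.

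Concretely, I first check that the OWSG correctness condition holds. Since $\mathsf{Ver}(k,\cdot)$ measures $\{|\phi_k\rangle\langle\phi_k|,\mathbbm{1}-|\phi_k\rangle\langle\phi_k|\}$ and $\mathsf{StateGen}(k)$ outputs the pure state $|\phi_k\rangle$, acceptance occurs with probability exactly $1$.

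For security I rely on~\cite{Cavalar_2025}. For intuition, the argument runs as follows. An adversary that outputs $k'$ with $|\langle\phi_{k'}|\phi_k\rangle|^2$ non-negligible in expectation yields a distinguisher. The distinguisher uses $t$ copies to obtain $k'$, prepares nothing further, and measures one extra copy with $\{|\phi_{k'}\rangle\langle\phi_{k'}|,\cdot\}$ via $\mathsf{StateGen}(k')^\dagger$. On Haar-random input, the probability of accepting is at most $\max_{k'}\mathbb{E}_\psi|\langle\phi_{k'}|\psi\rangle|^2=2^{-n}$, which is negligible precisely because $n=\omega(\log\lambda)$. This gap contradicts pseudorandomness.

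Next, I transfer the certification data unchanged. The key set $\mathbb{K}_\lambda=\mathrm{supp}(\mathsf{KeyGen}(1^\lambda))$ and the map $k\mapsto|\phi_k\rangle$ are identical in the PRSG and the OWSG. Hence $(\mathcal{M},\mathcal{F})$ remains a ``measure first, ask later'' protocol for the same set, with the same copy complexity. Copy efficiency, computational efficiency of $\mathcal{F}$, and $\eta$-simulability (with the same classical simulator $\mathcal{M}_C$ and the same function $\eta$) all carry over verbatim, since each is stated purely in terms of $\mathcal{M}$, $\mathcal{F}$ and $\{|\phi_k\rangle\}$. Matching items of Definition~\ref{def:certifiable-microcrypt-primitives} then gives both parts of the corollary.

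I expect no real obstacle here. The only subtle point is the quantifier in OWSG security, which is over QPT adversaries receiving $t=\poly(\lambda)$ copies. I must check that the reduction preserves this, since it uses $t+1$ copies, which remains polynomial. I also need the stretch $n(\lambda)=\omega(\log\lambda)$ so that $2^{-n}$ is negligible; this is exactly why the word \emph{standard} appears in the hypothesis.
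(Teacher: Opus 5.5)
Your proposal is correct and follows the paper's argument: the OWSG property comes from the cited result~\cite{Cavalar_2025} that a standard PRSG with the canonical projective $\mathsf{Ver}$ is a OWSG, and each certifiability property (copy efficiency, computational efficiency, $\eta$-simulability) carries over unchanged because it depends only on $(\mathcal{M},\mathcal{F})$ and on the state set $\{|\phi_k\rangle\}_{k\in\mathbb{K}_\lambda}$, which is the same for the PRSG and the OWSG. One slip in your intuition sketch: $k'$ is computed from $t$ copies of the same Haar state $\psi$, so the Haar-case acceptance probability is not bounded by $\max_{k'}\mathbb{E}_\psi|\langle\phi_{k'}|\psi\rangle|^2=2^{-n}$ but by the optimal pure-state estimation fidelity $(t+1)/(t+2^n)$, which is still negligible for $t=\mathrm{poly}(\lambda)$ and $n=\omega(\log\lambda)$.
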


\section{One-way puzzles from certifiable one-way state generators}\label{s:owp-from-cert-owsg}

Equipped with the tools of the previous section, we start here by proving that one can construct a OWP from \textit{any} certifiable OWSG. This generalizes and abstracts the construction of OWP from OWSG by Khurana and Tomer~\cite{khurana2024commitmentsquantumonewayness}, which utilized the \textit{specific} ``measure first, ask later'' state certification protocol provided by global Clifford shadows. In particular, the theorem we provide below simply makes it clear that one can replace the global Clifford shadows protocol used in Ref.~\cite{khurana2024commitmentsquantumonewayness} with \textit{any} copy-efficient ``measure first, ask later'' state certification protocol for the output states of the starting OWSG. As already mentioned, this is helpful because if the state certification protocol satisfies additional properties -- such as being \textit{computationally efficient} or \textit{$\eta$-simulable} -- then the OWP will also inherit these properties. As such, the result below provides a route to the construction of \textit{variants} of OWPs -- by utilizing \textit{tailored} state certification protocols for the output states of the input OWSG --  which one could  not hope to obtain by only ever using global Clifford shadows to construct OWPs from OWSGs.

\begin{theorem}[One-way puzzles from certifiable one-way state generators]\label{thm:owp-from-certifiable-owsg} Given a certifiable one-way state generator, one can construct a one-way puzzle.
\end{theorem}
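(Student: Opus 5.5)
The plan is to follow the Khurana--Tomer construction, with the global Clifford shadow protocol replaced by the given copy-efficient protocol $(\mathcal{M},\mathcal{F})$.

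\textbf{Construction.} Fix $\epsilon=1/8$ and $\delta=2^{-n}$. Let $t=t(n,\epsilon,\delta,|\mathbb{K}_\lambda|)=\poly(\lambda)$; this holds because $\log|\mathbb{K}_\lambda|=\poly(\lambda)$ and $n=\poly(\lambda)$. Define the puzzle as follows.
\begin{itemize}
\item $\mathsf{Samp}(1^\lambda)$: sample $k\gets\mathsf{KeyGen}(1^\lambda)$, prepare $|\phi_k\rangle^{\otimes t}$ using $\mathsf{StateGen}$, run $s\gets\mathcal{M}(|\phi_k\rangle^{\otimes t})$, and output $(k,s)$.
\item $\mathsf{Ver}(k,s)=1$ if and only if $\mathcal{F}(k,s,\epsilon,\delta)=\mathsf{Accept}$.
\end{itemize}
$\mathsf{Samp}$ is QPT since $t$ is polynomial and $\mathcal{M}$ is QPT. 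Correctness is immediate from Definition~\ref{def:measure-first-ask-later} with $|\psi\rangle=|\phi_k\rangle$: the pair is accepted except with probability at most $2^{-n}$. Since $n=\omega(\log\lambda)$ or $n=\poly(\lambda)$, this failure probability is negligible. If $n$ could be small relative to $\lambda$, I would instead take $\delta=2^{-\lambda}$, which keeps $t$ polynomial.

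\textbf{Security via reduction.} Suppose a QPT adversary $\mathcal{A}$ breaks the puzzle with non-negligible probability $p(\lambda)$. Build an OWSG adversary $\mathcal{B}$ that receives $|\phi_k\rangle^{\otimes t}$, computes $s\gets\mathcal{M}(|\phi_k\rangle^{\otimes t})$, and outputs $k'\gets\mathcal{A}(1^\lambda,s)$. The distribution of $(k,s)$ that $\mathcal{B}$ feeds to $\mathcal{A}$ is exactly that of $\mathsf{Samp}$, so $\mathcal{F}(k',s,\epsilon,\delta)=\mathsf{Accept}$ with probability $p$.

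The key step is to use the simultaneous guarantee of Definition~\ref{def:measure-first-ask-later}. For each fixed $k$, apply the definition with $|\psi\rangle=|\phi_k\rangle$. Except with probability $\delta$ over $s$, every $\tilde k\in\mathbb{K}$ with $|\langle\phi_k|\phi_{\tilde k}\rangle|^2<1-\epsilon$ is rejected, and every $\tilde k\notin\mathbb{K}$ is rejected by definition. Call the complement the ``good'' event. On the good event, any accepted $k'$ satisfies $|\langle\phi_{k'}|\phi_k\rangle|^2\ge 7/8$, and this holds even though $k'$ depends adversarially on $s$. Therefore
\[
\Pr\big[|\langle\phi_{k'}|\phi_k\rangle|^2\ge 7/8\big]\ge p-\delta .
\]
It follows that the OWSG verification succeeds with probability at least $\tfrac78(p-2^{-n})$, which is non-negligible. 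This contradicts OWSG security, since $\mathcal{B}$ uses $t=\poly(\lambda)$ copies.

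\textbf{Main obstacle.} The delicate point is the quantifier order: $k'$ is chosen after seeing $s$, so a per-$k'$ guarantee would not suffice. The ``$\forall k\in\mathbb{K}$ inside the probability'' form of the definition is exactly what handles this, and I would spell it out carefully. A secondary subtlety is that $\mathsf{Ver}$ for the OWSG is a projective measurement on a fresh copy, so I would note that $\mathcal{B}$ only needs to output $k'$ and the challenger measures on its own copy. Finally, I would keep the argument robust when $\mathsf{KeyGen}$ is quantum by conditioning on the sampled $k$, since all guarantees above are pointwise in $k$.
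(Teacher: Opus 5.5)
Your proposal is correct and is essentially the paper's proof. It uses the same construction, the same ``$\forall k$ inside the probability'' certification guarantee to bound the accepted-but-far event by $\delta$, and the same Markov step $\mathbb{E}[|\langle\phi_{k'}|\phi_k\rangle|^2]\ge\frac78\Pr[|\langle\phi_{k'}|\phi_k\rangle|^2\ge 7/8]$ against OWSG security for the adversary $\mathcal{A}\circ\mathcal{M}$, phrased as a contrapositive rather than a direct union bound. One small correction: $n=\poly(\lambda)$ is only an upper bound and does not make $2^{-n}$ negligible (e.g.\ if $n=O(\log\lambda)$), so fix $\delta=2^{-\lambda}$ from the start, as the paper does and as your fallback already allows.
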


\begin{proof}Given a C-OWSG $(({\sf KeyGen},{\sf StateGen}, {\sf Ver}),(\mathcal{M},\mathcal{F}))$ with copy efficient ``measure first, ask later'' state certification protocol $(\mathcal{M},\mathcal{F})$ from $t(n,\epsilon,\delta,|\mathbb{K}|) = \mathrm{poly}(n,1/\epsilon,\log(1/\delta),\log|\mathbb{K}|)$ copies, consider the following construction.

\begin{construction}[OWP from C-OWSG]\label{con:OWP-from-OWSG-state certification} Given a C-OWSG $({\sf KeyGen},{\sf StateGen}, {\sf Ver})$ with copy efficient ``measure first, ask later'' state certification protocol $(\mathcal{M},\mathcal{F})$ from $t(n,\epsilon,\delta,|\mathbb{K}|) = \mathrm{poly}(n,1/\epsilon,\log(1/\delta),\log|\mathbb{K}|)$ copies

\begin{enumerate}
\item Define $\mathbb K_\lambda={\rm supp}({\sf KeyGen}(1^\lambda))$.
\item For all $k\in\mathbb K_\lambda$\,, define $\ket{\phi_k}$ as the $n=\poly(\lambda)$-qubit state $\ket{\phi_k} = {\sf StateGen}(k)$.
\end{enumerate}
By definition, $(\mathcal{M},\mathcal{F})$ is a copy efficient ``measure first, ask later'' state certification protocol for $\{|\phi_k\rangle\}_{k\in\mathbb K_\lambda}$. 
With this in hand, define the algorithms $({\sf SampPuzz},{\sf VerPuzz})$ as follows:
\begin{enumerate}
\item $\sf SampPuzz$. On input $1^\lambda$:
\begin{enumerate}
\item Run $\mathsf{KeyGen}(1^\lambda)$ and obtain a key $k\in \mathbb{K}_\lambda$.
\item Run $\mathsf{StateGen}(k)$ $t=t(n,\epsilon=1/8,\delta = 2^{-\lambda}, |\mathbb K_\lambda|)$ times, and construct $|\phi_k\rangle^{\otimes t}$.
\item Run $\mathcal{M}(\ket{\phi_k}^{\otimes t})$ and obtain $s$.
\item Output $(k,s)$.
\end{enumerate}
\item  $\sf VerPuzz$ is defined via
\begin{equation*}
    {\sf VerPuzz}(k,s)=\begin{cases}
        1,&\mathcal{F}(k,s,1/8,2^{-\lambda}) =\mathsf{Accept},\\
        0,&\mathcal{F}(k,s,1/8,2^{-\lambda}) =\mathsf{Reject}.
    \end{cases}
\end{equation*}
\end{enumerate}
Define the output OWP as the tuple $({\sf SampPuzz},{\sf VerPuzz})$.
\end{construction}

We now prove that the algorithms $({\sf SampPuzz},{\sf VerPuzz})$ defined in Construction~\ref{con:OWP-from-OWSG-state certification} constitute a OWP.

To this end,  we start by proving that $\sf SampPuzz$ is a QPT algorithm with respect to $\lambda$. To do this note:
\begin{enumerate}
\item As a consequence of the fact that $\sf{KeyGen}$ is a QPT algorithm one has $|\mathbb{K}_\lambda| = 2^{\mathrm{poly}(\lambda)}$ and $\log |\mathbb K_\lambda| =\poly (\lambda)$.
\item As a consequence of the fact that $\sf{StateGen}$ is a QPT algorithm one has $n=\mathrm{poly}(\lambda)$.
\item Therefore via the fact that $(\mathcal{M},\mathcal{F})$ is a copy efficient measurement protocol from 
\begin{equation}
t(n,\epsilon,\delta,|\mathbb{K}|) = \mathrm{poly}(n,1/\epsilon,\log(1/\delta),\log|\mathbb{K}|)
\end{equation}
copies,  we have that $t(n,1/8,2^{-\lambda},|\mathbb{K}_\lambda|) = \poly(\lambda)$.
\item Together with the assumption that $\mathcal{M}$ is a QPT algorithm (implicit in the assumption that $(\mathcal{M},\mathcal{F})$ is a state certification protocol), this implies that $\sf SampPuzz$ is a QPT algorithm.
\end{enumerate}

Now, let us establish the correctness and security of the pair $({\sf SampPuzz},{\sf VerPuzz})$. In what follows, for convenience we will drop the $\epsilon$ and $\delta$ arguments from $\mathcal{F}$ with the understanding that $\mathcal{F}(k,s) \coloneqq \mathcal{F}(k,s,1/8,2^{-\lambda})$ for all $(k,s)$. With this in mind, we start by noting that with $t = t(n,\epsilon=1/8,\delta = 2^{-\lambda},|\mathbb{K}_\lambda|)$ one has 

\begin{equation}\label{eq:certification-in-owp}
\underset{s\gets \mathcal{M}(|\psi\rangle^{\otimes t})}{\mathrm{Pr}}\left[\forall \, k\in \mathbb{K}_\lambda \, \begin{cases} \mathcal{F}(k,s)=\mathsf{Accept}\text{ if } |\psi\rangle = |\phi_k\rangle \\  \mathcal{F}(k,s)=\mathsf{Reject}\text{ if } |\langle \psi|\phi_k\rangle|^2 < 7/8 \end{cases}\right] > 1-2^{-\lambda}.
\end{equation}
We can now establish correctness and security.

{\em Correctness}: Using the properties of the state certification protocol $(\mathcal{M},\mathcal{F})$ we have:
    \begin{align*}
    \underset{(k,s)\,\gets\,{\sf SampPuzz}(1^\lambda)}{\rm Pr}\big[{\sf VerPuzz}(k,s)=1\big]&= \underset{\substack{
    k\,\gets\,{\sf KeyGen}(1^\lambda)\,\,\\
    s\,\gets\,{\mathcal M}({\ket{\phi_k}^t})
    }}{\rm Pr}\big[\mathcal{F}(k,s)=\mathsf{Accept}\big] \\
    &= \sum_{k\in \mathbb{K}_\lambda}\left[\mathrm{Pr}(k)\left(\underset{s\,\gets\,{\mathcal M}({\ket{\phi_k}^t})
    }{\rm Pr}\big[\mathcal{F}(k,s)=\mathsf{Accept}\big]\right)\right] \\
    &\geq \sum_{k\in \mathbb{K}_\lambda}\left[\mathrm{Pr}(k)\left(\underset{s\,\gets\,{\mathcal M}({\ket{\phi_k}^t})
    }{\mathrm{Pr}}\left[\forall \, j\in \mathbb{K}_\lambda \, \begin{cases} \mathcal{F}(j,s)=\mathsf{Accept}\text{ if } |\phi_k\rangle = |\phi_j\rangle \\  \mathcal{F}(j,s)=\mathsf{Reject}\text{ if } |\langle \phi_k|\phi_j\rangle|^2 < 7/8 \end{cases}\right]\right)\right]\\
    &> \sum_{k\in \mathbb{K}_\lambda}\left[\mathrm{Pr}(k)(1-2^{-\lambda})\right]\\
    &= 1- 2^{-\lambda}\,.
    \end{align*}

{\em Security}: Note that
    \begin{align}
    \underset{\substack{
    (k,s)\,\gets\,{\sf SampPuzz}(1^\lambda)\,\,\\
    k'\,\gets\,{\mathcal A}(s)
    }}{\rm Pr}\big[{\sf VerPuzz}(k',s)=1\big]&=\underset{\substack{
    (k,s)\,\gets\,{\sf SampPuzz}(1^\lambda)\,\,\\
    k'\,\gets\,{\mathcal A}(s)
    }}{\rm Pr}\big[\mathcal{F}(k',s)=\mathsf{Accept}\big]\\
    &=\underset{\substack{
    (k,s)\,\gets\,{\sf SampPuzz}(1^\lambda)\,\,\\
    k'\,\gets\,{\mathcal A}(s)
    }}{\rm Pr}\left[A(k,s,k')\lor B(k,s,k')\right],\nonumber
    \end{align}
    where
    \begin{equation}
    \begin{aligned}
        A(k,s,k')&\coloneqq\left[\mathcal{F}(k',s)=\mathsf{Accept}\right]\land\left[|\braket{\phi_{k'}}{\phi_k}|^2<7/8\right],\\
        B(k,s,k')&\coloneqq\left[\mathcal{F}(k',s)=\mathsf{Accept}\right]\land\left[|\braket{\phi_{k'}}{\phi_k}|^2\ge7/8\right].
    \end{aligned}
    \end{equation}
Now, using the guarantee of the state certification protocol $(\mathcal{M},\mathcal{F})$, one has
\begin{align}
\underset{\substack{
    (k,s)\,\gets\,{\sf SampPuzz}(1^\lambda)\,\,\\
    k'\,\gets\,{\mathcal A}(s)
    }}{\rm Pr}\big[A(k,s,k')\big] &= \underset{\substack{
    k\,\gets\,{\sf KeyGen}(1^\lambda)\,\,\\
    s\,\gets\,{\mathcal M}({\ket{\phi_k}^t})\\k'\gets\mathcal{A}(s)
    }}{\rm Pr}\Big[\left[\mathcal{F}(k',s)=\mathsf{Accept}\right]\land\left[|\braket{\phi_{k'}}{\phi_k}|^2<7/8\right]\Big]\\
      \nonumber
    &\leq \underset{\substack{
    k\,\gets\,{\sf KeyGen}(1^\lambda)\,\,\\
    s\,\gets\,{\mathcal M}({\ket{\phi_k}^t})
    }}{\rm Pr}\Big[\exists\, k'\in \mathbb{K}_\lambda \left(\left[\mathcal{F}(k',s)=\mathsf{Accept}\right]\land\left[|\braket{\phi_{k'}}{\phi_k}|^2<7/8\right]\right)\Big]\\
      \nonumber
    &\leq 2^{-\lambda}.
\end{align}
Additionally, 
\begin{align}
\underset{\substack{
    (k,s)\,\gets\,{\sf SampPuzz}(1^\lambda)\,\,\\
    k'\,\gets\,{\mathcal A}(s)
    }}{\rm Pr}\big[B(k,s,k')\big] &= \underset{\substack{
    k\,\gets\,{\sf KeyGen}(1^\lambda)\,\,\\
    s\,\gets\,{\mathcal M}({\ket{\phi_k}^t})\\k'\gets\mathcal{A}(s)
    }}{\rm Pr}\Big[\left[\mathcal{F}(k',s)=\mathsf{Accept}\right]\land\left[|\braket{\phi_{k'}}{\phi_k}|^2\ge7/8\right]\Big]\\
    &\leq \underset{\substack{
    k\,\gets\,{\sf KeyGen}(1^\lambda)\,\,
      \nonumber
      \\
     \nonumber
    s\,\gets\,{\mathcal M}({\ket{\phi_k}^t})\\k'\gets\mathcal{A}(s)
    }}{\rm Pr}\left[|\braket{\phi_{k'}}{\phi_k}|^2\ge7/8\right]\\
    &\coloneqq \beta.  \nonumber
\end{align}
We now show that $\beta\le{\rm negl}(\lambda)$. To see this, recall from Definition~\ref{def:owsg} that ${\sf Ver}(k,\ket\psi)$ corresponds to the outcome of the projective measurement $\{\ketbra{\phi_k}{\phi_k},\mathbbm{1}-\ketbra{\phi_k}{\phi_k}\}$ applied to the state $\ket\psi$. Then, for all QPT algorithms $\mathcal{A}$ one has
\begin{equation}
{\rm negl}(\lambda)\ge\underset{\substack{
    k\,\gets\,{\sf KeyGen}(1^\lambda)\,\,\\
    k'\,\gets\,\mathcal A\circ\mathcal M(\ket{\phi_k}^{\otimes t})
    }}{\rm Pr}\Big[{\sf Ver}(k',\ket{\phi_k})=1\Big]=\underset{\substack{
    k\,\gets\,{\sf KeyGen}(1^\lambda)\,\,\\
    k'\,\gets\,\mathcal A\circ\mathcal M(\ket{\phi_k}^{\otimes t})
    }}{\mathbb E}\Big[|\braket{\phi_{k'}}{\phi_k}|^2\Big]\ge\frac{7}{8}\beta
\end{equation}
where the first inequality follows from the security of the OWSG, and the final inequality follows from Markov's inequality. Therefore, $\beta\le{\rm negl}(\lambda)$. By a union bound, we then have
\begin{equation}
\underset{\substack{
    (k,s)\,\gets\,{\sf SampPuzz}(1^\lambda)\,\,\\
    k'\,\gets\,{\mathcal A}(s)
    }}{\rm Pr}\big[{\sf VerPuzz}(k',s)=1\big] \leq \mathrm{negl}(\lambda).
\end{equation}
\end{proof}

With the above established, we obtain as promised an immediate corollary that if the input OWSG is \textit{efficiently} certifiable, then Construction~\ref{con:OWP-from-OWSG-state certification} actually yields an \textit{efficiently verifiable} OWP. Specifically, we have the following.

\begin{corollary}[Efficiently verifiable one-way puzzle from efficiently certifiable one-way state generator]\label{cor:ev-owp-from-eff-certifiable-owsg} Given an efficiently certifiable one-way state generator one can construct an efficiently verifiable one-way puzzle.
\end{corollary}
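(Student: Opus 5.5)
The plan is to reuse Construction~\ref{con:OWP-from-OWSG-state certification} verbatim on the given efficiently certifiable OWSG $(({\sf KeyGen},{\sf StateGen},{\sf Ver}),(\mathcal{M},\mathcal{F}))$. By Definition~\ref{def:certifiable-microcrypt-primitives}, computational efficiency of $(\mathcal{M},\mathcal{F})$ includes copy efficiency. Hence the tuple is in particular a certifiable OWSG, and Theorem~\ref{thm:owp-from-certifiable-owsg} already gives that $({\sf SampPuzz},{\sf VerPuzz})$ is a OWP: it is correct, it is secure against QPT adversaries, and ${\sf SampPuzz}$ is QPT. The only thing left to show is that ${\sf VerPuzz}$ is efficiently computable in $\lambda$.

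The key step is a runtime bound for ${\sf VerPuzz}(k,s)=\mathcal{F}(k,s,1/8,2^{-\lambda})$. Computational efficiency of $\mathcal{F}$ means it runs in time polynomial in the length of its inputs (equivalently in $n,\log|\mathbb{K}_\lambda|,\epsilon^{-1},\log\delta^{-1}$, as in Remark~\ref{rem:paramter-ranges}). I would bound each of these quantities in turn:
\begin{itemize}
\item $|k|=\poly(\lambda)$, because ${\sf KeyGen}$ is QPT.
\item $n=\poly(\lambda)$ and $\log|\mathbb{K}_\lambda|=\poly(\lambda)$, as already noted in the proof of Theorem~\ref{thm:owp-from-certifiable-owsg}.
\item $t=\poly(\lambda)$, so $s\in\{0,1\}^{g(t,n)}$ has polynomial length, since $\mathcal{M}$ is QPT and therefore writes at most polynomially many bits.
\item The encodings of $\epsilon=1/8$ and $\delta=2^{-\lambda}$ have size $O(\lambda)$.
\end{itemize}
Composing these bounds shows that ${\sf VerPuzz}$ runs in time $\poly(\lambda)$. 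I would also check that ${\sf VerPuzz}$ is \emph{deterministic}, since $\mathcal{F}$ is a function, as the definition of an EV-OWP requires.

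The main subtlety is an input outside the support. ${\sf VerPuzz}$ must be efficient on arbitrary pairs $(k',s)$ submitted by an adversary, including $k'\notin\mathbb{K}_\lambda$ or strings $s$ of the wrong length. Definition~\ref{def:measure-first-ask-later} forces $\mathcal{F}(\tilde k,\cdot)=\mathsf{Reject}$ for $\tilde k\notin\mathbb{K}$, but membership in $\mathbb{K}_\lambda$ might not itself be efficiently decidable. I would handle this by reading computational efficiency of $\mathcal{F}$ as efficiency on all inputs, so that this rejection is itself computed efficiently. I would also have ${\sf VerPuzz}$ reject immediately any input whose length exceeds the polynomial bounds above. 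Neither change affects correctness or security, because honest pairs satisfy the bounds and rejecting more inputs only weakens an adversary.
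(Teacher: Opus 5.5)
Your proposal is correct and matches the paper's argument, which the paper leaves implicit by calling this an immediate corollary: apply Construction~\ref{con:OWP-from-OWSG-state certification}, invoke Theorem~\ref{thm:owp-from-certifiable-owsg} for correctness and security, and observe that ${\sf VerPuzz}(k,s)=\mathcal{F}(k,s,1/8,2^{-\lambda})$ is efficient because $\mathcal{F}$ is. Your extra care about keys outside $\mathbb{K}_\lambda$ and about length bounds is a harmless refinement that the paper does not spell out. Rejecting more inputs can only lower an adversary's success probability, and honest pairs are unaffected.
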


\section{One-way functions from simulable certifiable one-way state generators}\label{s:owf-from-simulable-owsg}

We have already seen in the previous section that given an (efficiently) certifiable OWSG, one can construct an (efficiently verifiable) OWP. In this section, we show the following:
\begin{enumerate}
\item If one is given a $1/3$-simulable certifiable OWSG (with a classical $\mathsf{KeyGen}$ algorithm), then one can construct a quantum-secure OWF from the OWP obtained from the OWSG. More specifically, we will show that one can construct a suitably accurate classical sampling algorithm for the OWP, and therefore via the results of Corollary~\ref{cor:classical-sampling-implies-qsecDOWF} one can construct a quantum-secure distributional OWF, from which via Theorem~\ref{thm:qsecDWOFsimplyqsecOWFS} one can construct a quantum-secure OWF.
\item If one is given an $\eta$-simulable \textit{efficiently} certifiable OWSG (with a classical $\mathsf{KeyGen}$ algorithm), for a function $\eta$ which is negligible in a sense to be made precise, then one can construct a simple quantum-secure OWF directly from the OWP obtained by the OWSG, \textit{without} having to go via a distributional OWF. In particular, this OWF avoids the complexities incurred in the previous construction due to first compiling the distributional OWF into a weak OWF, and then compiling this weak OWF into a OWF. Essentially we show here that if one has a measurement protocol which is both efficiently certifiable, and $\eta$ simulable for a ``negligible'' $\eta$, then one can obtain a simpler OWF.
\end{enumerate}

\subsection{OWFs from simulable certifiable OWSGs via distributional OWFs}\label{ss:owf-via-distributional-owfs}

We prove in this section that given a $1/3$-simulable certifiable OWSG, with a classical PPT key generation algorithm $\mathsf{KeyGen}$, one can construct a quantum-secure OWF. 

\begin{theorem}[OWF from $1/3$-simulable certifiable OWSG with PPT $\mathsf{KeyGen}$]\label{thm:owf-via-sim-certOWSG} Given a 1/3-simulable certifiable OWSG $((\mathsf{KeyGen},\mathsf{StateGen},\mathsf{Ver}),(\mathcal{M},\mathcal{F}))$, in which $\mathsf{KeyGen}$ is a classical PPT algorithm, one can construct a quantum-secure one-way function.
\end{theorem}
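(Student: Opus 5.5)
The plan is to route everything through Theorem~\ref{thm:owp-from-certifiable-owsg}, Corollary~\ref{cor:classical-sampling-implies-qsecDOWF} and Theorem~\ref{thm:qsecDWOFsimplyqsecOWFS}. First, apply Construction~\ref{con:OWP-from-OWSG-state certification} to the given certifiable OWSG. By Theorem~\ref{thm:owp-from-certifiable-owsg} this yields a OWP $(\mathsf{SampPuzz},\mathsf{VerPuzz})$ with $t=t(n,1/8,2^{-\lambda},|\mathbb{K}_\lambda|)=\poly(\lambda)$. The remaining task is to exhibit an efficient deterministic classical algorithm $\mathsf{Samp}_{\mathrm{C}}(1^\lambda,r)$ with $r\gets\{0,1\}^{m(\lambda)}$ whose output distribution is within total variation distance $1/3$ of $\mathsf{SampPuzz}(1^\lambda)$. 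Once that is in hand, Corollary~\ref{cor:classical-sampling-implies-qsecDOWF} gives that $f_\lambda(r)=s_\lambda(r)$ is a quantum-secure distributional OWF, and Theorem~\ref{thm:qsecDWOFsimplyqsecOWFS} converts it into a quantum-secure OWF.

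To build $\mathsf{Samp}_{\mathrm{C}}$, split the random tape as $r=(r_1,r_2)$. Use $r_1$ to run the classical PPT $\mathsf{KeyGen}(1^\lambda;r_1)$ and obtain $k$. Use $r_2$ as the coins of the simulator $\mathcal{M}_C(k,1^n,1^t)$ and obtain $s$. Output $(k,s)$. Because $|k|$, $n$ and $t$ are all $\poly(\lambda)$ and both algorithms are PPT, we can fix a polynomial $m(\lambda)$ that bounds the total number of coins, padding the tape with unused bits if needed. This makes $\mathsf{Samp}_{\mathrm{C}}$ deterministic in $r$, and the distribution of $k$ is exactly that of $\mathsf{KeyGen}(1^\lambda)$.

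For the distance bound, both joint distributions share the same marginal on $k$. Conditioned on $k$, the quantum sampler outputs $\mathcal{M}(|\phi_k\rangle^{\otimes t})$ and the classical one outputs $\mathcal{M}_C(k,1^n,1^t)$. Since total variation distance is convex over a common mixture,
\begin{equation}
d_\mathrm{TV}\big(\mathsf{SampPuzz}(1^\lambda),\mathsf{Samp}_{\mathrm{C}}(1^\lambda,r)_{r}\big)\le \mathbb{E}_{k\gets\mathsf{KeyGen}(1^\lambda)}\, d_\mathrm{TV}\big(\mathcal{M}_C(k,1^n,1^t),\mathcal{M}(|\phi_k\rangle^{\otimes t})\big)\le \sup_k \eta(|k|,n,t)\le 1/3 .
\end{equation}

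I expect the main subtlety to be bookkeeping rather than any conceptual step. One point is interpreting "$1/3$-simulable" as $\eta\le 1/3$ uniformly at the relevant parameters. Another is making sure $\mathcal{M}_C$ has a fixed polynomial bound on its coins and running time, so that it can be turned into a deterministic function of a uniform string of fixed length $m(\lambda)$. If $\mathcal{M}_C$ only runs in expected polynomial time, I would truncate it and absorb the small extra error, but the stated PPT assumption should make this unnecessary. The classical $\mathsf{KeyGen}$ hypothesis is exactly what allows the key part of the sampling to be reproduced classically and deterministically from $r$.
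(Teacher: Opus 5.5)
Your proposal is correct and follows essentially the same route as the paper. Like the paper, you build $\mathsf{SampPuzz}_C$ by running the classical $\mathsf{KeyGen}$ followed by the simulator $\mathcal{M}_C(k,1^n,1^t)$ in place of $\mathsf{StateGen}$ and $\mathcal{M}$, bound the distance by $1/3$ using that $k$ is sampled identically in both, and then invoke Corollary~\ref{cor:classical-sampling-implies-qsecDOWF} and Theorem~\ref{thm:qsecDWOFsimplyqsecOWFS}; your explicit derandomization via $r=(r_1,r_2)$ and the conditional total-variation averaging step just spell out details the paper leaves implicit.
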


\begin{proof} The idea of the proof is the following:
\begin{enumerate}
\item We know from Theorem~\ref{thm:owp-from-certifiable-owsg} that given a certifiable OWSG $((\mathsf{KeyGen},\mathsf{StateGen},\mathsf{Ver}),(\mathcal{M},\mathcal{F}))$, one can construct a OWP $(\mathsf{SampPuzz},\mathsf{VerPuzz})$ via Construction~\ref{con:OWP-from-OWSG-state certification}.
\item Therefore, if one can use the additional 1/3-simulability of $(\mathcal{M},\mathcal{F})$ to construct a classical PPT algorithm $\mathsf{SampPuzz_C}$ satisfying
\begin{equation}\label{eq:sampling-accuracy}
d_{\mathrm{TV}}(\mathsf{SampPuzz}(1^\lambda),\mathsf{SampPuzz}_C(1^\lambda)) \leq 1/3,
\end{equation}
then it follows from Corollary~\ref{cor:classical-sampling-implies-qsecDOWF} that one can construct an explicit quantum-secure distributional OWF.
\item Given the quantum-secure distributional OWF, it then follows from Theorem~\ref{thm:qsecDWOFsimplyqsecOWFS} that one can construct a quantum-secure OWF.
\end{enumerate}
As such, we show that one can construct a classical PPT algorithm satisfying Eq.~\eqref{eq:sampling-accuracy}. To  this end, recall from Construction~\ref{con:OWP-from-OWSG-state certification} that $\mathsf{SampPuzz}$ is defined as follows, on input $1^\lambda$:
\begin{enumerate}
\item $k\gets\mathsf{KeyGen}(1^\lambda)$ (where $\mathsf{KeyGen}$ is assumed to be classical PPT).
\item $t\gets t(n(\lambda),\epsilon=1/8,\delta = 2^{-\lambda}, |\mathbb K_\lambda|)$, where $t(n,\epsilon,\delta, |\mathbb{K}_\lambda|)$ is the copy complexity of $(\mathcal{M},\mathcal{F})$. 
\item $|\phi_k\rangle \gets \mathsf{StateGen}(k)$ 
\item $s\gets \mathcal{M}(|\phi_k\rangle^{\otimes t})$.
\item Output $(k,s)$.
\end{enumerate}
The natural idea is then to construct $\mathsf{SampPuzz}_C$ by simply replacing steps 3 and 4 above with the implementation of $\mathcal{M}_C(k,1^n,1^t)$, where $\mathcal{M}_C$ is the $\eta$-accurate simulation of $\mathcal{M}$ guaranteed by $\eta$-simulability of $(\mathcal{M},\mathcal{F})$. Specifically, define $\mathsf{SampPuzz}_C$, on input $1^\lambda$, via:
\begin{enumerate}
\item $k\gets\mathsf{KeyGen}(1^\lambda)$. 
\item $t\gets t(n(\lambda),\epsilon=1/8,\delta = 2^{-\lambda}, |\mathbb K_\lambda|)$.
\item $s\gets \mathcal{M}_{\cal C}(k,1^{n(\lambda)},1^t)$.
\item Output $(k,s)$.
\end{enumerate}

Given this, we start by proving that $\mathsf{SampPuzz}_C$ is a PPT algorithm. To see this:
\begin{enumerate}
\item Note that we have assumed that $\mathsf{KeyGen}$ is a classical PPT algorithm, and recall from Section~\ref{ss:microcrypt-primitives} that efficiency of both $\mathsf{KeyGen}$ and $\mathsf{StateGen}$ implies that $\mathbb{K}_\lambda\subseteq\{0,1\}^{O(\mathrm{poly}(\lambda))}$, and therefore both $|k| = O(\mathrm{poly}(\lambda))$ and $n(\lambda)=O(\mathrm{poly}(\lambda))$.
\item With the above, it follows from the copy efficiency of $(\mathcal{M},\mathcal{F})$ that $t(n(\lambda),\epsilon=1/8,\delta = 2^{-\lambda}, |\mathbb K_\lambda|) = O(\mathrm{poly}(\lambda))$.
\item Therefore, it follows from the fact that $\mathcal{M}_C$ is a PPT algorithm with a runtime on input $(k,1^n,1^t)$ of \\$O(\mathrm{poly}(|k|,n,t)) = O(\mathrm{poly}(\lambda))$.
\item As a result, the runtime of $\mathsf{SampPuzz}_C$ is $O(\mathrm{poly}(\lambda))$, and it is also a  PPT algorithm.
\end{enumerate}
With this established, we now have to show that $\mathsf{SampPuzz}_C$ is sufficiently accurate. To do this, recall that 
 $1/3$-simulability of $(\mathcal{M},\mathcal{F})$ enforces
\begin{equation}
d_\mathrm{TV}(\mathcal{M}_C(k,1^n,1^t),\mathcal{M}(|\phi_k\rangle^{\otimes t})) \leq 1/3,
\end{equation}
for all $(k,n,t)$, from which it follows that
\begin{equation}
d_\mathrm{TV}(\mathsf{SampPuzz}(1^\lambda),\mathsf{SampPuzz}_C(1^\lambda)) \leq 1/3,
\end{equation}
by virtue of the fact that $k$ is sampled identically in both $\mathsf{SampPuzz}$ and $\mathsf{SampPuzz}_C$.
\end{proof}

\subsection{Direct OWF construction via efficient verifiability}\label{ss:owf-directly-via-EV}

In the previous section, we have shown that given a 1/3-simulable certifiable OWSG $((\mathsf{KeyGen},\mathsf{StateGen},\mathsf{Ver}),(\mathcal{M},\mathcal{F}))$, with classical PPT $\mathsf{KeyGen}$, one can construct a quantum-secure OWF. However, the construction of this OWF involves multiple steps:
\begin{enumerate}
\item The construction of a OWP $(\mathsf{SampPuzz},\mathsf{VerPuzz})$, via Construction~\ref{con:OWP-from-OWSG-state certification}.
\item The construction of a $1/3$-accurate classical PPT sampling algorithm $\mathsf{SampPuzz}_C$. 
\item The construction of a quantum-secure \textit{distributional} OWF $\{f_\lambda\}$ from $\mathsf{SampPuzz_C}$ (As per Corollary~\ref{cor:classical-sampling-implies-qsecDOWF}).
\item The construction of a quantum-secure one-way function $\{g_\lambda\}$ via the construction underlying Theorem~\ref{thm:qsecDWOFsimplyqsecOWFS}. In  particular, this construction proceeds via first constructing a quantum-secure \textit{weak} OWF $\{h_\lambda\}$, and then constructing the quantum-secure OWF $\{g_\lambda\}$ from $\{h_\lambda\}$.
\end{enumerate}
In this section, we show that if one starts from an $\eta$-simulable \textit{efficiently certifiable} OWSG (with classical PPT $\mathsf{KeyGen}$), for a function $\eta$ which is negligible in a sense that we will make precise shortly, then a simple modification of the quantum-secure distributional OWF constructed in Step 3 above is immediately a quantum-secure OWF. In particular,  one can obtain a relatively simple OWF without the chain of compilations 
\begin{equation}
\text{distributional OWF $\rightarrow$ weak OWF $\rightarrow$ OWF}
\end{equation}
that was required in the previous section. To do this:
\begin{enumerate}
\item We start by showing in Theorem~\ref{thm:owf-from-classical-evowp} that given an \textit{efficiently verifiable} OWP with a \textit{negligibly} inaccurate classical sampling algorithm, a simple modification of the distributional OWF defined in Theorem~\ref{thm:KTowf} and Corollary~\ref{cor:classical-sampling-implies-qsecDOWF} is immediately a quantum-secure OWF.
\item We already know from Corollary~\ref{cor:ev-owp-from-eff-certifiable-owsg} that one can construct an EV-OWP from an efficiently certifiable OWSG. So, we simply show (using essentially the proof of Theorem~\ref{thm:owf-via-sim-certOWSG}) that when the input efficiently certifiable OWSG is also $\eta$-simulable, for appropriate $\eta$, then one can construct a negligibly inaccurate classical PPT sampling algorithm for the EV-OWP, and therefore by the previous point, directly construct a quantum-secure OWF.
\end{enumerate}

To achieve point 1 above, it is helpful to understand why the arguments establishing \textit{distributional} one-wayness of the function family $\{f_\lambda\}$ in Theorem~\ref{thm:KTowf}, fail to prove that the same function is a OWF. To this end, recall from Theorem~\ref{thm:KTowf} that given a OWP $(\mathsf{Samp},\mathsf{Ver})$ and an efficient deterministic \textit{classical} algorithm $\mathsf{Samp}_C$ satisfying
\begin{equation}
d_{\mathrm{TV}}\left(\mathsf{Samp}(1^\lambda),\mathsf{Samp}_C(1^\lambda,r)_{r\gets\{0,1\}^{m(\lambda)}}\right) \leq \frac{1}{3},
\end{equation}
with $\mathsf{Samp}_C(1^\lambda,r) = (k_{\lambda}(r),s_\lambda(r))$, then the function family $\{f_\lambda\}$ considered by Khurana and Tomer (and shown to be a \textit{distributional} OWF) is defined via $f_\lambda(r) = s_\lambda(r)$. 

As an illustration, let's now imagine we have the even stronger assumption that
\begin{equation}\label{eq:perfect-tv}
\mathsf{Samp}(1^\lambda) =\mathsf{Samp}_C(1^\lambda,r)_{r\gets\{0,1\}^{m(\lambda)}},
\end{equation}
so that \textit{correctness} of the OWP becomes
\begin{equation}\label{eq:new-correctness}
\underset{(k,s)\gets\mathsf{Samp}(1^\lambda)}{\mathrm{Pr}}\left[\mathsf{Ver}(k,s)=1\right] = \underset{r\gets\{0,1\}^{m(\lambda)} }{\mathrm{Pr}}\left[\mathsf{Ver}(k_\lambda(r)_,s_\lambda(r))=1\right] \geq 1-\mathsf{negl},
\end{equation}
and try to prove that $\{f_{\lambda}\}$ is in fact a standard OWF. The natural approach is to assume that there exists some successful adversary $\mathcal{A}$ for the OWF, and show that this can be used to construct a successful adversary $\mathcal{B}$ for the underlying OWP. So, let's try to do this, and to make our job even easier, let's assume we have an extremely strong adversary for the OWF, which \textit{always} succeeds, i.e., a PPT algorithm $\mathcal{A}$ satisfying
\begin{equation}\label{eq:owf-adversary}
\underset{\substack{r\gets\{0,1\}^{m(\lambda)}\\r'\gets\mathcal{A}(1^\lambda,f_\lambda(r))}}{\mathrm{Pr}}\left[r'\in f^{-1}_{\lambda}(f_\lambda(r))\right] = \underset{\substack{r\gets\{0,1\}^{m(\lambda)}\\r'\gets\mathcal{A}(1^\lambda,s_\lambda(r))}}{\mathrm{Pr}}\left[r'\in s^{-1}_{\lambda}(s_\lambda(r))\right] =1.
\end{equation}
Then, we define our candidate OWP adversary $\mathcal{B}$ such that on input $(1^\lambda, s)$ with $(k,s)\gets\mathsf{Samp}(1^\lambda)$:
\begin{enumerate}
\item $r'\gets \mathcal{A}(1^\lambda,s)$,
\item $k'\gets k_\lambda(r')$.
\end{enumerate}
Indeed, note that if $\mathcal{A}$ is such that for all $r\in\{0,1\}^{m(\lambda)}$ one has $\mathcal{A}(1^\lambda,s_\lambda(r)) = r$, then using Eq.~\eqref{eq:new-correctness}, one has 
\begin{align}
\underset{(k,s)\gets\mathsf{Samp}(1^\lambda) }{\mathrm{Pr}}\left[\mathsf{Ver}(\mathcal{B}(1^\lambda,s),s)=1\right] &= \underset{r\gets\{0,1\}^{m(\lambda)} }{\mathrm{Pr}}\left[\mathsf{Ver}(\mathcal{B}(1^\lambda,s_\lambda(r)),s_\lambda(r))=1\right]\\
&= \underset{r\gets\{0,1\}^{m(\lambda)} }{\mathrm{Pr}}\left[\mathsf{Ver}(k_\lambda(\mathcal{A}(1^\lambda,s_\lambda(r))),s_\lambda(r))=1\right]\\
&=\underset{r\gets\{0,1\}^{m(\lambda)} }{\mathrm{Pr}}\left[\mathsf{Ver}(k_\lambda(r),s_\lambda(r))=1\right]\\
&\geq 1-\mathsf{negl}(\lambda),
\end{align}
i.e., $\mathcal{B}$ is a (very) succesful adversary for the OWP. As such, it seems intuitive/plausible that one could use Eq.~\eqref{eq:owf-adversary} to show that
\begin{align}
\underset{(k,s)\gets\mathsf{Samp}(1^\lambda) }{\mathrm{Pr}}\left[\mathsf{Ver}(\mathcal{B}(1^\lambda,s),s)\right] &= \underset{r\gets\{0,1\}^{m(\lambda)} }{\mathrm{Pr}}\left[\mathsf{Ver}(\mathcal{B}(1^\lambda,s_\lambda(r)),s_\lambda(r))\right] \\
&=\underset{\substack{r\gets\{0,1\}^{m(\lambda)} \\r'\gets\mathcal{A}(1^\lambda,s_\lambda(r))}}{\mathrm{Pr}}\left[\mathsf{Ver}(k_\lambda(r'),s_\lambda(r))\right]\\
&\geq \tilde{g}(\lambda)
\end{align}
for some non-negligible function $\tilde{g}$. Unfortunately, however, there is a problem! To see this, let's define the sets
$\mathrm{Good}_\lambda$ and $\mathrm{Bad}_\lambda$ via 
\begin{align}
\mathrm{Good}_\lambda &= \{r\,|\, \mathsf{Ver}(k_\lambda(r),s_\lambda(r))=1\},\\
\mathrm{Bad}_\lambda &= \{r\,|\, \mathsf{Ver}(k_\lambda(r),s_\lambda(r))=0\}.
\end{align}
Note that the correctness condition of the OWP then becomes
\begin{equation}
\underset{r\gets\{0,1\}^{m(\lambda)} }{\mathrm{Pr}}\left[r\in \mathrm{Good}_\lambda\right] \geq 1-\mathsf{negl}.
\end{equation}
In particular, it can be the case that $\mathrm{Bad}_\lambda$ is non-empty. Let's assume that this is the case, and note that nothing prevents an adversary $\mathcal{A}$ which, as illustrated in Figure~\ref{fig:owftoowp}, in addition to satisfying Eq.~\eqref{eq:owf-adversary} is fine-tuned so that with high probability it also outputs some $r'\in\mathsf{Bad}_\lambda$, i.e. 
\begin{equation}\label{eq:badoutput}
\underset{\substack{r\gets\{0,1\}^{m(\lambda)}\\r'\gets\mathcal{A}(1^\lambda,s_\lambda(r))} }{\mathrm{Pr}}\left[r'\in \mathrm{Bad}_\lambda\right] \geq 1-\mathsf{negl}.
\end{equation}
Then, using both Eq~\eqref{eq:owf-adversary} and Eq.~\eqref{eq:badoutput} we have
\begin{align}
\underset{(k,s)\gets\mathsf{Samp}(1^\lambda) }{\mathrm{Pr}}\left[\mathsf{Ver}(\mathcal{B}(1^\lambda,s),s)=1\right] &= \underset{r\gets\{0,1\}^{m(\lambda)} }{\mathrm{Pr}}\left[\mathsf{Ver}(\mathcal{B}(1^\lambda,s_\lambda(r)),s_\lambda(r))=1\right]\\
&=\underset{r\gets\{0,1\}^{m(\lambda)} }{\mathrm{Pr}}\left[\mathsf{Ver}(k_\lambda(\mathcal{A}(1^\lambda,s_\lambda(r))),s_\lambda(r))=1\right]\\
&= \underset{\substack{r\gets\{0,1\}^{m(\lambda)}\\r'\gets\mathcal{A}(1^\lambda,s_\lambda(r))} }{\mathrm{Pr}}\left[\mathsf{Ver}(k_\lambda(r'),s_\lambda(r))=1\right]\\
&= \underset{\substack{r\gets\{0,1\}^{m(\lambda)}\\r'\gets\mathcal{A}(1^\lambda,s_\lambda(r))} }{\mathrm{Pr}}\left[\mathsf{Ver}(k_\lambda(r'),s_\lambda(r'))=1\right]\\
&=\underset{\substack{r\gets\{0,1\}^{m(\lambda)}\\r'\gets\mathcal{A}(1^\lambda,s_\lambda(r))} }{\mathrm{Pr}}\left[r'\in \mathsf{Good}_\lambda\right]\\
&<\mathsf{negl}(\lambda),
\end{align}
i.e., $\mathcal{B}$ is \textit{not} a successful adversary for the OWP, and our proof strategy fails! As such, we see that if we want to build an adversary for the OWP from an adversary for the OWF, it would be helpful if we could somehow ``force'' the OWF adversary to always output some $r'\in\mathrm{Good}_\lambda$. Luckily, this can be achieved by simply considering the slightly modified function family $\{F_\lambda:\{0,1\}^{m(\lambda)}\rightarrow\{0,1\}^*\}_{\lambda\in\mathbb{N}}$ defined via
\begin{equation}
F_\lambda(r) = \begin{cases} (0,s_\lambda(r)) & r\in\mathrm{Good}_\lambda,
\\
(1,r) & r\in\mathrm{Bad}_\lambda,
\end{cases}
\end{equation}
In particular, with this construction, if $r\in\mathrm{Good}_{\lambda}$, then on input $(0,s_\lambda(r))$, a successful OWF adversary has to output some $r'\in\mathsf{Good}_\lambda$, otherwise $F_\lambda(r) \neq F_\lambda(r')$. One might immediately object that on inputs $r\in \mathrm{Bad}_\lambda$, the function is trivial to invert, but this is not problematic given that correctness enforces that this happens only with negligible probability. However, it is important to note that in order for $\{F_\lambda\}$ to be efficient to compute, we have to also insist that $\mathsf{Ver}$ is efficient -- i.e., that the underlying OWP is an \textit{efficiently verifiable} OWP.

\begin{figure}
    \centering
     \includegraphics[width=0.55\linewidth]{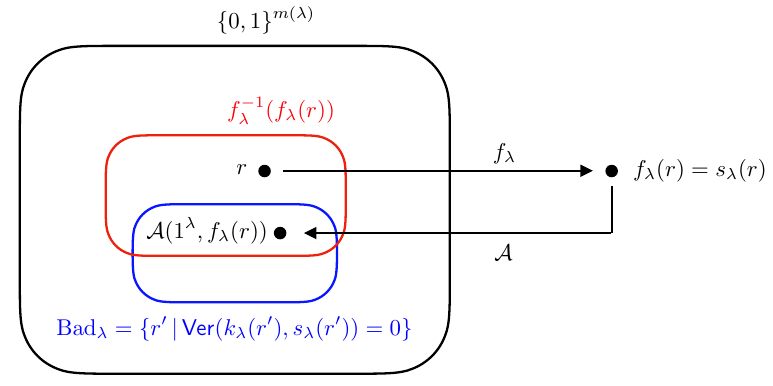} 
    \caption{An illustration of a ``problematic'' adversary $\mathcal{A}$ for the function $\{f_\lambda = s_\lambda\}$, which on input $s_\lambda(r)$ outputs some $r'\in f_\lambda^{-1}(f_\lambda(r))\cap \mathrm{Bad}_\lambda$.} 
    \label{fig:owftoowp} 
\end{figure}

Using these ideas, we then have the following:

\begin{theorem}[One-way functions from classical efficiently verifiable one-way puzzles]\label{thm:owf-from-classical-evowp}
Let $m:\mathbb{N}\rightarrow\mathbb{N}$ be some fixed polynomial. Given an efficiently verifiable one-way puzzle $(\mathsf{Samp},\mathsf{Ver})$, assume there exists an efficient deterministic classical algorithm $\mathsf{Samp}_\mathrm{C}$ such that
\begin{equation}
d_\mathrm{TV}\left(\mathsf{Samp}(1^\lambda),\mathsf{Samp}_{\mathrm{C}}(1^\lambda,r)_{r\gets\{0,1\}^{m(\lambda)}}\right)\leq \mathsf{negl}(\lambda)
\end{equation}
where again $\mathsf{Samp}_{\mathrm{C}}(1^\lambda,r)_{r\gets\{0,1\}^{m(\lambda)}} = (k_\lambda(r),s_\lambda(r))_{r\gets\{0,1\}^{m(\lambda)}}$ is understood as the distribution over outputs of $\mathsf{Samp}_{\mathrm{C}}(1^\lambda,r)$ with respect to bitstrings $r\in \{0,1\}^{m(\lambda)}$ drawn uniformly at random. Then, the family of functions $\{F_\lambda:\{0,1\}^{m(\lambda)}\rightarrow\{0,1\}^*\}_{\lambda\in\mathbb{N}}$ defined via
\begin{equation}
F_\lambda(r) = \begin{cases} (0,s_\lambda(r)) & r\in\mathrm{Good}_\lambda,
\\
(1,r) & r\in\mathrm{Bad}_\lambda,
\end{cases}
\end{equation}
is a quantum-secure one-way function. 
\end{theorem}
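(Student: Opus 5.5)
We prove the statement by reduction. Given a QPT adversary $\mathcal{A}$ that inverts $F_\lambda$ with non-negligible probability $p(\lambda)$, we build a QPT adversary $\mathcal{B}$ for the EV-OWP $(\mathsf{Samp},\mathsf{Ver})$. First, efficiency of $F_\lambda$: $\mathsf{Samp}_\mathrm{C}$ is deterministic and efficient, so $k_\lambda(r)$ and $s_\lambda(r)$ are efficiently computable. Membership $r\in\mathrm{Good}_\lambda$ is decided by evaluating $\mathsf{Ver}(k_\lambda(r),s_\lambda(r))$, which is efficient precisely because the OWP is efficiently verifiable. Hence $F_\lambda$ is efficiently computable.

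The adversary $\mathcal{B}$, on input $(1^\lambda,s)$, runs $r'\gets\mathcal{A}(1^\lambda,(0,s))$ and outputs $k'=k_\lambda(r')$.

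Next, the \emph{key observation}: the tag forces preimages of good points to be good. If $r\in\mathrm{Good}_\lambda$ and $F_\lambda(r')=F_\lambda(r)=(0,s_\lambda(r))$, then $r'\in\mathrm{Good}_\lambda$ and $s_\lambda(r')=s_\lambda(r)$. Therefore
\[
\mathsf{Ver}(k_\lambda(r'),s_\lambda(r))=\mathsf{Ver}(k_\lambda(r'),s_\lambda(r'))=1 .
\]
So whenever $r\in\mathrm{Good}_\lambda$ and $\mathcal{A}$ succeeds on input $F_\lambda(r)$, the adversary $\mathcal{B}$ fed $s_\lambda(r)$ wins.

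We now bound the success probability of $\mathcal{B}$ in the simulated world, where $r$ is uniform. The success probability of $\mathcal{A}$ splits as
\[
p(\lambda)\le \Pr_r[r\in\mathrm{Bad}_\lambda]+\Pr_r\big[r\in\mathrm{Good}_\lambda\wedge \mathcal{A}\text{ inverts }F_\lambda(r)\big].
\]
To bound $\Pr_r[r\in\mathrm{Bad}_\lambda]$, view the event $\{(k,s):\mathsf{Ver}(k,s)=0\}$ as a test distinguishing the two output distributions. It has probability at most $\mathsf{negl}$ under $\mathsf{Samp}$ by correctness. By the TV bound it therefore has probability at most $\mathsf{negl}+\mathsf{negl}$ under $\mathsf{Samp}_\mathrm{C}$. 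Hence $\Pr_r[r\in\mathrm{Bad}_\lambda]\le\mathsf{negl}$, and by the key observation
\[
\Pr_r\big[\mathsf{Ver}(\mathcal{B}(1^\lambda,s_\lambda(r)),s_\lambda(r))=1\big]\ge p(\lambda)-\mathsf{negl}(\lambda).
\]

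Finally, we transfer this bound to the real distribution $\mathsf{Samp}(1^\lambda)$. The winning event of $\mathcal{B}$ is determined by applying one fixed (quantum) procedure to the pair $(k,s)$: run $\mathcal{B}$ on $s$, then check $\mathsf{Ver}$. Its acceptance probability is thus a $[0,1]$-valued function of $(k,s)$, and its expectation differs between the two distributions by at most $d_\mathrm{TV}\le\mathsf{negl}$. Hence $\mathcal{B}$ breaks the OWP with probability at least $p-\mathsf{negl}$, which is non-negligible. This contradicts OWP security, so $F_\lambda$ is quantum-secure one-way.

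The main obstacle is the bookkeeping in the two distribution transfers: $\mathcal{A}$ is quantum and randomized, so success must be treated as an expectation over its coins and the TV data-processing argument applied to that expectation. The argument also needs $\mathsf{Ver}$ deterministic and efficient so that $\mathrm{Good}_\lambda$ is well defined and $F_\lambda$ is computable; both are guaranteed here.
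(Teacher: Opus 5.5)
Your proposal is correct and follows essentially the same route as the paper. It uses the same tagged function $F_\lambda$, the same reduction $\mathcal{B}(1^\lambda,s)=k_\lambda\bigl(\mathcal{A}(1^\lambda,(0,s))\bigr)$, and the same $\mathrm{Good}_\lambda \wedge \mathrm{Succ}_\mathcal{A}$ decomposition. The only difference is packaging: the paper bundles your two TV-distance transfers into a robustness lemma, shows $(\mathsf{Samp}_{\mathrm{C}},\mathsf{Ver})$ is itself a one-way puzzle, and contradicts its security, whereas you inline both transfers and contradict the security of the original puzzle directly.
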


\begin{proof} Let $(\mathsf{Samp},\mathsf{Ver})$ be an EV-OWP. We start with a simple "robustness lemma", whose proof we defer to Appendix~\ref{app:robustness-proof}, showing that replacing $\mathsf{Samp}$ with any negligibly close approximation in TV distance yields a new OWP.

\begin{lemma}[Robustness of one-way puzzles]\label{lem:owp-robustness}
Let $(\mathsf{Samp},\mathsf{Ver})$ be a one-way puzzle, and let $\mathsf{Samp}'$ be any QPT or PPT algorithm satisfying 
\begin{equation}
\epsilon(\lambda) = d_\mathrm{TV}(\mathsf{Samp}(1^\lambda),\mathsf{Samp}'(1^\lambda)) \leq \mathsf{negl}(\lambda). 
\end{equation}
Then, $(\mathsf{Samp'},\mathsf{Ver})$ is a one-way puzzle.
\end{lemma}
With this in hand, it follows immediately that if, as per the theorem statement, $\mathsf{Samp}_{\mathrm{C}}$ satisfies 
\begin{equation}
d_\mathrm{TV}(\mathsf{Samp}(1^\lambda),\mathsf{Samp}_{\mathrm{C}}(1^\lambda,r)_{r\gets\{0,1\}^{m(\lambda)}})\leq \mathsf{negl}(\lambda)
\end{equation}
then $(\mathsf{Samp}_\mathrm{C},\mathsf{Ver})$ is an (efficiently verifiable) OWP. Now, as already noted, the efficiency of $F_\lambda$ follows from the efficiency of $\mathsf{Samp}_{\mathrm{C}}$ and $\mathsf{Ver}$. To obtain a contradiction, let's assume that $\{F_\lambda\}$ is \textit{not} a quantum-secure OWF. In this case, there exists some QPT inversion algorithm $\mathcal{A}$ satisfying 
\begin{equation}
\underset{r\gets\{0,1\}^{m(\lambda)}}{\mathrm{Pr}}\left[F_\lambda\left(\mathcal{A}(1^\lambda,F_\lambda(r))\right) = F_\lambda(r)\right] \geq \epsilon(\lambda)
\end{equation}
for some non-negligible $\epsilon$. Alternatively, if we define the event 
\begin{equation}
\mathrm{Succ}_\mathcal{A} = \left\{F_\lambda \left(\mathcal{A}(1^\lambda,F_\lambda(r))
\right) = F_\lambda(r)\right\},
\end{equation}
then
\begin{equation}
\underset{r\gets\{0,1\}^{m(\lambda)}}{\mathrm{Pr}}[\mathrm{Succ}_\mathcal{A}] \geq \epsilon(\lambda).
\end{equation}
Using this, we now construct a QPT solver $\mathcal{B}$ for the OWP $(\mathsf{Samp}_{\mathrm{C}},\mathsf{Ver})$. In particular, define $\mathcal{B}$ as follows:
\begin{enumerate}
\item On input $s$ and $1^\lambda$, obtain $r'\gets \mathcal{A}(1^\lambda,(0,s))$.
\item Output $k' \gets k_\lambda(r')$.
\end{enumerate}Note that in the case that $s= s_\lambda(r)$ for some $r\in \mathrm{Good}_\lambda$, then by construction of the OWF, if $\mathcal{A}$ succeeds in Step 1, then we know that $s_\lambda(r') = s_\lambda(r)$ \textit{and} that  $r'\in \mathrm{Good}_\lambda$, and therefore that
\begin{equation}
\mathsf{Ver}\left(\mathcal{B}(s_\lambda(r)),s_\lambda(r)\right) = \mathsf{Ver}\left(k_\lambda(r'),s_\lambda(r)\right) = \mathsf{Ver}\left(k_\lambda(r'),s_\lambda(r')\right) = 1.
\end{equation}
With this in mind, note that using the correctness condition of $(\mathsf{Samp}_{\mathrm{C}},\mathsf{Ver})$, namely that 
\begin{equation}
\underset{r\gets\{0,1\}^{m(\lambda)}}{\mathrm{Pr}}\left[\mathsf{Ver}\left(k_\lambda(r),s_\lambda(r)\right)=1\right] = \underset{r\in\{0,1\}^{m(\lambda)}}{\mathrm{Pr}}[r\gets \mathrm{Good}_\lambda] \geq 1- \delta(\lambda)
\end{equation}
for some negligible $\delta$, we have that 
\begin{align}
\underset{r\gets\{0,1\}^{m(\lambda)}}{\mathrm{Pr}}[\mathsf{Ver}(\mathcal{B}(s_\lambda(r)),s_\lambda(r)) =1]&\geq \mathrm{Pr}[(r\in \mathrm{Good}_\lambda) \land \mathrm{Succ}_\mathcal{A}] \\
\nonumber
&= \mathrm{Pr}[\mathrm{Succ}_\mathcal{A}] - \mathrm{Pr}[(r\notin \mathrm{Good}_\lambda)\land \mathrm{Succ}_\mathcal{A} ]\\
\nonumber
&\geq \mathrm{Pr}[\mathrm{Succ}_\mathcal{A}] - \mathrm{Pr}[(r\notin \mathrm{Good}_\lambda)]\\
\nonumber
&\geq \epsilon(\lambda) - \delta(\lambda),
\nonumber
\end{align}
which is non-negligible, contradicting the assumption that $(\mathsf{Samp}_{\mathrm{C}},\mathsf{Ver})$ is a OWP.
\end{proof}

With this established, we can now prove the following result:

\begin{theorem}[OWF directly from negl-simulable efficiently certifiable OWSG ]\label{thm:owf-via-sim-effcertOWSG} Given an $\eta$-simulable efficiently certifiable one-way state generator $((\mathsf{KeyGen},\mathsf{StateGen},\mathsf{Ver}),(\mathcal{M},\mathcal{F}))$, in which $\mathsf{KeyGen}$ is a classical PPT algorithm, if 
\begin{equation}\label{eq:negl-condition}
\sup_{k\in\mathbb{K}_\lambda}\eta\Big(|k|, n(\lambda), t\big(n(\lambda),\epsilon=1/8,\delta=2^{-\lambda}, |\mathbb{K}_\lambda|\big)\Big) = \mathsf{negl}(\lambda),
\end{equation}
then one can construct an efficient deterministic classical algorithm $\mathsf{Samp}_\mathrm{C}$, with $\mathsf{Samp}_{\mathrm{C}}(1^\lambda,r) = (k_\lambda(r),s_\lambda(r))$,  such that
\begin{equation}
d_\mathrm{TV}\left(\mathsf{SampPuzz}(1^\lambda),\mathsf{SampPuzz}_{\mathrm{C}}(1^\lambda,r)_{r\gets\{0,1\}^{m(\lambda)}}\right)\leq \mathsf{negl}(\lambda),
\end{equation}
where $(\mathsf{SampPuzz},\mathsf{VerPuzz})$ is the EV-OWP constructed from $((\mathsf{KeyGen},\mathsf{StateGen},\mathsf{Ver}),(\mathcal{M},\mathcal{F}))$ via Construction~\ref{con:OWP-from-OWSG-state certification}.
Then, the family of functions $\{F_\lambda:\{0,1\}^{m(\lambda)}\rightarrow\{0,1\}^*\}_{\lambda\in\mathbb{N}}$ defined via
\begin{equation}
F_\lambda(r) = \begin{cases} (0,s_\lambda(r)) & r\in\mathrm{Good}_\lambda,
\\
(1,r) & r\in\mathrm{Bad}_\lambda,
\end{cases}
\end{equation}
is a quantum-secure one-way function. 
\end{theorem}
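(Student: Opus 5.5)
The plan is to reduce the statement to Theorem~\ref{thm:owf-from-classical-evowp}. That theorem needs two inputs: an EV-OWP, and an efficient \emph{deterministic} classical sampler for it with negligible TV error.

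For the first input, Corollary~\ref{cor:ev-owp-from-eff-certifiable-owsg} already tells us that $(\mathsf{SampPuzz},\mathsf{VerPuzz})$ from Construction~\ref{con:OWP-from-OWSG-state certification} is an EV-OWP. This is because $\mathsf{VerPuzz}(k,s)$ simply evaluates $\mathcal{F}(k,s,1/8,2^{-\lambda})$, which is efficient by assumption.

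For the second input, I will follow the proof of Theorem~\ref{thm:owf-via-sim-certOWSG}. The probabilistic sampler $\mathsf{SampPuzz}_C$ samples $k\gets\mathsf{KeyGen}(1^\lambda)$ classically, computes $t=t(n(\lambda),1/8,2^{-\lambda},|\mathbb{K}_\lambda|)=\mathrm{poly}(\lambda)$, and outputs $(k,\mathcal{M}_C(k,1^{n},1^{t}))$. Since $k$ has the same distribution in $\mathsf{SampPuzz}$ and $\mathsf{SampPuzz}_C$, the TV distance of the joint distributions is an average over $k$ of the conditional TV distances:
\begin{equation}
d_\mathrm{TV}\big(\mathsf{SampPuzz}(1^\lambda),\mathsf{SampPuzz}_C(1^\lambda)\big)=\sum_{k}\Pr[k]\,d_\mathrm{TV}\big(\mathcal{M}(|\phi_k\rangle^{\otimes t}),\mathcal{M}_C(k,1^n,1^t)\big)\le \sup_{k\in\mathbb{K}_\lambda}\eta(|k|,n(\lambda),t).
\end{equation}
By Eq.~\eqref{eq:negl-condition}, this is $\mathsf{negl}(\lambda)$. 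The supremum in Eq.~\eqref{eq:negl-condition} is exactly what makes this averaging argument work uniformly over keys.

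Next, I make the sampler deterministic. Both $\mathsf{KeyGen}$ and $\mathcal{M}_C$ are PPT, so their combined running time, and hence the number of random coins they use, is bounded by a polynomial $m(\lambda)$. I define $\mathsf{Samp}_C(1^\lambda,r)=(k_\lambda(r),s_\lambda(r))$ as the result of running both algorithms with their coin tosses read from $r\in\{0,1\}^{m(\lambda)}$. Then $\mathsf{Samp}_C(1^\lambda,r)_{r\gets\{0,1\}^{m(\lambda)}}$ has exactly the same distribution as $\mathsf{SampPuzz}_C(1^\lambda)$.

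\textbf{Main obstacle.} This step assumes the PPT algorithms use uniform coins with a fixed polynomial budget. If instead they are only expected polynomial time, or need non-dyadic probabilities, I will pad or truncate them to a fixed $m(\lambda)$. Truncation incurs at most an additional negligible (or even exponentially small) TV error, which can be absorbed into the bound above.

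With both inputs established, I apply Theorem~\ref{thm:owf-from-classical-evowp} to $(\mathsf{SampPuzz},\mathsf{VerPuzz})$ and $\mathsf{Samp}_C$, where $\mathrm{Good}_\lambda$ is defined using $\mathsf{VerPuzz}$. This directly shows that $F_\lambda$ is a quantum-secure OWF.

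Finally, I need $F_\lambda$ to be efficiently computable, which requires deciding membership in $\mathrm{Good}_\lambda$. This amounts to evaluating $\mathcal{F}(k_\lambda(r),s_\lambda(r),1/8,2^{-\lambda})$. That evaluation is efficient because $(\mathcal{M},\mathcal{F})$ is computationally efficient, and this is precisely where efficient certifiability, and not just certifiability, is used.
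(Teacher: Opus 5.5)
Your proposal is correct and follows the same route as the paper. Like the paper, you define $\mathsf{SampPuzz}_C$ as in Theorem~\ref{thm:owf-via-sim-certOWSG}, bound the TV distance by averaging $\eta$ over $k$ and then by the supremum, derandomize by fixing the coins as an explicit input, and invoke Theorem~\ref{thm:owf-from-classical-evowp}; your remarks on fixing the coin budget and on where computational efficiency of $\mathcal{F}$ is used make explicit steps the paper leaves implicit.
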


\begin{proof} Define the algorithm $\mathsf{SampPuzz}_C$ exactly as per the proof of Theorem~\ref{thm:owf-via-sim-certOWSG}. This algorithm is PPT by the same arguments given in the proof of Theorem~\ref{thm:owf-via-sim-certOWSG}. Also as per the proof of Theorem~\ref{thm:owf-via-sim-certOWSG}, it follows from the $\eta$-simulability of $(\mathcal{M},\mathcal{F})$ that 
\begin{equation}
d_\mathrm{TV}(\mathcal{M}_C(k,1^n,1^t),\mathcal{M}(|\phi_k\rangle^{\otimes t})) \leq \eta(|k|,n,t),
\end{equation}
for all $(k,n,t)$. However, in this case, this implies that
\begin{align}
d_\mathrm{TV}(\mathsf{SampPuzz}(1^\lambda),\mathsf{SampPuzz}_C(1^\lambda)) &\leq \mathbb{E}_{k\sim\mathsf{KeyGen}(1^\lambda)}\left[\eta\left(|k|,n(\lambda),t(\lambda)\right)\right]
\\
&\leq \sup_{k\in\mathbb{K}_\lambda}\left[\eta\left(|k|,n(\lambda),t(\lambda)\right)\right]\\
&=\mathsf{negl(\lambda)},
\end{align}
where $t(\lambda)=t\big(n(\lambda),\epsilon=1/8,\delta=2^{-\lambda}, |\mathbb{K}_\lambda|\big)$
and the last inequality follows from our assumption on $\eta$. By pulling out the randomness in $\mathsf{SampPuzz}_C$\,, one obtains an efficient classical deterministic algorithm which satisfies the claim of the theorem. The fact that the function family $\{F_{\lambda}\}$ is a quantum-secure OWF then follows immediately from Theorem~\ref{thm:owf-from-classical-evowp}.
\end{proof}

\section{Hamiltonian phase state assumptions imply one-way functions}\label{s:HPS-imply-one-way}

In this section, we prove the following statement:
\begin{theorem}[Search HPS assumption implies one-way functions]\label{thm:HPS-implies-owf} If the Search HPS assumption (Assumption~\ref{ass:search-HPS}) is true, then one can construct a quantum-secure one-way function.
\end{theorem}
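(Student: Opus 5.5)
The plan is to instantiate Theorem~\ref{thm:owf-via-sim-effcertOWSG} (or, as a fallback, Theorem~\ref{thm:owf-via-sim-certOWSG}) for the OWSG of Construction~\ref{con:OWSG-from-HPS}. Its $\mathsf{KeyGen}$ samples $k\gets\chi_{q,m,n}$, which is classical PPT by assumption, so the only missing ingredient is a "measure first, ask later" certification protocol $(\mathcal{M},\mathcal{F})$ for $\mathrm{HPS}_{q,m,n}$. This protocol must be copy efficient, computationally efficient, and $\eta$-simulable with $\eta$ negligible in the sense of Eq.~\eqref{eq:negl-condition}.

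For $(\mathcal{M},\mathcal{F})$ I would take the protocol of Ref.~\cite{HuangPreskillSoleimanifar2025}, which works with any target state. Its measurement $\mathcal{M}$ is target independent: on each copy, pick a random qubit $i$, measure all other qubits in the computational basis, and measure qubit $i$ in a random Pauli basis (or apply a small random walk). The post-processing uses only amplitude queries $\langle x|\phi\rangle$ to compute an estimator of the fidelity. The steps are as follows.

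\begin{enumerate}
\item \textbf{Certification guarantee.} Cite the guarantee: with $t=\mathrm{poly}(n,1/\epsilon)\log(1/\delta)$ copies the estimator concentrates. The relaxation-time bound is needed here, and for phase states it is $O(n^2)$ or better, since all amplitudes have equal modulus and the walk on the hypercube mixes fast. Union bound over all $k\in\mathbb{K}$ by replacing $\delta$ with $\delta/|\mathbb{K}|$. Because $\log|\mathbb{K}|=\mathrm{poly}(n)$, the protocol stays copy efficient.
\item \textbf{Computational efficiency.} Amplitude ratios $\langle x|\phi_k\rangle/\langle x'|\phi_k\rangle=e^{\iu(f_k(x)-f_k(x'))}$ are efficiently approximable via Remark~\ref{remark:eff-computability-HPS}. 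The error is controlled by choosing precision $l=\mathrm{poly}(n)$, which absorbs the approximation into the slack in $\epsilon$ and $\delta$.
\item \textbf{Simulability.} Measuring $n-1$ qubits of $|\phi_k\rangle$ in the computational basis gives uniformly random bits $x_{-i}$. The remaining qubit is left in $(e^{\iu f_k(x_{-i},0)}|0\rangle+e^{\iu f_k(x_{-i},1)}|1\rangle)/\sqrt2$. The single-qubit Pauli outcome distribution therefore depends only on the phase difference $f_k(x_{-i},1)-f_k(x_{-i},0)$, which we compute to $l$ bits.
\end{enumerate}

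So $\mathcal{M}_C(k,1^n,1^t)$ samples the random choices and $x_{-i}$ exactly. It then samples a Bernoulli variable with parameter $(1\pm\cos\Delta)/2$ or $(1\pm\sin\Delta)/2$, using the approximate phase. I expect an approximate-Bernoulli lemma: a Bernoulli with parameter $p$ can be sampled efficiently to accuracy $2^{-l}$ using $\mathrm{poly}(l)$ random bits, and $\cos$ and $\sin$ of a dyadic can be computed to $l$ bits in $\mathrm{poly}(l)$ time. Summing the per-copy errors gives $\eta\le t\cdot \mathrm{poly}\cdot 2^{-l}$. Choosing $l=\lambda+\mathrm{poly}$ (say $l=n^2$, with $n$ the security parameter) makes $\eta$ negligible uniformly in $k$, so Eq.~\eqref{eq:negl-condition} holds.

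The main obstacle is pinning down exactly that the protocol of Ref.~\cite{HuangPreskillSoleimanifar2025} has this single-qubit, non-adaptive, target-independent form, with classical post-processing whose guarantees survive approximate amplitudes. The delicate points are the phase-state relaxation-time bound and the robustness of the fidelity estimator to $2^{-l}$ amplitude errors. I would first try to absorb the amplitude error directly into the estimator's variance and bias bounds. Once $(\mathcal{M},\mathcal{F})$ is established as an $\eta$-simulable, efficiently certifiable OWSG with negligible $\eta$, Theorem~\ref{thm:owf-via-sim-effcertOWSG} gives the quantum-secure OWF $F_\lambda$ directly. Even $1/3$-simulability alone would already suffice via Theorem~\ref{thm:owf-via-sim-certOWSG}.
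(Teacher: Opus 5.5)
Your proposal is correct and follows essentially the same route as the paper. You instantiate Theorem~\ref{thm:owf-via-sim-effcertOWSG} with the OWSG of Construction~\ref{con:OWSG-from-HPS} and the single-qubit protocol of Ref.~\cite{HuangPreskillSoleimanifar2025} (the paper's Theorem~\ref{thm:HPSforaset}, which uses the exact phase-state relaxation time $\tau=n$), and you obtain negligible simulability by sampling $(i,z,M)$ exactly and the Pauli outcome via an approximate Bernoulli depending only on $g_k(z,i)$, exactly as in Theorem~\ref{thm:there-is-classical-sim} (where $l=n$ gives $\eta\le t2^{-n}$); your explicit absorption of the $2^{-l}$ amplitude-approximation error into the threshold slack of $\mathcal{F}$ is a point the paper handles only implicitly via Remark~\ref{rem:efficiency-for-phase-states}.
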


As discussed earlier, the Decision HPS assumption implies the Search HPS assumption (constructively), and therefore a simple corollary of the above theorem is that one can also construct a quantum-secure OWF if the Decision HPS assumption is true.

Given the tools, of the previous sections, we note that the following logic would suffice for proving Theorem~\ref{thm:HPS-implies-owf}.

\begin{enumerate}
\item Under the Search HPS assumption, we know that we can construct (via Construction~\ref{con:OWSG-from-HPS}) a OWSG  with output states in $\mathrm{HPS}_{q,m,n}$ for some $q(n) = 2^{O(\mathrm{poly}(n))}$ and $m(n)=O(\mathrm{poly}(n))$. Note that this OWSG has a classical key generation algorithm $\mathsf{KeyGen}$.
\item Given Theorem~\ref{thm:owf-via-sim-certOWSG}, we know that from any $1/3$-simulable certifiable OWSG (with classical $\mathsf{KeyGen}$), we can construct a quantum-secure OWF. 
\item Therefore, to prove Theorem~\ref{thm:HPS-implies-owf}, it is sufficient to show that there exists a $1/3$-simulable copy efficient "measure first, ask later" state certification protocol for $\mathrm{HPS}_{q,m,n}$.
\end{enumerate}
However, as discussed in Section~\ref{ss:owf-directly-via-EV}, the OWF obtained via the above proof would be compiled from a distributional OWF. If we can prove that there exists an $\eta$-simulable \textit{computationally} efficient "measure first, ask later" state certification protocol for $\mathrm{HPS}_{q,m,n}$, where $\eta$ is negligible in the sense of Theorem~\ref{thm:owf-via-sim-effcertOWSG}, then one can \textit{directly} construct a OWF, without going via a distributional OWF. As such, instead of proving that there exists a $1/3$-simulable copy efficient state certification protocol for $\mathrm{HPS}_{q,m,n}$, we prove the stronger statement that there exists an $\eta$-simulable \textit{computationally} efficient state certification protocol, for $\eta$ sufficient for Theorem~\ref{thm:owf-via-sim-effcertOWSG}. More specifically:
\begin{enumerate}
\item In Section~\ref{ss:phase-state certification}, we introduce the "measure first, ask later" state certification protocol from Ref.~\cite{HuangPreskillSoleimanifar2025} and prove that it is computationally efficient for any set of phase states with efficiently computable phase functions.
\item In Section~\ref{ss:classic-sim-of-phase-state certification}, we prove that for any set of phase states with efficiently computable phase functions, the same state certification protocol is also $\eta$-simulable for a function $\eta$ satisfying $\eta(|k|,n,t)=\mathsf{negl}(\lambda)$ whenever $t=O(\mathrm{poly}(n))$.
\item In Section~\ref{ss:HPS-theorem-proof}, we then put the pieces together to prove Theorem~\ref{thm:HPS-implies-owf}, as per the high-level sketch above.
\end{enumerate}

\subsection{Computationally efficient certification of phase states}\label{ss:phase-state certification}

In this section, we present the recent "measure first, ask later" state certification protocol from Ref.~\cite{HuangPreskillSoleimanifar2025}, and show that it is indeed computationally efficient for any set of phase states with efficiently computable phase functions (as a simple corollary of results already proven in Ref.~\cite{HuangPreskillSoleimanifar2025}). To present this certification protocol in the language of Definition~\ref{def:mfal-for-a-single-state}, we start by defining an intermediate measurement protocol $\hat{\mathcal{M}}$, which acts on a single copy of an unknown state $|\psi\rangle$:

\begin{algorithm}[H]
\caption{: $\hat{\mathcal{M}}$ -- Single-copy measurement protocol implicit in Protocol 2 of Ref.~\cite{HuangPreskillSoleimanifar2025}.}\label{alg:single-copy}
\begin{algorithmic}[1]
\Require $n$-qubit state $|\psi\rangle$
\State Choose a single index $i\in [n]\coloneqq \{1,\ldots,n\}$.
\State Measure all qubits \textit{except} qubit $i$ in the $Z$ basis, and denote the outcome with $z \in \{0,1\}^{n-1}$.
\State Choose $M\in \{X,Y,Z\}$ randomly, and measure qubit $i$ in the $M$-basis. Denote the outcome as $b\in\{0,1\}$.
\Ensure $(i,z,M,b)$.
\end{algorithmic}
\end{algorithm}

To set conventions, we label the $+1$ Pauli eigenstate by $b=0$ and the $-1$ eigenstate with $b=1$. We stress that the above measurement protocol $\hat{\mathcal{M}}$ is \textit{target} independent -- i.e.,  it has no dependence on the target state $|\phi\rangle$ -- and is therefore "measure first, ask later" in the sense of Section~\ref{ss:state certification-prelim}. Given this single-copy intermediate measurement protocol $\hat{\mathcal{M}}$, we then construct the measurement protocol $\mathcal{M}$ by applying algorithm $\hat{\mathcal{M}}$ independently on $t$ copies of the unknown state $|\psi\rangle$ as follows:

\begin{algorithm}[H]
\caption{: $\mathcal{M}$ -- Full measurement protocol implicit in Protocol 2 in Ref.~\cite{HuangPreskillSoleimanifar2025}}\label{alg:HKP-protocol-M}
\begin{algorithmic}[1]
\Require $|\psi\rangle^{\otimes t}$
\State Create an empty list $s = []$.
\For{$j \in [t]\coloneqq \{1,\ldots,t\}$,}
    \State $s^{(j)} \gets\hat{\mathcal{M}}(|\psi\rangle)$
    \State Append $s^{(j)}$ to $s$
\EndFor
\Ensure $s = [s^{(1)}, \ldots, s^{(t)}]$.
\end{algorithmic}
\end{algorithm}

Now, for any fixed target state $|\phi\rangle$, we can define the classical post-processing function $F_{\ket{\phi}}$ required in Definition~\ref{def:mfal-for-a-single-state}.  At a high level, $F_{|\phi\rangle}$ compares two objects extracted from each sample $s^{(j)}$: a single-qubit classical shadow $\sigma^{(j)}$, which is a linear (and target-independent) estimator of the true post-measurement state on qubit $i$, and a candidate single-qubit state $|\Psi^{(j)}\rangle$, reconstructed using the target state $\ket\phi$ via the amplitude oracle. The overlap between these two, averaged over $t$ rounds, is then used to determine whether to accept or reject.

To make this precise, recall that each $s^{(j)}$ is in fact a tuple   $s^{(j)}=(i,z,M,b)\in[n]\times\{0,1\}^{n-1}\times\{X,Y,Z\}\times\{0,1\}$. Furthermore, let's use the notation $z(j)$ for $j\in[n-1]$ to denote the $j$'th element of $z\in \{0,1\}^{n-1}$, and for any $i\in [n]$ and bit $b\in \{0,1\}$, we define $z\|b_i \in \{0,1\}^n$ as the bitstring defined from $z$ via the insertion of bit $b$ at index $i$ -- i.e., via 
\begin{equation}\label{eq:z-insertion}
z\|b_i(j) = \begin{cases}z(j) &\text{ if } j< i ,\\
b &\text{ if } j= i,\\
z(j-1) &\text{ if } j > i.
\end{cases}
\end{equation}
With this in hand, we then define a single-qubit classical shadow $\sigma^{(j)}$ from $s^{(j)}$ as
\begin{equation}
s^{(j)}=(i,z,M,b)\to\sigma^{(j)}=\begin{cases}
    3\,H|b\rangle\langle b|H-{\mathbb I}_2,&M=X,\\[2mm]
    3\,S H|b\rangle\langle b|HS^\dagger-{\mathbb I}_2,&M=Y,\\[2mm]
    3\,|b\rangle\langle b|-{\mathbb I}_2,&M=Z,
\end{cases}
\end{equation}
and also the single qubit state $\ket{\Psi^{(j)}}$ from $s^{(j)}$ as:
\begin{equation}
s^{(j)}=(i,z,M,b)\to|\Psi^{(j)}\rangle=\frac{\alpha^{(j)}_0|0\rangle+\alpha^{(j)}_1|1\rangle}{\sqrt{|\alpha^{(j)}_0|^2+|\alpha^{(j)}_1|^2}}
\end{equation}
where $\alpha_b^{(j)}=\langle z\|b_i|\phi\rangle$ can be obtained from the amplitude oracle ${\mathrm O}(\ket{\phi})$.  Finally, we then define the overlap
\begin{equation}
\omega^{(j)} = \mathrm{Tr}\left(\sigma^{(j)} |\Psi^{(j)}\rangle\langle \Psi^{(j)}| \right).
\end{equation}
With this, we can now define the function $F_{|\phi\rangle}$ formally in Algorithm~\ref{alg:measurement-protocol-HPS}:

\begin{algorithm}[H]
\caption{: $F_{|\phi\rangle}$ implicit in Protocol 2 (with $r=1$) in Ref.~\cite{HuangPreskillSoleimanifar2025}}\label{alg:measurement-protocol-HPS}
\begin{algorithmic}[1]
\Require Amplitude oracle $\mathsf{O}(|\phi\rangle)$, relaxation time $\tau$, $\mathcal{M}$'s output $s = [s^{(1)}, \ldots, s^{(t)}]$, and tolerance $\epsilon>0$.
\State Create an empty list $\bf{\omega} = []$.
\For{$j \in [t]$,}
    \State Calculate the shadow $\sigma^{(j)}$ from $s^{(j)}$.
    \State Using $\mathsf{O}(|\phi\rangle)$ and $s^{(j)}$, calculate the state $|\Psi^{(j)}\rangle$. 
    \State Calculate the overlap $\omega^{(j)}$.
    \State Append $\omega^{(j)}$ to $\omega$.
\EndFor
\State Calculate the mean
\begin{equation*}
\hat{\omega}[|\phi\rangle] \coloneqq \frac{1}{t}\sum_{j=1}^t \omega^{(j)}
\end{equation*}
\Ensure If $\hat{\omega}[|\phi\rangle] \geq 1-(3\epsilon)/(4\tau)$ output $\mathsf{Accept}$, else output $\mathsf{Reject}$.
\end{algorithmic}
\end{algorithm}

\begin{remark}[Efficiency of $F_{|\phi\rangle}$ for phase states]\label{rem:efficiency-for-phase-states}
We emphasize that Algorithm~\ref{alg:measurement-protocol-HPS} as stated requires amplitude-oracle access $\mathsf{O}(|\phi\rangle)$ in Line 4. However, for any phase state $|\phi_f\rangle$ with efficiently computable phase function $f$, the oracle $\mathsf{O}(|{\phi_f}\rangle)$ can itself be simulated in polynomial time, by directly evaluating $\langle x|\phi_f\rangle = 2^{-n/2}\exp(if(x))$. 
\end{remark}

Now, given the measurement protocol $(\mathcal{M},F_{|\phi\rangle})$ for a \textit{single} target state $|\phi\rangle$, we can easily construct the measurement protocol $(\mathcal{M},\mathcal{F})$ for a set of states $\{|\phi_k\rangle\,|\,k\in \mathbb{K}\}$ (i.e.,  satisfying Definition~\ref{def:measure-first-ask-later}) by defining 
\begin{equation}\label{eq:post-processing-def}
\mathcal{F}(k,\cdot,\cdot,\cdot) = \begin{cases}
F_{|\phi_k\rangle}(\cdot,\cdot,\cdot) &\forall\, k\in\mathbb{K}, \\
\mathsf{Reject} &\text{otherwise}.
\end{cases}
\end{equation}
For the case of phase states, we obtain the following as a straightforward corollary of Theorem 6 in Ref.~\cite{HuangPreskillSoleimanifar2025}:

\begin{theorem}[Certifiability of phase states -- Corollary of Theorem 6 in Ref.~\cite{HuangPreskillSoleimanifar2025}]\label{thm:HPSforaset}
For any set of phase states $\mathrm{PS}(\{f_k\}_{k\in\mathbb{K}}) = \{|\phi_k\rangle\,|\, k\in \mathbb{K}\}$, define $\mathcal{F}$ as per Eq.~\eqref{eq:post-processing-def}. For any state $|\phi_k\rangle$, denote by $\mathsf{O}(|\phi_k\rangle)$ the oracle which gives access to amplitudes of $|\phi_k\rangle$. Then, $(\mathcal{M},\mathcal{F})$ with $\tau= n$ is a ``measure first, ask later'' state certification protocol for $\{|\phi_k\rangle\}$, which is computationally efficient from $\mathsf{O}$-access, with copy complexity
\begin{equation}\label{eq:samp-complexity-certification}
t(n,\epsilon,\delta,|\mathbb{K}|)=O\left(\frac{n^2}{\epsilon^2}\log\left(\frac{|\mathbb{K}|}{\delta}\right)\right).
\end{equation}
Moreover, if the phase functions $\{f_{|\phi_k\rangle}\}_{k\in\mathbb{K}}$ are efficiently computable and membership in $\mathbb{K}$ can be efficiently decided, then $(\mathcal{M},\mathcal{F})$ with $\tau=n$ is in fact a computationally efficient ``measure first, ask later'' state certification protocol for $\{|\phi_k\rangle\}$. 
\end{theorem}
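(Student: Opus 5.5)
Theorem~\ref{thm:HPSforaset} follows from the single-state guarantee of Theorem 6 in Ref.~\cite{HuangPreskillSoleimanifar2025}, combined with a union bound over $\mathbb{K}$ and a direct efficiency check. First, I will restate that single-state guarantee in the language of Definition~\ref{def:mfal-for-a-single-state}. For any target $|\phi\rangle$ whose relaxation time (in the sense of the random walk underlying Protocol 2 with $r=1$) is at most $\tau$, the following holds with $t_0=O(\tau^2\epsilon^{-2}\log(1/\delta_0))$ copies processed by $\mathcal{M}$. If $|\psi\rangle=|\phi\rangle$, then the empirical mean $\hat\omega[|\phi\rangle]$ exceeds the threshold $1-3\epsilon/(4\tau)$ with probability at least $1-\delta_0$. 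If $|\langle\psi|\phi\rangle|^2<1-\epsilon$, then it falls below the threshold with probability at least $1-\delta_0$. The mechanism is that $\mathbb{E}[\omega^{(j)}]$ equals $1$ on the target and is at most $1-\epsilon/\tau$ (up to constants) when the fidelity is below $1-\epsilon$. Each $\omega^{(j)}$ is bounded in $[-1,2]$ because the single-qubit shadow has bounded overlap with a pure state, so Hoeffding's inequality gives the concentration. I then need the fact that for phase states the relaxation time is $O(n)$. The reason is that all computational-basis amplitudes have equal modulus $2^{-n/2}$, so the relevant walk is the uniform walk on the hypercube, whose spectral gap is $\Theta(1/n)$. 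This is exactly the case Ref.~\cite{HuangPreskillSoleimanifar2025} treats, and it is why the statement fixes $\tau=n$.

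Second, I will lift this to the set $\{|\phi_k\rangle\}$. The key point is that $\mathcal{M}$ (Algorithms~\ref{alg:single-copy} and \ref{alg:HKP-protocol-M}) is target-independent, so a single sample $s$ can be fed into every $F_{|\phi_k\rangle}$. For fixed $|\psi\rangle$ and each $k$, let $E_k$ be the bad event that $F_{|\phi_k\rangle}(s,\epsilon,\delta)$ gives the wrong answer in the relevant case. If neither case applies to $k$, set $E_k=\emptyset$. Choosing $\delta_0=\delta/|\mathbb{K}|$, a union bound gives $\Pr[\bigcup_k E_k]\le\delta$, which is precisely the condition of Definition~\ref{def:measure-first-ask-later}. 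Moreover, $\mathcal{F}$ rejects every $\tilde k\notin\mathbb{K}$ by Eq.~\eqref{eq:post-processing-def}. Substituting $\tau=n$ and $\delta_0=\delta/|\mathbb{K}|$ gives $t=O(n^2\epsilon^{-2}\log(|\mathbb{K}|/\delta))$, which is Eq.~\eqref{eq:samp-complexity-certification} and is manifestly $\mathrm{poly}(n,\epsilon^{-1},\log\delta^{-1},\log|\mathbb{K}|)$. This establishes copy efficiency.

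Third, I will check efficiency. From $\mathsf{O}$-access, each round uses two amplitude queries $\alpha^{(j)}_0,\alpha^{(j)}_1$ plus constant-size $2\times 2$ linear algebra, so $F_{|\phi_k\rangle}$ costs $O(t)$ oracle calls and $\mathrm{poly}(t,n)$ time. For efficiently computable phase functions, I replace the oracle as in Remark~\ref{rem:efficiency-for-phase-states}. The factor $2^{-n/2}$ cancels in the normalization of $|\Psi^{(j)}\rangle$, so only $e^{\iu f_k(z\|b_i)}$ is needed, and in fact only the relative phase $f_k(z\|1_i)-f_k(z\|0_i)$ matters. Combining this with efficient membership testing for $\mathbb{K}$ makes $\mathcal{F}$ polynomial-time.

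The main obstacle is that $f_k$ is only available to precision $2^{-(l+1)}$ via dyadic approximations, so $\hat\omega$ is computed only approximately. This can flip the threshold comparison. I would handle it by computing $f_k$ with $l=O(\log(t\tau/\epsilon))$ bits, so that each $\omega^{(j)}$ is perturbed by at most $\epsilon/(16\tau)$. I would then use a slightly shifted threshold with margin $\epsilon/(8\tau)$, absorbing this margin into the constant in the Hoeffding bound. Because $\hat\omega$ is bounded away from the threshold in both cases by $\Omega(\epsilon/\tau)$ except with probability $\delta_0$, the modified test retains the guarantee with an unchanged asymptotic copy complexity.
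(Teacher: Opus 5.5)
Your proposal is correct and follows essentially the same route as the paper's proof in Appendix~\ref{app:proof-certification}. That route is the single-state guarantee of Theorem~6 of Ref.~\cite{HuangPreskillSoleimanifar2025} (a shadow-overlap gap plus concentration of $\hat\omega$), a union bound over $\mathbb{K}$ giving the $\log(|\mathbb{K}|/\delta)$ factor, and $\tau=n$ for phase states from the uniform hypercube walk (Lemma~27 there). Your finite-precision argument for the efficient case goes beyond the paper: it evaluates $f_k$ to logarithmically many bits and keeps an $\Omega(\epsilon/\tau)$ margin at the threshold, whereas Remark~\ref{rem:efficiency-for-phase-states} simply assumes the amplitudes can be evaluated exactly.
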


For completeness, we provide a proof of Theorem~\ref{thm:HPSforaset} in Appendix~\ref{app:proof-certification}.

\subsection{Classical simulability of phase state certification}\label{ss:classic-sim-of-phase-state certification}

In the previous section, we established that the state certification protocol from Ref.~\cite{HuangPreskillSoleimanifar2025} is \textit{computationally efficient} for any set of phase states with efficiently computable phase functions. In this section, we show that for any set of phase states with efficiently computable phase functions, this state certification protocol is also $\eta$-simulable, for a function $\eta(|k|,n,t)$ which is negligible with respect to $n$ whenever $t$ is at most a polynomial function of $n$. 

\begin{theorem}[$\mathsf{negl}$-simulability of Ref.~\cite{HuangPreskillSoleimanifar2025}'s state certification protocol for phase states]\label{thm:there-is-classical-sim}  Let $\mathrm{PS}(\{f_k\}_{k\in\mathbb{K}}) = \{|\phi_k\rangle\,|\, k\in \mathbb{K}\}$ be a set of phase states with efficiently computable phase functions $\{f_k\}_{k\in\mathbb{K}}$, and let $(\mathcal{M},\mathcal{F})$ be the ``measure first, ask later" state certification protocol from Theorem~\ref{thm:HPSforaset}. Then, $(\mathcal{M},\mathcal{F})$ is $\eta$-simulable for some function $\eta$ satisfying $\eta(|k|,n,t) = \mathsf{negl}(n)$ whenever $t = \mathrm{poly}(n)$. More specifically, 
there exists a classical PPT algorithm $\mathcal{M}_C$ satisfying
\begin{equation}
d_\mathrm{TV}(\mathcal{M}_C(k,1^n,1^t),\mathcal{M}(|\phi_k\rangle^{\otimes t})) \leq \eta(|k|,n,t)
\end{equation}
for all $k\in\mathbb{K}$ and $n,t\in\mathbb{N}$, where $\eta$ is such that whenever $t= O(\mathrm{poly}(n))$, then $\eta(|k|,n,t) = \mathsf{negl}(n)$.
\end{theorem}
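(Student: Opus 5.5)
The plan is to simulate each round of $\mathcal{M}$ exactly in structure, and approximately only in the single-qubit Bernoulli sampling. Since $\mathcal{M}$ applies $\hat{\mathcal{M}}$ independently to $t$ copies, $\mathcal{M}(|\phi_k\rangle^{\otimes t})$ is a $t$-fold product distribution. By subadditivity of total variation distance over product distributions, it therefore suffices to build a PPT sampler $\hat{\mathcal{M}}_C(k,1^n,1^l)$ whose output is within $\hat\eta$ of $\hat{\mathcal{M}}(|\phi_k\rangle)$ in TV. We then set $\eta = t\,\hat\eta$ and choose the precision parameter $l$ (for instance $l = n + t$, or any $l=\omega(\log n)$ chosen so that $t\,\hat\eta$ is negligible) as a polynomial in $n$ and $t$.

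For a phase state the single-copy distribution is explicit. The index $i$ and basis $M$ are uniform and independent of the state. Because $|\langle x|\phi_k\rangle|^2=2^{-n}$, the outcome $z\in\{0,1\}^{n-1}$ of measuring all qubits except $i$ in the $Z$ basis is exactly uniform. The post-measurement state of qubit $i$ is then $\tfrac{1}{\sqrt2}(e^{\iu f_k(z\|0_i)}|0\rangle + e^{\iu f_k(z\|1_i)}|1\rangle)$, with relative phase $\Delta=f_k(z\|1_i)-f_k(z\|0_i)$. From this we read off the outcome probabilities of the single-qubit measurement. For $M=Z$, $b$ is a fair coin. For $M=X$, $\Pr[b=0]=(1+\cos\Delta)/2$. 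For $M=Y$, $\Pr[b=0]=(1+\sin\Delta)/2$.

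The sampler $\hat{\mathcal{M}}_C$ therefore proceeds as follows. It samples $i$, $M$ and $z$ uniformly and exactly. It then calls the algorithm of Definition~\ref{def:eff-computable-phase-functions} to obtain $\hat f_k(z\|0_i,l)$ and $\hat f_k(z\|1_i,l)$, each within $2^{-(l+1)}$, so that $\hat\Delta$ is within $2^{-l}$ of $\Delta$. Next it computes $\cos\hat\Delta$ or $\sin\hat\Delta$ to $l$ bits by a truncated Taylor series, in time $\mathrm{poly}(l)$ after reducing $\hat\Delta$ mod $2\pi$ with a sufficiently precise dyadic approximation of $\pi$. Finally it samples a Bernoulli variable with the resulting dyadic-rational bias $\hat p$ from $O(l)$ uniform bits. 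Since $\cos$ and $\sin$ are 1-Lipschitz, $|\hat p - p|\le 2^{-l}+2^{-l}$, and the only TV error is $O(2^{-l})$ in the last bit, conditioned on everything else. This gives $\hat\eta = O(2^{-l})$, and the runtime is $\mathrm{poly}(|k|,n,l)$.

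I expect the main obstacle to be the careful bookkeeping of exact Bernoulli sampling with a computably approximated irrational bias, presumably the content of a lemma along the lines of an ``efficient approximate Bernoulli'' lemma. The issues are: making the approximation error of the dyadic bias rigorous, including the reduction modulo $2\pi$ for arguments with $\log q$ up to $\mathrm{poly}(n)$ bits; ensuring the sampler uses only finitely many random bits; and ensuring the runtime is polynomial in $|k|$, $n$ and $l$. Once that lemma gives a bias within $2^{-l}$ in time $\mathrm{poly}(l)$, choosing $l=n+\lceil\log t\rceil$ yields $\eta(|k|,n,t)\le O(t2^{-l}) = O(2^{-n})$. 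This is negligible in $n$, and $\mathcal{M}_C$ remains PPT whenever $t=\mathrm{poly}(n)$.
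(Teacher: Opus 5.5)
Your proposal is correct and follows essentially the same route as the paper. Both proofs rest on three steps:
\begin{enumerate}
\item An exact description of the single-copy distribution $\hat{\mathcal{M}}(|\phi_k\rangle)$: $z$ is uniform, and the final bit is Bernoulli with biases $\cos^2(g_k(z,i)/2)$ and $\cos^2((g_k(z,i)-\pi/2)/2)$. These are exactly your $(1+\cos\Delta)/2$ and $(1+\sin\Delta)/2$.
\item An approximate Bernoulli sampler that calls the $2^{-(l+1)}$-accurate phase oracle and rounds the bias to a dyadic rational.
\item Subadditivity of total variation over the $t$ independent rounds.
\end{enumerate}
The only difference is cosmetic: the paper fixes $l=n$ and gets $\eta\le t\,2^{-n}$, while you take $l=n+\lceil\log t\rceil$ to get $O(2^{-n})$. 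Both are negligible for $t=\mathrm{poly}(n)$.
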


\begin{proof}
Recall from Algorithm~\ref{alg:HKP-protocol-M} that $\mathcal{M}$ works by repeatedly calling the \textit{single-copy} measurement protocol $\hat{\mathcal{M}}$ defined in Algorithm~\ref{alg:single-copy}. As such, we start by providing a classical randomized algorithm $\hat{\mathcal{M}}_C(k,n,l)\rightarrow s$ for approximately sampling within error $1/2^l$ from the distribution defined by $\hat{\mathcal{M}}(|\phi_k\rangle)$ for each $n$-qubit state $|\phi_k\rangle$ defined by $k\in\mathbb K$. Given this, we then define $\mathcal{M}_C$ by simply replacing the calls to $\hat{\mathcal{M}}$ (on $|\phi_k\rangle$) in Algorithm~\ref{alg:HKP-protocol-M} with calls to $\hat{\mathcal{M}}_C$ (on input $k$). 

To this end, we require some notation. In particular, given any bitstring $z\in \{0,1\}^{n-1}$, index $i\in [n]$ and bit $b\in\{0,1\}$, recall from Eq.~\eqref{eq:z-insertion} the definition of $z\|b_i \in \{0,1\}^n$.
Moreover, given any $k\in \mathbb{K}$, define
\begin{equation}
g_k(z,i) = f_k(z\|1_i) - f_k(z\|0_i),
\end{equation}
where $f_k$ is the phase function from Eq.~\eqref{eq:def-fk}. Using this, for any $k\in \mathbb{K}$, we define the distribution $\tilde{\mathcal{M}}_{\mathrm{C}}(k,n)$ as the distribution sampled from via the following algorithm, where we set the convention that $\mathrm{Bern}(p)$ is the bit distribution satisfying $\mathrm{Pr}[b=1]=p$:

\begin{algorithm}[H]
\caption{Sampler for $\tilde{\mathcal{M}}_{\mathrm{C}}(k,n)$}\label{alg:classical-simulation}
\begin{algorithmic}[1]
\Require $k\in \mathbb{K}$, $n\in\mathbb{N}$
\State Choose a single index $i\in [n]$ uniformly at random.
\State Sample $z \in \{0,1\}^{n-1}$ from the uniform distribution.
\State Choose $M\in \{X,Y,Z\}$ randomly.
\If{$M=Z$,}
\State $b\gets\mathrm{Bern}(1/2)$
\ElsIf{$M=X$,}
\State $b\gets
\mathrm{Bern}\left[1-\cos^2(g_k(z,i)/2)\right]$
\ElsIf{$M=Y$,}
\State $b\gets
\mathrm{Bern}\big[1-\cos^2\big(\frac{g_k(z,i)-\pi/2}{2}\big)\big]$
\EndIf
\Ensure $(i,z,M,b)$.
\end{algorithmic}
\end{algorithm}

We now have the following lemma:

\begin{lemma}[Exact classical simulation of the single-copy measurement distribution]\label{lem:exact-but-inefficient} For all $k\in \mathbb{K}$, one has that
\begin{equation}
\tilde{\mathcal{M}}_{\mathrm{C}}(k,n) = \hat{\mathcal{M}}(|\phi_k\rangle),
\end{equation}
where both $\tilde{\mathcal{M}}_{\mathrm{C}}(k,n)$ and $\hat{\mathcal{M}}(|\phi_k\rangle)$ are understood as distributions.
\end{lemma}

\begin{proof}
Given that $|\phi_k\rangle$ is a phase state, we have that measuring any register of $|\phi_k\rangle$ in the computational basis yields a uniformly random bit. As such, the bitstring $z\in \{0,1\}^{n-1}$ output by $\hat{\mathcal{M}}(|\phi_k\rangle)$ is uniformly random, as is the bitstring $z\in\{0,1\}^{n-1}$ output by $\hat{\mathcal{M}}_{\mathrm{C}}(k,n)$. Now, note that the post-measurement state after obtaining $i$ in Step 1, and $z\in \{0,1\}^{n-1}$ in Step 2 of Algorithm $\hat{\mathcal{M}}(|\phi_k\rangle)$ is 
\begin{equation}
|\phi_k(z)\rangle = \frac{1}{\sqrt{2}}\left(|0\rangle + e^{ig_k(z,i)}|1\rangle\right)
\end{equation}
where as before $g_k(z,i) = f_k(z\|1_i) - f_k(z\|0_i)$ and $f_k$ is the phase function from Eq.~\eqref{eq:def-fk}. Therefore, we have the following for the bit $b$ output by Step 3 of $\hat{\mathcal{M}}(|\phi_k\rangle)$, conditioned on obtaining $i$ and $z$ in Steps 1 and 2:
\begin{enumerate}
\item If $M=Z$, then $\mathrm{Pr}[b=0|z,i] = \mathrm{Pr}[b=1|z,i] = 1/2$.
\item If $M=X$, then
\begin{equation}
\mathrm{Pr}[b|z,i] = \begin{cases}
\cos^2(g_k(z,i)/2) &\text{ if }  b=0, \\
1 -\cos^2(g_k(z,i)/2) &\text{ if }  b=1.
\end{cases}
\end{equation}
\item If $M= Y$, then
\begin{equation}
\mathrm{Pr}[b|z,i] = \begin{cases}
\cos^2\left(\frac{g_k(z,i)-\pi/2}{2}\right) &\text{ if } b=0, \\
1 -\cos^2\left(\frac{g_k(z,i)-\pi/2}{2}\right) &\text{ if }  b=1.
\end{cases}
\end{equation}
\end{enumerate}
But the above distributions are precisely the distributions sampled from in Step 4 of $\tilde{\mathcal{M}}_{\mathrm{C}}(k,n)$, and both $i$ and $M$ are drawn identically in both algorithms. 
\end{proof}

Unfortunately, $\tilde{\mathcal{M}}_{\mathrm{C}}(k,n)$ cannot be sampled from \textit{exactly} in polynomial time due to the Bernoulli sampling steps in Lines 7 and 9 of Algorithm~\ref{alg:classical-simulation}. However, using the efficient computability of the phase functions, together with standard sampling techniques, one can sample from a negligibly close approximation to the desired Bernoulli distributions in polynomial time. More specifically, we have the following lemma:

\begin{lemma}[Efficient approximate Bernoulli sampling]\label{lem:eff-approx-Bernoulli} 
Assume the phase functions $\{f_k\}_{k\in\mathbb{K}}$ are efficiently computable. Then, there exist randomized algorithms $\mathcal{B}_1$ and $\mathcal{B}_2$, which on input $z\in\{0,1\}^n$, $i\in [n]$, $k\in\mathbb{K}$ and $l$, run in time $O(\mathrm{poly}(n,|k|,l))$ and satisfy
\begin{align}
&d_\mathrm{TV}\left(\mathcal{B}_1(z,i,k,l),
\mathrm{Bern}\left[1-\cos^2(g_k(z,i)/2)\right]\right) \leq 2^{-l}\label{eq:B1-close},\\
&d_\mathrm{TV}\left(\mathcal{B}_2(z,i,k,l), 
\mathrm{Bern}\Big[1-\cos^2\big(\frac{g_k(z,i)-\pi/2}{2}\big)\Big]\right) \leq 2^{-l}.\label{eq:B2-close}
\end{align}
\end{lemma}

\begin{proof} On input $z,i,k,l$, algorithm $\mathcal{B}_1$ does the following:
\begin{enumerate}
\item $\hat{f}_k(z\|1_i,l)\gets \mathcal{A}(k, z\|1_i,l)$ and $\hat{f}_k(z\|0_i,l)\gets \mathcal{A}(k, z\|0_i,l)$ where $\mathcal{A}$ is the algorithm implied via efficient computability of $\{f_k\}_{k\in\mathbb{K}}$.
\item Compute $\hat{g}_k(z,i,l):=\hat{f}_k(z\|1_i,l)-\hat{f}_k(z\|0_i,l)$.
\item Compute an $l$-bit dyadic rational $\hat c(z,i,k,l)$ satisfying
\begin{equation}\label{eq:hat-c-def}
\left|\hat c(z,i,k,l)-\cos^2\left(\frac{\hat g_k(z,i,l)}{2}\right)\right|<\frac1{2^{l+1}}.
\end{equation}
\item Output: sample $b\gets {\rm Bern}[1-\hat c(z,i,k,l)]$.
\end{enumerate}

Clearly, the output of the randomized algorithm $\mathcal B_1$ is that of a Bernoulli distribution ${\mathcal B}_1(z,i,k,l)={\rm Bern}(1-\hat c(z,i,k,l))$ with parameter $1-\hat c(z,i,k,l)$. To compute such a parameter, one (1st) reads $k$, (2nd) calls the corresponding ${\mathcal A}_k$ twice, and (3rd) uses efficiently computable operations (e.g., $\cos$ acting on a length-$l$ binary fraction) to obtain the length-$l$ binary fraction representation of $\hat c(z,i,k,l)$. Finally, having the length-$l$ binary fraction representation of $\hat c$, one can sample from ${\rm Bern}(1-\hat c)$ by flipping $l$ fair coins. Therefore, $\mathcal{B}_1$ has runtime $\mathrm{poly}(n,|k|,l)$.

Let us now show Eq.~\eqref{eq:B1-close}. We have
\begin{eqnarray}
\begin{aligned}
d_\mathrm{TV}\left(\mathcal{B}_1(z,i,k,l),\mathrm{Bern}\left[1-\cos^2(g_k(z,i)/2)\right]\right)&=d_\mathrm{TV}\left({\rm Bern}[1-\hat c(z,i,k,l)],\mathrm{Bern}\left[1-\cos^2(g_k(z,i)/2)\right]\right)\\
&=\left|\hat c(z,i,k,l)-\cos^2\left(\frac{g_k(z,i)}{2}\right)\right|\,,
\end{aligned}
\end{eqnarray}
and by the triangular inequality,
\begin{equation}
\left|\hat c(z,i,k,l)-\cos^2\left(\frac{g_k(z,i)}{2}\right)\right|\le\left|\hat c(z,i,k,l)-\cos^2\left(\frac{\hat g_k(z,i,l)}{2}\right)\right|+\left|\cos^2\left(\frac{\hat g_k(z,i,l)}{2}\right)-\cos^2\left(\frac{g_k(z,i)}{2}\right)\right|\le\frac1{2^l}\,,
\end{equation}
where we have used Eq.~\eqref{eq:hat-c-def} in the first term, the fact that $\cos^2$ has Lipschitz constant $1$, and that $|\hat g_k(z,i,l)-g_k(z,i)|\le 1/2^l$ by Eq.~\eqref{eq:hat-f-def}. That is, the second term is bounded as
\begin{equation}
\left|\cos^2\left(\frac{\hat g_k(z,i,l)}{2}\right)-\cos^2\left(\frac{g_k(z,i)}{2}\right)\right|\le\frac12\left|\hat g_k(z,i,l)-g_k(z,i)\right|\le\frac1{2^{l+1}}\,.
\end{equation}

The proof for $\mathcal B_2$ is essentially the same but replacing $\hat c$ with $\hat d(z,i,k,l)$ such that
\begin{equation}\label{eq:hat-d-def}
\left|\hat d(z,i,k,l)-\cos^2\left(\frac{\hat g_k(z,i,l)-\pi/2}{2}\right)\right|<\frac1{2^{l+1}}\,.
\end{equation}
Therefore, the runtime of $\mathcal B_2$ is again $\mathrm{poly}(n,|k|,l)$, and its output is that of a Bernoulli distribution ${\mathcal B}_2(z,i,k,l)={\rm Bern}[1-\hat d(z,i,k,l)]$ with parameter $1-\hat d(z,i,k,l)$. Using the triangular inequality and same arguments above,
\begin{equation}
d_\mathrm{TV}\left(\mathcal{B}_2(z,i,k,l),
\mathrm{Bern}\Big[1-\cos^2\big(\frac{g_k(z,i)-\pi/2}{2}\big)\Big]\right)=\left|\hat d(z,i,k,l)-\cos^2\left(\frac{g_k(z,i)-\pi/2}{2}\right)\right|\le\frac1{2^l}\,,
\end{equation}
and the lemma is proven.
\end{proof}
With the above in hand, we can now straightforwardly define the algorithm $\hat{\mathcal{M}}_C(k,n,l)$ (see Algorithm~\ref{alg:classical-simulation-approx}) by replacing the exact Bernoulli sampling in $\tilde{\mathcal{M}}_C(k,n)$ with the approximate Bernoulli sampling via $\mathcal{B}_1$ and $\mathcal{B}_2$. 

\begin{algorithm}[H]
\caption{Sampler for $\hat{\mathcal{M}}_{\mathrm{C}}(k,n,l)$}\label{alg:classical-simulation-approx}
\begin{algorithmic}[1]
\Require $k\in \mathbb{K}$, $n,l\in\mathbb{N}$
\State Choose a single index $i\in [n]$ uniformly at random.
\State Sample $z \in \{0,1\}^{n-1}$ from the uniform distribution.
\State Choose $M\in \{X,Y,Z\}$ randomly.
\If{$M=Z$,}
\State $b\gets \mathrm{Bern}(1/2)$ 
\ElsIf{$M=X$,}
\State $b\gets \mathcal{B}_1(z,i,k,l)$
\ElsIf{$M=Y$,}
\State $b\gets \mathcal{B}_2(z,i,k,l)$
\EndIf
\Ensure $(i,z,M,b)$.
\end{algorithmic}
\end{algorithm}
In particular, we have the following lemma:
\begin{lemma}[Efficient single-copy negligible error simulation]\label{lem:efficient-approximation}For all $k\in \mathbb{K}$ and $l\in\mathbb{N}$, one has
\begin{equation}
d_\mathrm{TV}(\hat{\mathcal{M}}_{\mathrm{C}}(k,n,l),\tilde{\mathcal{M}}_{\mathrm{C}}(k,n)) \leq 2^{-l}.
\end{equation}
Moreover, $\hat{\mathcal{M}}_C(k,n,l)$ runs in time $\mathrm{poly}(n,l)$.
\end{lemma}
\begin{proof}
First, it is clear that the runtime of $\hat{\mathcal{M}}_C(k,n,l)$ is that of $\mathcal B_1$ and $\mathcal B_2$, that is, $\mathrm{poly}(n,|k|,l)$. Moreover,
\begin{equation}
\begin{aligned}
d_\mathrm{TV}(\hat{\mathcal{M}}_{\mathrm{C}}(k,n,l),\tilde{\mathcal{M}}_{\mathrm{C}}(k,n))=\frac{1}{3}0&+\frac{1}{3}\frac{1}{n2^{n-1}}\sum_{i=1}^n\sum_{z\in\{0,1\}^{n-1}}d_{\rm TV}\left(\mathcal B_1(z,i,k,l),\mathrm{Bern}\left[1-\cos^2(g_k(z,i)/2)\right]\right)\\
&+\frac{1}{3}\frac{1}{n2^{n-1}}\sum_{i=1}^n\sum_{z\in\{0,1\}^{n-1}}d_{\rm TV}\left(\mathcal B_2(z,i,k,l),\mathrm{Bern}\left[1-\cos^2\big(g_k(z,i)/2-\pi/4\big)\right]\right)\\
&\le\frac{2}{3}2^{-l}\le2^{-l}
\end{aligned}
\end{equation}
where we have used the fact that both algorithms are the same when $M=Z$ for all $i,z$, and Eqs.~\eqref{eq:B1-close} and~\eqref{eq:B2-close} for the cases $M=X$ and $M=Y$, respectively.
\end{proof}

Finally, we can put everything together and define the claimed algorithm $\mathcal{M}_C$ as follows:
\begin{algorithm}[H]
\caption{: $\mathcal{M}_C(k,1^n,1^t)$ -- PPT $\mathsf{negl}$-approximate sampler for $\mathcal{M}(|\phi_k\rangle^{\otimes t})$}\label{alg:PPT-approx-M}
\begin{algorithmic}[1]
\Require $k,1^n,1^t$
\State Create an empty list $s = []$.
\For{$j \in [t]\coloneqq \{1,\ldots,t\}$,}
    \State $s^{(j)} \gets\hat{\mathcal{M}}_C(k,n,n)$
    \State Append $s^{(j)}$ to $s$
\EndFor
\Ensure $s = [s^{(1)}, \ldots, s^{(t)}]$.
\end{algorithmic}
\end{algorithm}

By virtue of Lemma~\ref{lem:exact-but-inefficient} and Lemma~\ref{lem:efficient-approximation}, we have that
\begin{equation}
d_\mathrm{TV}(\mathcal{M}_C(k,1^n,1^t),\mathcal{M}(|\phi_k\rangle^{\otimes t}))\leq t2^{-n}.
\end{equation}
Therefore, for all $t=\mathrm{poly}(n)$, we then immediately have
\begin{equation}
d_\mathrm{TV}(\mathcal{M}_C(k,1^n,1^t),\mathcal{M}(|\phi_k\rangle^{\otimes t})) \leq \mathsf{negl}(n)
\end{equation}
as claimed. Moreover, $\mathcal{M}_C(k,1^n,1^t)$ just calls $t$ times the algorithm $\hat{\mathcal{M}}_C(k,n,l)$ with $l=n$. Since the latter runs in time $\poly(n,|k|,l)$, it follows that $\mathcal{M}_C(k,1^n,1^t)$ runs in time $\poly(n,|k|,t)$, that is, polynomial in the size of its input. This concludes the proof of Theorem~\ref{thm:there-is-classical-sim}.
\end{proof}

\subsection{Proof of Theorem~\ref{thm:HPS-implies-owf}}\label{ss:HPS-theorem-proof}

Given Theorem~\ref{thm:HPSforaset} and Theorem~\ref{thm:there-is-classical-sim}, we can now straightforwardly prove Theorem~\ref{thm:HPS-implies-owf}.

\begin{proof}[Proof (Theorem~\ref{thm:HPS-implies-owf})] Set $n=\lambda$. Let the functions $m(\lambda) =\mathrm{\poly(\lambda)}$ and $q(\lambda) =2^{O(\mathrm{poly}(\lambda))}$ be as per the Search HPS assumption. Let $(\mathsf{KeyGen},\mathsf{StateGen},\mathsf{Ver})$ be the OWSG obtained from  Construction~\ref{con:OWSG-from-HPS} under the Search HPS assumption. Note that $\mathbb{K}_\lambda \coloneqq \mathrm{supp}(\mathsf{KeyGen}(\lambda)) =  \mathbb{K}_{q,m,\lambda}$. Let $(\mathcal{M},\mathcal{F})$ be the state certification protocol from Theorem~\ref{thm:HPSforaset} for the set of phase states $\mathrm{HPS}_{q,m,\lambda} = \{|\phi_k\,|\,k\in\mathbb{K}_\lambda\}$, with copy complexity 
\begin{equation}\label{eq:samp-complexity-certification-final-proof}
t(\lambda,\epsilon,\delta,|\mathbb{K}|)=O\left(\frac{\lambda^2}{\epsilon^2}\log\left(\frac{|\mathbb{K}|}{\delta}\right)\right).
\end{equation}
We prove Theorem~\ref{thm:HPS-implies-owf} by:
\begin{enumerate}
\item Showing that $(\mathcal{M},\mathcal{F})$ is \textit{efficiently computable}.
\item Showing that $(\mathcal{M},\mathcal{F})$ is $\eta$-simulable for a function $\eta$ satisfying
\begin{equation}
\sup_{k\in\mathbb{K}_\lambda}\left[\eta\Big(|k|, n(\lambda), t\big(n(\lambda),\epsilon=1/8,\delta=2^{-\lambda}, |\mathbb{K}_\lambda|\big)\Big)\right] = \mathsf{negl}(\lambda).
\end{equation}
\end{enumerate}
Given both the above, it then follows from Theorem~\ref{thm:owf-via-sim-effcertOWSG} that one can directly construct a quantum-secure OWF from the $\eta$-simulable efficiently certifiable OWSG $((\mathsf{KeyGen},\mathsf{StateGen},\mathsf{Ver}),(\mathcal{M},\mathcal{F}))$.

\textbf{(1) Efficient computability:} This follows immediately from Theorem~\ref{thm:HPSforaset}, together with the observation in Remark~\ref{remark:eff-computability-HPS} that the phase functions defining $\mathrm{HPS}_{q,m,\lambda}$ are efficiently computable.

\textbf{(2) $\eta$-simulability:} From Theorem~\ref{thm:there-is-classical-sim}, it follows that $(\mathcal{M},\mathcal{F})$ is $\eta$-simulable for $\eta$ satisfying
\begin{equation}
\eta(|k|,\lambda,t) = \mathsf{negl}(\lambda)
\end{equation}
whenever $t = O(\mathrm{poly}(\lambda))$. Now, using Eq.~\eqref{eq:samp-complexity-certification-final-proof}, the fact that $n=\lambda$ and the fact that $|\mathbb{K}_\lambda| = 2^{O(\mathrm{poly}(\lambda))}$, we have
\begin{equation}
t\big(n(\lambda),\epsilon=1/8,\delta=2^{-\lambda}, |\mathbb{K}_\lambda|\big)=O(\mathrm{poly}(\lambda)),
\end{equation}
and therefore,
\begin{equation}
\sup_{k\in\mathbb{K}_\lambda}\left[\eta\Big(|k|, n(\lambda), t\big(n(\lambda),\epsilon=1/8,\delta=2^{-\lambda}, |\mathbb{K}_\lambda|\big)\Big)\right] =\mathsf{negl(\lambda)}.
\end{equation}
\end{proof}

\begin{remark}[Alternative state certification methods]\label{rem:alternative}
We note that the protocol of~\cite{HuangPreskillSoleimanifar2025} is not the only computationally efficient and $\eta$-simulable state certification protocol available for (Hamiltonian) phase states. The recent work of~\cite{park2026samplehardwareefficientfidelityestimation} implicitly provides an alternative method that achieves \textit{constant} copy complexity with respect to $n$ ---an improvement over the $O(n^2/\epsilon^2)$ scaling of Theorem~\ref{thm:HPSforaset}--- at the cost of requiring more elaborate, though still target-independent, measurements rather than single-qubit measurements alone. Using this state certification protocol in place of the one from~\cite{HuangPreskillSoleimanifar2025} would therefore also allow us to construct a OWF from the search HPS assumption, with different properties to the one we have obtained implicitly in our proof of Theorem~\ref{thm:HPS-implies-owf}. We note that more recent state certification protocols which work for \textit{all} states using only single-qubit measurements~\cite{gupta2025singlequbitmeasurementssufficecertify,Li_2026} are \textit{not} applicable to our setting, as they require target-state-dependent adaptive measurements -- i.e., they are not "measure first, ask later" as required here.
\end{remark}

\begin{remark}[Explicit OWF]\label{rem:explicit-owf} In the proof of Theorem~\ref{thm:HPS-implies-owf}, the OWF $\{F_\lambda\}$, whose existence has been proven, has been left implicit. In Appendix~\ref{app:explicit-OWF}, we provide an explicit specification of this OWF. We note here however that the domain of $F_\lambda$ depends on the \textit{copy-complexity} $t$ of the state certification protocol $(\mathcal{M},\mathcal{F})$ appearing above. In particular, this fact provides a motivation for the use of alternative state certification protocols with \textit{constant} copy complexity, as discussed in Remark~\ref{rem:alternative} above.
\end{remark}

\section{Towards concrete instantiations of efficiently verifiable one-way puzzles}\label{s:towards-concrete-instantiations}

The results of the previous sections give a general recipe for constructing EV-OWP from PRSG/OWSG together with ``measure first, ask later'' state certification protocols. Using this recipe, we see that we can construct an EV-OWP which provides a genuine instantiation of Quantumania from any family of states $\{\ket{\phi_k}\}_{k\in\mathbb K}$ satisfying the following three properties: 

\begin{enumerate}
\item \textbf{PRSG/OWSG:} There exists a distribution over the family which yields either a standard PRSG or a OWSG (under a plausible assumption).
\item \textbf{Certifiability:} The family admits a computationally efficient ``measure first, ask later'' state certification protocol $(\mathcal{M}, \mathcal{F})$ in the sense of Definition~\ref{def:measure-first-ask-later}. 
\item \textbf{Independence from OWF:} The assumption under which the family is a PRSG or OWSG can be true independently of whether one-way functions exist. In particular:
\begin{enumerate}
\item The assumption in Point 1 \textit{cannot} be that one-way functions exist.
\item The measurement protocol $(\mathcal{M},\mathcal{F})$ should not be $1/3$-simulable -- i.e., the joint distribution over $(k,s)$ induced by sampling a key $k$ and running $\mathcal M$ on $\ket{\phi_k}^{\otimes t}$ should not be classically simulable to within constant total variation distance. If it is $1/3$-simulable, then by Theorem~\ref{thm:owf-via-sim-certOWSG}, the EV-OWP built from the PRSG/OWSG together with the state certification protocol implies a one-way function.
\end{enumerate}
\end{enumerate}
We note that simply satisfying Properties 1 and 2 will yield an EV-OWP (via Corollary~\ref{cor:ev-owp-from-eff-certifiable-owsg}), however Property 3 is necessary for this EV-OWP to be genuinely in Microcrypt! In particular, our work highlights that if one is to instantiate Quantumania via this route, then one requires a family of states satisfying a delicate balance: It should be sufficiently complex to yield an OWSG (or a PRSG) without using OWFs, yet simple enough to admit a computationally efficient ``measure first, ask later'' state certification protocol, while at the same time remaining sufficiently intricate that the certification protocol itself cannot be efficiently classically simulated. In summary, they are families of states that are {\it hard enough to learn, structured enough to certify, yet not so structured that the certification protocol becomes classically simulable}. As such, the natural question raised by this work is the following:

\begin{question} Does there exist a family of states satisfying all three properties above?
\end{question}

We stress that, prior to this work, Hamiltonian phase states were precisely such a candidate! However, we show here that they fail to satisfy Property 3. Given this, to the best of our knowledge, no known family of states currently satisfies all three properties simultaneously. Below, we discuss the shortcomings of plausible candidates, with a partial summary given in Table~\ref{tab:landscape}.

{\bf OWF phase states:} The original PRSG construction of Ji, Liu and Song~\cite{PRSGdefinition} is obtained by considering a family of phase states with phases computed via a quantum-secure one-way function. As such, this family of states immediately cannot satisfy Property 3. 

{\bf Hamiltonian phase states.} As discussed in Remark~\ref{remark:efficient-preparation-HPS}, under either the Search (or Decision) HPS assumption, Hamiltonian phase states satisfy Property 1. Additionally, Theorem~\ref{thm:HPSforaset} (leveraging the state certification protocol of Ref.~\cite{HuangPreskillSoleimanifar2025}) shows that they satisfy Property 2. However, we have shown in Theorem~\ref{thm:there-is-classical-sim} that they do not satisfy Property 3!

{\bf Low stabilizer rank states.} Superpositions of few stabilizer states -- i.e.,  states with low \textit{stabilizer rank} -- are potentially plausible candidates for satisfying Property 1. In particular, despite recent work showing the learnability of states with bounded stabilizer \textit{extent}~\cite{srini-tomography}, it remains an open question whether states of low stabilizer \textit{rank} can be learned efficiently from copies (see Question 2 in Ref.~\cite{anshu2023surveycomplexitylearningquantum}). Additionally, by exploiting the stabilizer structure of these states, one can show that classical shadows provides a computationally efficient ``measure first, ask later'' state certification protocol -- i.e., that these states satisfy Property 2. Unfortunately, however, it appears that, as was the case for Hamiltonian phase states, the same structure that facilitates efficient certification \textit{also} allows one to classically simulate the quantum part of the measurement process, and therefore, these states fail to satisfy Property 3. As such, states with low stabilizer rank provide a second example of a family of states that plausibly satisfies Properties 1 and 2, but not Property 3.

{\bf Random circuits.}  Output states of (log depth) random brickwork  circuits are plausibly hard to learn or clone~\cite{fefferman2025hardnesslearningquantumcircuits}, giving good evidence for Property 1 (without assuming OWF in the construction). However, no computationally efficient ``measure first, ask later'' certification protocol is known for this family, and consequently, at present, they do not satisfy Property 2, and Property 3 cannot be assessed. We note that the lack of a computationally efficient ``measure first, ask later'' state certification protocol for these states is precisely why proposed QCCC digital signature schemes based on this class of states lack efficient classical certification~\cite{niroula2026digitalsignaturesclassicalshadows}.

\begin{table}[t]
\centering
\begin{adjustbox}{width=1\textwidth}
\begin{tabular}{lccc}
\toprule
State family &
PRSG/OWSG &
Efficient
measure-first certification &
Independent of OWF\\
\midrule

OWF phase states
& Yes (assuming OWF)
& Yes
& No \\

Hamiltonian phase states
    & Assumed
    & Yes
    & No \\

Low-rank stabilizer states
    & Plausible$^*$
    & Yes
    & No \\

Random brickwork circuits
    & Plausible
    & Unknown
    & Unknown \\

\bottomrule
\end{tabular}
\end{adjustbox}
\caption{A summary of the extent to which a variety of existing families of states satisfy the properties necessary for instantiating an EV-OWP which is genuinely in Microcrypt. Existing candidate state families
populate different regions of this landscape, but no known family currently
satisfies all of the desired properties. (*) Open problem listed as Question 2 in~\cite{anshu2023surveycomplexitylearningquantum}. }
\label{tab:landscape}
\end{table}

\textbf{PRSGs from Forrelation:} Ref.~\cite{Kretschmer_2023} has constructed an oracle relative to which \textit{$t$-Forrelation} states can be used to construct a \textit{single-copy} PRSG and $\mathsf{P}=\mathsf{NP}$ -- i.e.,  in this oracle world the single-copy PRSG does not imply OWF. Unfortunately, single-copy PRSG are only able to instantiate Nanocrypt, and to the best of our knowledge are too weak to construct OWPs and therefore do not suffice for our purpose. However, Ref.~\cite{Kretschmer_2023} has also outlined a path for constructing multi-copy PRSG, assuming a strong conjecture on properties of $t$-Forrelation states. In light of this, understanding the certifiability of such states, assuming the required property of $t$-Forrelation states provides an interesting route to better understanding whether such states could also provide an instantiation of EV-OWP in this oracle world.

{\bf Tensor network states:} In one-dimension, quantum states with polynomial bond dimension are efficiently learnable~\cite{Lanyon_2017}, and therefore not suitable candidates for the construction of PRSG or OWSG. However, on arbitrary graphs this is no longer immediately the case, and such states may potentially be complex enough for the construction of PRSG or OWSG, while retaining sufficient structure for computationally efficient "measure first, ask later" state certification.

\begin{remark}[On candidate ensembles for OWPs] In this section we have focused on candidate ensembles for the instantiation of EV-OWPs (and therefore Quantumania). However, if one is interested in constructing OWPs (and instantiating Countcrypt) the same list of criteria apply, but with only \textit{copy} efficiency of the state certification protocol required (as opposed to \textit{computational} efficiency). However, because \textit{every} state ensemble admits a copy-efficient "measure first, ask later" state certification protocol (global Clifford shadows) the tools and techniques we develop here cannot offer more insight into the suitability of any particular state ensemble -- i.e. tailored state certification protocols cannot add anything over classical shadows.
\end{remark}

\newpage
\appendix

\section{Proof of Corollary~\ref{cor:classical-sampling-implies-qsecDOWF}}\label{app:owp-to-doOWF}

As mentioned in the main text, this proof proceeds identically to the proof of Theorem~\ref{thm:KTowf} (Claim D.1 in~\cite{khurana2025founding}), after establishing that from a successful quantum adversary for a distributional OWF (as per Definition~\ref{def:qsecDOWF}) you can get the same guarantee you would get from a successful classical adversary (as per Definition~\ref{def:dist-one-way-function}).

To be more precise, let's assume that $\{f_\lambda\}_\lambda$ is \textit{not} a quantum-secure distributional OWF. Under this assumption, one has that
\begin{equation}
\overline{F}_\mathcal{A}(\lambda) > 1-\frac{1}{\lambda^2}
\end{equation}
for infinitely many $\lambda$. If we use the notation $z\gets \rho^{\mathcal{A}}_{f_\lambda}(r)$ to denote $z$ obtained by measuring $\rho^{\mathcal{A}}_{f_\lambda}(r)$ in the computational basis, then it follows from the Fuchs-van de Graaf inequality and Jensen's inequality that
\begin{align}
d_\mathrm{TV}\left( (r,f_\lambda(r))_{r\gets{0,1}^{m(\lambda)}}, (z,f_\lambda(r))_{\substack{r\gets{0,1}^{m(\lambda)}\\z\gets \rho^{\mathcal{A}}_{f_\lambda}(r)}} \right) &=\mathbb{E}_{r\gets\{0,1\}^{m(\lambda)}}\left[d_\mathrm{TV}\left(r'\gets f^{-1}_{\lambda}(f_\lambda(r)),z\gets\rho^{\mathcal{A}}_{f_\lambda(r)} \right)\right] \\
&\leq \mathbb{E}_{r\gets\{0,1\}^{m(\lambda)}}\left[\frac{1}{2}\|\sigma_{f_\lambda(r)} - \rho^{\mathcal{A}}_{f_\lambda(r)} \|_1\right]\\
&\leq \mathbb{E}_{r\gets\{0,1\}^{m(\lambda)}}\left[\sqrt{1-F_\mathrm{sq}(\sigma_{f_\lambda(r)},\rho^\mathcal{A}_{f_\lambda(r)})}\right]\\
&\leq \sqrt{1-\overline{F}_\mathcal{A}(\lambda)} \\
&< \frac{1}{\lambda}.
\end{align}
Now, define $\tilde{\mathcal{A}}$ as the QPT algorithm which on input $(1^\lambda,y)$ does the following:
\begin{enumerate}
\item Obtain $\rho^{\mathcal{A}}_{y} \gets \mathcal{A}(1^\lambda,|y\rangle\langle y|)$.
\item Output $z\gets \rho^{\mathcal{A}}_{y}$.
\end{enumerate}
Then, it follows from the above that
\begin{equation}\label{eq:second-guarantee}
d_\mathrm{TV}\left( (r,f_\lambda(r))_{r\gets{0,1}^{m(\lambda)}}, (z,f_\lambda(r))_{\substack{r\gets{0,1}^{m(\lambda)}\\z\gets \tilde{\mathcal{A}}(1^\lambda,f_{\lambda}(r))}}\right) \leq \sqrt{1-\overline{F}_\mathcal{A}(\lambda)} < \frac{1}{\lambda}.
\end{equation}
The rest of the proof is now identical to the proof of Claim D.1 in~\cite{khurana2025founding}, but we reproduce it for convenience. In particular, we use $\tilde{\mathcal{A}}$ to construct an adversary $\mathcal{B}$ for the OWP $(\mathsf{Samp},\mathsf{Ver})$ from which $\{f_\lambda\}$ was constructed. To do this, recall that we have defined the existence of an efficient deterministic classical algorithm $\mathsf{Samp}_C$ satisfying
\begin{equation}
d_\mathrm{TV}\left(\mathsf{Samp}(1^\lambda),\mathsf{Samp}_{\mathrm{C}}(1^\lambda,r)_{r\gets\{0,1\}^{m(\lambda)}}\right) \leq 1/3,
\end{equation}
where $\mathsf{Samp}_{\mathrm{C}}(1^\lambda,r) = (k_\lambda(r),s_\lambda(r))$. Using this, define the QPT puzzle adversary $\mathcal{B}$ on input $(1^\lambda,s)$ as follows:
\begin{enumerate}
\item $z\gets \tilde{\mathcal{A}}(1^\lambda,s)$.
\item $(k_\lambda(z),s_\lambda(z))\gets \mathsf{Samp}_{\mathrm{C}}(1^\lambda,z)$
\item Output $k_\lambda(z)$.
\end{enumerate}
Now, let $(k,s)\gets\mathsf{Samp}(1^\lambda)$. It follows from Eq.~\eqref{eq:second-guarantee} that we have
\begin{equation}
d_\mathrm{TV}\left((k_\lambda(r),s_\lambda(r))_{r\gets\{0,1\}^{m(\lambda)}}, (\mathcal{B}(1^\lambda,s_\lambda(r)),s_\lambda(r))_{r\gets\{0,1\}^{m(\lambda)}} \right) <\frac{1}{\lambda}
\end{equation}
and from the $1/3$-correctness of $\mathsf{Samp}_C$ that
\begin{equation}
d_\mathrm{TV}\left((k,s),(\mathcal{B}(1^\lambda,s),s)\right) \leq \frac{2}{3} + \frac{1}{\lambda}.
\end{equation}
Then, the correctness of the OWP implies
\begin{equation}
\mathrm{Pr}_{(k,s)\rightarrow\mathsf{Samp}(1^\lambda)}\left[\mathsf{Ver}(\mathcal{B}(1^\lambda,s),s)=1\right] \geq \frac{1}{3} - \frac{1}{\lambda} - \mathsf{negl}(\lambda)
\end{equation}
for infinitely many $\lambda$, which contradicts the assumed security of the OWP.

\section{Physical implementation of Hamiltonian phase states}
\label{app:experimental-implementation}

Hamiltonian phase states are attractive as candidate cryptographic states not only because of their conjectured computational properties, but also because they can be prepared using simple commuting quantum circuits. In this appendix, we briefly discuss how the operations required to prepare such states can be realized in common experimental architectures. This physical accessibility makes the obstruction established in this work particularly relevant: it applies to a concrete and experimentally motivated family of quantum states rather than to a purely abstract cryptographic construction.

As discussed in Remark~\ref{remark:efficient-preparation-HPS}, Hamiltonian phase states can be prepared using \emph{instantaneous quantum polynomial-time} 
(IQP) circuits. The matrix ${\bf A}$ specifies the structure and support of the Ising terms in the generating Hamiltonian. More 
precisely, for $A=(A_1,\ldots,A_n)\in\mathbb{F}2^n$, the corresponding operation 
is a multi-qubit Ising phase rotation of the 
form
\begin{equation}
U(\theta,A)
:=
\exp\left(
\iu\theta
\bigotimes_{j=1}^{n} Z_j^{A_j}
\right),
\label{eq}
\end{equation}
where $\theta\in\mathbb{R}$ and the Hamming weight $|A|$ 
determines the number of qubits on which the operation acts non-trivially.

For $|A|=2$, Eq.~\eqref{eq} is a two-qubit $ZZ$ rotation. Such 
entangling phase operations are natural in Rydberg-atom platforms, where strong dipole--dipole interactions and the Rydberg-blockade mechanism enable programmable interactions between pairs of atoms \cite{PhysRevLett.85.2208,RevModPhys.82.2313,PhysRevA.110.032619,Jandura,LevinePRL19,Wassner2026holonomicquantum}. Suitable pulse sequences can also generate multi-qubit controlled-phase operations within a blockade region, including three-qubit gates \cite{Jandura}. The precise range and connectivity of the available interactions depend on the geometry 
of the atomic array, the blockade radius, and the employed control protocol.

More general instances of Eq.~\eqref{eq} can be synthesized from 
two-qubit entangling gates and single-qubit rotations. Let $w=|A|$, and choose one of the $w$ qubits as a parity accumulator. A sequence of $w-1$ CNOT gates computes the parity of the participating qubits onto the accumulator. Applying a single-qubit $Z$ rotation to the accumulator and subsequently reversing the CNOT sequence implements the desired $w$-qubit phase rotation. Thus, in the standard auxiliary-system-free 
construction, the operation requires
$2(w-1)$
CNOT gates, together with 
one single-qubit $Z$ rotation. Equivalently, the construction can be expressed in terms of controlled-$Z$ gates by conjugating the appropriate target qubits with Hadamard gates. The required parity ladders can be parallelized when the hardware connectivity permits, thereby reducing their circuit depth.

Rydberg-ion systems provide another possible architecture. Strong dipolar interactions between Rydberg-excited ions can facilitate multi-qubit phase operations, potentially allowing some of the required interactions to be implemented using short pulse sequences \cite{RydbergIons}. In superconducting-qubit platforms, single-qubit rotations and local two-qubit entangling gates are routinely available \cite{SuperconductingQubits}. Higher-weight and geometrically non-local phase rotations must generally be compiled into the native gate set, with an overhead determined by the connectivity of the device.

Finally, the state certification protocol requires only single-qubit Pauli measurements after compu\-ta\-tio\-nal-basis measurements of the remaining qubits. These operations, as well as single-qubit readout, are standard capabilities of the architectures discussed above. We emphasize, however, that the role of this protocol in the present work is primarily conceptual: its measurement statistics on Hamiltonian phase states can be efficiently simulated classically, and it is precisely this simulability that allows us to prove that the Hamiltonian phase state assumptions imply one-way functions.

\section{Proof of Theorem~\ref{thm:HPSforaset}}\label{app:proof-certification}

We start by stating and proving Theorem 6 from Ref.~\cite{HuangPreskillSoleimanifar2025}, as the proof of Theorem~\ref{thm:HPSforaset} only requires a slight modification. To do this, consider the following Algorithm, which is Protocol 2 from Ref.~\cite{HuangPreskillSoleimanifar2025}, expressed in the language of Section~\ref{ss:state certification-prelim}:

\begin{algorithm}[H]
\caption{: Protocol 2 ($m=1$) from Ref.~\cite{HuangPreskillSoleimanifar2025}}\label{alg:HKP-protocol}
\begin{algorithmic}[1]
\Require Amplitude oracle $\mathsf{O}(|\phi\rangle)$, $t$ copies of unknown state $|\psi\rangle^{\otimes t}$ relaxation time $\tau$ and tolerance $\epsilon$.
\State $s \gets \mathcal{M}(|\psi\rangle^{\otimes t})$
\State $o\gets F_{|\phi\rangle}(s, \tau,\epsilon)$
\Ensure $o\in\{\mathsf{Accept},\mathsf{Reject}\}$.
\end{algorithmic}
\end{algorithm}

We then have the following:

\begin{theorem}[Theorem 6 from Ref.~\cite{HuangPreskillSoleimanifar2025}]\label{thm:thm6HPS}Given any target state $|\phi\rangle$ with relaxation time $\tau$, error $\epsilon>0$ and failure probability $\delta$, running Algorithm~\ref{alg:HKP-protocol} with
\begin{equation}
t = 64\frac{\tau^2}{\epsilon^2}\log\left(\frac{2}{\delta}\right)
\end{equation}
ensures that, with probability at least $1-\delta$:
\begin{enumerate}
\item If $|\langle \psi|\phi\rangle|^2 > 1-\epsilon/(2\tau)$ then $o=\mathsf{Accept}$,
\item If $|\langle \psi|\phi\rangle|^2 < 1-\epsilon$ then $o=\mathsf{Reject}$.
\end{enumerate}
\end{theorem}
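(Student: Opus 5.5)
The plan is to reduce the protocol to the estimation of a single expectation value $\langle\psi|L|\psi\rangle$ of a target-dependent operator $L$. We then bound this quantity from both sides in terms of the fidelity $F \coloneqq |\langle\psi|\phi\rangle|^2$, and finish with a concentration bound on the empirical mean $\hat\omega[|\phi\rangle]$.

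\textbf{Step 1 (identify $\mathbb{E}[\omega^{(j)}]$).} For $i\in[n]$ and $z\in\{0,1\}^{n-1}$, let $|\Psi_{z,i}\rangle$ be the normalized single-qubit state built from the amplitudes $\langle z\|b_i|\phi\rangle$. Define
\begin{equation*}
L \coloneqq \frac{1}{n}\sum_{i=1}^n\sum_{z\in\{0,1\}^{n-1}} |z\rangle\langle z|_{[n]\setminus\{i\}}\otimes |\Psi_{z,i}\rangle\langle\Psi_{z,i}|_i .
\end{equation*}
Condition on $i$ and on the outcome $z$, and let $\rho_{z,i}$ be the post-measurement state of qubit $i$. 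The single-qubit Pauli shadow is unbiased: averaging over $M\in\{X,Y,Z\}$ and $b$ gives $\mathbb{E}[\sigma^{(j)}\mid i,z]=\rho_{z,i}$. Hence $\mathbb{E}[\omega^{(j)}\mid i,z]=\langle\Psi_{z,i}|\rho_{z,i}|\Psi_{z,i}\rangle$. Weighting by $\Pr[z\mid i]$ and averaging over $i$ gives exactly $\mathbb{E}[\omega^{(j)}]=\langle\psi|L|\psi\rangle$. This is a routine calculation.

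\textbf{Step 2 (two-sided bounds).} Three structural facts about $L$ drive the argument.
\begin{itemize}
\item For each fixed $i$, the inner sum over $z$ is an orthogonal projector. Therefore $0\preceq L\preceq \mathbb{I}$.
\item $L|\phi\rangle=|\phi\rangle$. This holds because restricting $|\phi\rangle$ to a fixed $z$ on the other qubits gives a vector proportional to $|\Psi_{z,i}\rangle$.
\item By the definition of the relaxation time $\tau$ in Ref.~\cite{HuangPreskillSoleimanifar2025}, $\mathbb{I}-L\succeq \tau^{-1}(\mathbb{I}-|\phi\rangle\langle\phi|)$. Here $L$ is, up to conjugation by amplitudes, the transition operator of the single-bit-flip Markov chain with stationary distribution $|\langle x|\phi\rangle|^2$, and $1/\tau$ is its spectral gap.
\end{itemize}
Since $(\mathbb{I}-L)|\phi\rangle=0$, there are no cross terms, and we get
\begin{equation*}
\frac{1-F}{\tau}\;\le\; 1-\langle\psi|L|\psi\rangle\;\le\;1-F .
\end{equation*}
The two cases of Theorem~\ref{thm:thm6HPS} then follow.
\begin{itemize}
\item If $F>1-\epsilon/(2\tau)$, then $\mathbb{E}[\omega]>1-\epsilon/(2\tau)$.
\item If $F<1-\epsilon$, then $\mathbb{E}[\omega]< 1-\epsilon/\tau$.
\end{itemize}
The acceptance threshold $1-3\epsilon/(4\tau)$ sits at distance $\epsilon/(4\tau)$ from both of these values.

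\textbf{Step 3 (concentration).} Each $\omega^{(j)}=3\langle\cdot|b\rangle\langle b|\cdot\rangle-1$ lies in $[-1,2]$, and the $t$ rounds are i.i.d. By Hoeffding's inequality,
\begin{equation*}
\Pr\big[|\hat\omega[|\phi\rangle]-\mathbb{E}[\omega]|\ge \epsilon/(4\tau)\big]\le 2\exp\!\big(-2t(\epsilon/4\tau)^2/9\big).
\end{equation*}
This is at most $\delta$ once $t=O(\tau^2\epsilon^{-2}\log(2/\delta))$. The stated constant $64$ comes from a slightly tighter range or variance bound, and I would not fuss over constants.

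\textbf{Main obstacle.} The key inequality is the spectral-gap bound $\mathbb{I}-L\succeq\tau^{-1}(\mathbb{I}-|\phi\rangle\langle\phi|)$. The hard part is matching the relaxation-time definition of Ref.~\cite{HuangPreskillSoleimanifar2025} precisely to this operator statement. That requires checking that $L$ restricted to $\mathrm{span}\{|x\rangle\}$ is similar to the reversible Markov chain's transition matrix, with $|\phi\rangle$ as its top eigenvector. It also requires handling zero amplitudes, where $|\Psi_{z,i}\rangle$ is undefined; for phase states this issue does not arise. I would first verify this on phase states, where every amplitude has modulus $2^{-n/2}$ and the chain is simply the uniform random walk on the hypercube, and then state it in general.
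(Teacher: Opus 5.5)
Your three-step structure matches the paper's proof. You write $\mathbb{E}[\omega]=\langle\psi|L|\psi\rangle$, prove the sandwich $\frac{1-F}{\tau}\le 1-\langle\psi|L|\psi\rangle\le 1-F$, and then concentrate at deviation $\epsilon/(4\tau)$ around the threshold $1-3\epsilon/(4\tau)$. The paper simply imports the sandwich as (an adaptation of) Theorem 4 of Ref.~\cite{HuangPreskillSoleimanifar2025}. Your derivation of it from the spectral gap is correct and more self-contained: $L$ is unitarily similar to the symmetrized single-site heat-bath chain, with the phases of the amplitudes as the conjugating unitary.

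The gap is in Step 3. The theorem fixes $t=64\frac{\tau^2}{\epsilon^2}\log(2/\delta)$, but Hoeffding on $[-1,2]$ gives failure probability $2\exp(-t\epsilon^2/(72\tau^2))$. At the prescribed $t$ this equals $2(\delta/2)^{8/9}$, which exceeds $\delta$ for every $\delta\in(0,1)$. So as written you prove the statement only with $72$ in place of $64$. A tighter range cannot rescue Hoeffding, because $\omega$ genuinely attains both $-1$ and $2$. For example, this happens for $M=X$ with $|\Psi^{(j)}\rangle=|+\rangle$, which occurs for phase states whenever $g_k(z,i)=0$.

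The missing ingredient is a variance bound. It is what justifies the concentration inequality the paper cites, namely Eq.~(30) of Ref.~\cite{HuangPreskillSoleimanifar2025} in the form $2e^{-t\tilde{\epsilon}^2/2}$.
\begin{itemize}
\item Condition on $(i,z,M)$ and let $\vec r$ be the Bloch vector of $|\Psi_{z,i}\rangle$. Then $\omega=\tfrac12\pm\tfrac32 r_M$.
\item Averaging over uniform $M$ gives $\mathbb{E}[(\omega-\tfrac12)^2]=\tfrac34|\vec r|^2=\tfrac34$ for every input $|\psi\rangle$, so $\mathrm{Var}(\omega)\le\tfrac34$.
\item Since $0\preceq L\preceq\mathbb{I}$, you have $\mathbb{E}[\omega]\in[0,1]$, and hence $|\omega-\mathbb{E}[\omega]|\le 2$.
\item Apply Bernstein's inequality at $s=\epsilon/(4\tau)\le\tfrac14$ (using $\epsilon<1\le\tau$). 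This yields $2\exp\bigl(-ts^2/(\tfrac32+\tfrac43 s)\bigr)\le 2e^{-ts^2/2}$.
\item This is at most $\delta$ once $t\ge 32\frac{\tau^2}{\epsilon^2}\log(2/\delta)$, so the stated constant $64$ holds with room to spare.
\end{itemize}
With this replacement for Step 3, your proof is complete.
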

\begin{proof} We start by introducing some notation. In particular 
\begin{enumerate}
\item We use the notation $\mathbb{E}[\omega](|\psi\rangle,|\phi\rangle)$ to denote the \textit{shadow overlap} between $|\psi\rangle$ and $|\phi\rangle$ defined via
\begin{equation}
\mathbb{E}[\omega](|\psi\rangle,|\phi\rangle) = \langle \psi|L(|\phi\rangle)|\psi\rangle,
\end{equation}
where $L(|\phi\rangle)$ is the operator defined in 
Eq.\ (21) of Ref.~\cite{HuangPreskillSoleimanifar2025} (note that we use $|\phi\rangle$ to denote the target state here, but in~\cite{HuangPreskillSoleimanifar2025} the target state is denoted with $|\psi\rangle$).
\item We use the notation $\hat{\omega}_t(|\psi\rangle,|\phi\rangle)$ to denote the output of line 7 of Algorithm~\ref{alg:measurement-protocol-HPS} for $F_{|\phi\rangle}$ when run on input $s\gets\mathcal{M}(|\psi\rangle^{\otimes t})$.
\end{enumerate}
With the notation above, the proof of Theorem~\ref{thm:thm6HPS} proceeds in two steps:

\begin{enumerate}
\item Show that given $\mathbb{E}[\omega](|\psi\rangle,|\phi\rangle)$, one can decide whether $|\langle \psi|\phi\rangle|^2 > 1-\epsilon/(2\tau)$ or $|\langle \psi|\phi\rangle|^2 < 1-\epsilon$.
\item Bound the sample complexity necessary for $\hat{\omega}_t(|\psi\rangle,|\phi\rangle)$ to be sufficiently well concentrated around $\mathbb{E}[\omega](|\psi\rangle,|\phi\rangle)$.
\end{enumerate}
With this in mind, the first tool is the following:
\begin{lemma}[Adapted from Theorem 4 from Ref.~\cite{HuangPreskillSoleimanifar2025}]\label{lem:shadow-norm} Given a state $|\phi\rangle$ with relaxation time $\tau$, one has the following:
\begin{enumerate}
 \item If $|\langle \psi|\phi\rangle|^2 < 1-\epsilon$ then  $\mathbb{E}[\omega](|\psi\rangle,|\phi\rangle)< 1-\epsilon/\tau$,
 \item If $|\langle \psi|\phi\rangle|^2 > 1-\epsilon/(2\tau)$ then  $\mathbb{E}[\omega](|\psi\rangle,|\phi\rangle)\geq 1-\epsilon/(2\tau)$,
\end{enumerate}
\end{lemma}
As such, if one wants to decide whether $|\langle \psi|\phi\rangle|^2 < 1-\epsilon$ or $|\langle \psi|\phi\rangle|^2 > 1-\epsilon/(2\tau)$ its sufficient to check whether $\mathbb{E}[\omega](|\psi\rangle,|\phi\rangle)< 1-\epsilon/\tau$ or $\mathbb{E}[\omega](|\psi\rangle,|\phi\rangle)\geq 1-\epsilon/(2\tau)$.

With this established, we now want to understand the concentration of $\hat{\omega}_t(|\psi\rangle,|\phi\rangle)$ around $\mathbb{E}[\omega](|\psi\rangle,|\phi\rangle)$. To this end, note that Eq. (30) of Ref.~\cite{HuangPreskillSoleimanifar2025} can be written as
\begin{equation}\label{eq:concentration-inequality}
\underset{s\,\gets\,{\mathcal{M}}\left(\ket\psi^{\otimes t}\right)}{\mathrm{Pr}}\left[\left|\hat{\omega}_t(|\psi\rangle,|\phi\rangle) - \mathbb{E}[\omega](|\psi\rangle,|\phi\rangle)\right| > \tilde{\epsilon}\right]\leq 2e^{-t\tilde{\epsilon}^2/2},
\end{equation}
in the case $m=1$.  From the above, we then get the following lemma:
\begin{lemma}[Theorem 5 from Ref.~\cite{HuangPreskillSoleimanifar2025}]
For any fixed state $|\phi\rangle$, one has
\begin{equation}
\underset{s\,\gets\,{\mathcal{M}}\left(\ket\psi^{\otimes t}\right)}{\mathrm{Pr}}\left[\left|\hat{\omega}_t(|\psi\rangle,|\phi\rangle) - \mathbb{E}[\omega](|\psi\rangle,|\phi\rangle)\right| > \tilde{\epsilon}\right]\leq \delta
\end{equation}
provided
\begin{equation}
t \geq 2\frac{1}{\tilde{\epsilon}^2}\log\left(\frac{1}{\delta}\right).
\end{equation}
\end{lemma}
As a corollary of the above two lemmas, 
we see that if we set $\tilde{\epsilon} = \epsilon/(4\tau)$ then whenever 
\begin{equation}
t \geq 2\frac{1}{\tilde{\epsilon}^2}\log\left(\frac{1}{\delta}\right) = 64\frac{\tau^2}{\epsilon^2}\log\left(\frac{1}{\delta}\right)
\end{equation}
one has, with probability at least $1-\delta$, that:
\begin{enumerate}
 \item If $|\langle \psi|\phi\rangle|^2 < 1-\epsilon$ then  $\hat{\omega}_t(|\psi\rangle,|\phi\rangle)< 1-(3\epsilon)/(4\tau)$,
 \item If $|\langle \psi|\phi\rangle|^2 > 1-\epsilon/(2\tau)$ then  $\hat{\omega}_t(|\psi\rangle,|\phi\rangle)\geq 1-(3\epsilon/(4\tau)$.
\end{enumerate}
Now, the theorem statement follows from the fact that the output of $F_{|\phi\rangle}(s)$ on $s\gets\mathcal{M}(|\psi\rangle^{\otimes t})$ is determined precisely by whether $\hat{\omega}_t(|\psi\rangle,|\phi\rangle)$ is greater than or less than $1-(3\epsilon/(4\tau)$.
\end{proof}

With this established, we note the following Lemma:
\begin{lemma}[Lemma 27 from Ref.~\cite{HuangPreskillSoleimanifar2025}] For any phase state $|\phi\rangle$ the relaxation time is given by $\tau=n$.
\end{lemma}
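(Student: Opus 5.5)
We will argue directly from the definition of relaxation time used in Ref.~\cite{HuangPreskillSoleimanifar2025}. There, for a target state $|\phi\rangle$, one considers the Markov chain on $\{0,1\}^n$ whose stationary distribution is $\pi(x)=|\langle x|\phi\rangle|^2$. In one step, the chain picks an index $i\in[n]$ uniformly at random and resamples the bit $x_i$ from the conditional distribution $\pi(\,\cdot\,|\,x_{-i})$. The relaxation time $\tau$ is the inverse of the spectral gap of this chain's transition matrix $P$. The plan is to identify $P$ explicitly for phase states and diagonalise it.

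\textbf{Step 1 (the chain for phase states).} For any phase state we have $|\langle x|\phi\rangle|^2=2^{-n}$ for all $x$, independently of the phase function $f$. Consequently $\pi$ is uniform, and every conditional distribution $\pi(x_i\,|\,x_{-i})$ is $\mathrm{Bern}(1/2)$. The chain is therefore the lazy random walk on the hypercube, with transition operator
\begin{equation*}
P=\frac{1}{n}\sum_{i=1}^n \frac{\mathbb{I}+X_i}{2},
\end{equation*}
where $X_i$ flips bit $i$. We will also check that this operator is exactly the one entering $L(|\phi\rangle)$ in Eq.~(21) of Ref.~\cite{HuangPreskillSoleimanifar2025}. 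This amounts to matching conventions: the target-dependent part of $L$ enters only through the amplitudes $\langle z\|b_i|\phi\rangle$, whose moduli are all equal to $2^{-n/2}$. The phases cancel in the transition probabilities, since those depend only on squared moduli.

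\textbf{Step 2 (spectrum).} The $X_i$ commute, so $P$ is diagonalised by the Fourier characters $\chi_S(x)=(-1)^{\sum_{j\in S}x_j}$, or equivalently by the Hadamard basis. Since $X_i\chi_S=(-1)^{[i\in S]}\chi_S$, the eigenvalue of $P$ on $\chi_S$ is
\begin{equation*}
\frac{1}{n}\sum_{i=1}^n \frac{1+(-1)^{[i\in S]}}{2}=1-\frac{|S|}{n}.
\end{equation*}
The top eigenvalue is $1$, attained at $S=\emptyset$ (the stationary distribution). The next largest is $1-1/n$, attained at $|S|=1$, and all eigenvalues are non-negative because the walk is lazy. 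The spectral gap is thus $1/n$, which gives $\tau=n$.

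The only real obstacle is Step 1's bookkeeping: making sure the precise normalisation and definition of the relaxation time in Ref.~\cite{HuangPreskillSoleimanifar2025} is $1/(1-\lambda_2)$ for this single-site resampling chain. In particular, we must confirm that it is not defined with an extra factor, such as a continuous-time rate or $n$ times the gap. If the paper's convention differs by such a factor, we will adjust the computed gap accordingly. The eigenvalue computation itself is routine.
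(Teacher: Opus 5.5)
Your argument is correct. The paper gives no proof of its own; it imports the lemma from Ref.~\cite{HuangPreskillSoleimanifar2025}, where the result rests on the observation you make: $|\langle x|\phi\rangle|^2=2^{-n}$ turns the single-site resampling chain into the lazy hypercube walk with gap $1/n$. The normalisation you worry about is the standard $\tau=1/(1-\lambda_2)$, which is the convention consistent with Lemma~\ref{lem:shadow-norm}, whose first part is the spectral bound $\langle\psi|L(|\phi\rangle)|\psi\rangle\le 1-(1-|\langle\psi|\phi\rangle|^2)/\tau$.

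The only slip is your side claim that $P$ is literally the operator appearing in $L(|\phi\rangle)$. In fact $L(|\phi\rangle)=U_fPU_f^\dagger$ with $U_f=\sum_x e^{\iu f(x)}|x\rangle\langle x|$, which has the same spectrum, so the conclusion $\tau=n$ is unaffected.
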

With the above, we are finally ready to prove Theorem~\ref{thm:HPSforaset};

\begin{proof}[Proof of Theorem~\ref{thm:HPSforaset}] In order for the state certification protocol to work \textit{simultaneously} for all states $|\phi_k\rangle$ one simply has to bound the sample complexity required for $\hat{\omega}_t(|\psi\rangle,|\phi_k\rangle)$ to be sufficiently well concentrated around $\mathbb{E}[\omega](|\psi\rangle,|\phi_k\rangle)$ \textit{for all} $|\phi_k\rangle$ simultaneously. To do this, one takes a union bound using Eq.~\eqref{eq:concentration-inequality}, and finds
\begin{equation}
\underset{s\,\gets\,{\mathcal{M}}\left(\ket\psi^{\otimes t}\right)}{\mathrm{Pr}}\left[\exists\, k : \left|\hat{\omega}_t(|\psi\rangle,|\phi_k\rangle) - \mathbb{E}[\omega](|\psi\rangle,|\phi_k\rangle)\right| > \tilde{\epsilon}\right] < |\mathbb{K}|2e^{-t\tilde{\epsilon}^2/2},
\end{equation}
from which it follows that
\begin{equation}
\underset{s\,\gets\,{\mathcal{M}}\left(\ket\psi^{\otimes t}\right)}{\mathrm{Pr}}\left[\exists\, k : \left|\hat{\omega}_t(|\psi\rangle,|\phi_k\rangle) - \mathbb{E}[\omega](|\psi\rangle,|\phi_k\rangle)\right| > \tilde{\epsilon}\right] < \delta,
\end{equation}
provided that $t\geq \frac{2}{\tilde{\epsilon}^2}\log\left(\frac{2|\mathbb{K}|}{\delta}\right)$. Again setting $\tilde{\epsilon} = \epsilon/(4\tau)$ and using Lemma~\ref{lem:shadow-norm} we then have
\begin{equation}
\underset{s\gets \mathcal{M}(|\psi\rangle^{\otimes t})}{\mathrm{Pr}}\left[\forall \, k\in \mathbb{K} \, \begin{cases} \hat{\omega}_t(|\psi\rangle,|\phi_k\rangle) > 1-(3\epsilon/(4\tau) \text{ if } |\langle\psi|\phi_k\rangle|^2 > 1-\epsilon/(2\tau) \\  \hat{\omega}_t(|\psi\rangle,|\phi_k\rangle) < 1-(3\epsilon/(4\tau)\text{ if } |\langle \psi|\phi_k\rangle|^2 < 1-\epsilon \end{cases}\right] > 1-\delta,
\end{equation}
provided that $t\geq 64\frac{\tau^2}{\epsilon^2}\log\left(\frac{2|\mathbb{K}|}{\delta}\right)$. By using the fact that $\tau=n$ for phase states, the theorem statement follows.
\end{proof}

\section{Proof of Lemma~\ref{lem:owp-robustness}}\label{app:robustness-proof}

\begin{proof}
We need to prove both correctness and security of $(\mathsf{Samp'},\mathsf{Ver})$. For both, we utilize the fact that for any function $h$ with $h(k,s)\in [0,1]$ for all $(k,s)$ we have
\begin{equation}
\left|\underset{(k,s)\gets\mathsf{Samp}'(1^\lambda)}{\mathbb{E}}[h(k,s)] - \underset{(k,s)\gets\mathsf{Samp}(1^\lambda)}{\mathbb{E}}[h(k,s)]\right| \leq \epsilon(\lambda).
\end{equation}

\textit{Correctness.} Using the fact that $\mathsf{Ver}(k,s)\in\{0,1\}$ 
we have
\begin{align}
\underset{(k,s)\gets\mathsf{Samp}'(1^\lambda)}{\mathrm{Pr}}[\mathsf{Ver}(k,s) = 1] &=\underset{(k,s)\gets\mathsf{Samp}'(1^\lambda)}{\mathbb{E}}[\mathsf{Ver}(k,s)] \\
\nonumber
&\geq \underset{(k,s)\gets\mathsf{Samp}(1^\lambda)}{\mathbb{E}}[\mathsf{Ver}(k,s)] - \epsilon(\lambda) \\
\nonumber
&=\underset{(k,s)\gets\mathsf{Samp}(1^\lambda)}{\mathrm{Pr}}[\mathsf{Ver}(k,s) = 1] - \epsilon(\lambda) \\
\nonumber
& \geq 1-\mathsf{negl}(\lambda) - \epsilon(\lambda)\\
\nonumber
&\geq 1-\mathsf{negl}(\lambda).
\nonumber
\end{align}

\textit{Security:} For any QPT adversary $\mathcal{A}(s)\rightarrow k'$ define the function 
\begin{equation}
G_\mathcal{A}(k,s) = \underset{k'\leftarrow\mathcal{A}(s)}{\mathrm{Pr}}[\mathsf{Ver}(k',s) = 1]\in [0,1].
\end{equation}
Then for any QPT adversary $\mathcal{A}$ we have
\begin{align}
    \underset{\substack{
    (k,s)\,\gets\,{\sf Samp'}(1^\lambda)\,\,\\
    k'\,\gets\,{\mathcal A}(s)
    }}{\rm Pr}\big[{\sf Ver}(k',s)=1\big] &= \underset{(k,s)\gets\mathsf{Samp}'(1^\lambda)}{\mathbb{E}}[G_\mathcal{A}(k,s)] 
    \\
    \nonumber&\leq\underset{(k,s)\gets\mathsf{Samp}(1^\lambda)}{\mathbb{E}}[G_\mathcal{A}(k,s)] + \epsilon(\lambda)\\
    \nonumber
    & = \underset{\substack{
    (k,s)\,\gets\,{\sf Samp}(1^\lambda)\,\,
    \\
    \nonumber
    k'\,\gets\,{\mathcal A}(s)
    }}{\rm Pr}\big[{\sf Ver}(k',s)=1\big] + \epsilon(\lambda) \\
    \nonumber
    &\leq \mathsf{negl}(\lambda)+ \epsilon(\lambda)\\
    \nonumber
    &\leq \mathsf{negl}(\lambda).
    \nonumber
\end{align}
\end{proof}

\section{The one-way function obtained from Search HPS}\label{app:explicit-OWF}

In the main text, the OWF $\{F_\lambda\}$ implied by Theorem~\ref{thm:HPS-implies-owf} (under the Search HPS assumption) has been left implicit. In this appendix we make this OWF explicit. To this end:
\begin{enumerate}
\item We start from the Search HPS assumption, with functions $q,m$ and distribution $\chi_{q,m,\lambda}$ over $\mathbb{K}_{q,m,\lambda}$. 
\item Let $(\mathsf{KeyGen},\mathsf{StateGen},\mathsf{Ver})$ be the OWSG implied by the Search HPS assumption (i.e. as per Construction~\ref{con:OWSG-from-HPS}). 
\item Let $(\mathcal{M},\mathcal{F})$ be the $\eta$-simulable computationally efficient state certification protocol for $\{|\phi_k\rangle\,|\,k\in\mathbb{K}_{q,m,\lambda}\}$ from $t$ copies given in Theorem~\ref{thm:HPSforaset}.
\item Let $(\mathsf{Samp},\mathsf{Ver})$ be the EV-OWP constructed from $((\mathsf{KeyGen},\mathsf{StateGen},\mathsf{Ver}),(\mathcal{M},\mathcal{F}))$ via Construction~\ref{con:OWP-from-OWSG-state certification}. In particular, recall that $\mathsf{Samp}(1^\lambda) \rightarrow (k,s)$ with  $k\in\mathbb{K}_{q,m,\lambda}$ and $s = [s^{(1)},\ldots s^{(t(\lambda))}]$ where
\begin{enumerate}
\item $t(\lambda) = t(\lambda,\epsilon=1/8,\delta = 2^{-\lambda},|\mathbb{K}_{q,m\lambda}|)$
\item  $s^{(j)} = (i^{(j)},z^{(j)},M^{(j)},b^{(j)})$ with $i^{(j)}\in [\lambda]$, $z^{(j)}\in\{0,1\}^{\lambda-1}$, $M^{(j)}\in \{X,Y,Z\}$ and $b^{(j)}\in \{0,1\}$
\end{enumerate}
For convenience we define $S(\lambda) \coloneqq  \{[\lambda]\times\{0,1\}^{\lambda-1}\times \{X,Y,Z\}\times\{0,1\}\}^{t(\lambda)}$ so that $s\in S(\lambda)$.
\item Let $\overline{\mathsf{Samp}}_C(1^\lambda)\rightarrow (k,s)\in \mathbb{K}_{q,m,\lambda}\times S(\lambda)$ be the PPT algorithm for approximately sampling from $\mathsf{Samp}(1^\lambda)$, which does the following on input $1^\lambda$:
\begin{enumerate}
\item  $k\gets \chi_{q,m,\lambda}$ 
\item $s= [s^{(1)},\ldots s^{(t(\lambda))}]\gets \mathcal{M}_C(k,1^\lambda,1^{t(\lambda)})$ where $\mathcal{M}_C$ is given in Algorithm~\ref{alg:PPT-approx-M}.
\end{enumerate}

\end{enumerate}
To define the OWF via the construction in Theorem~\ref{thm:owf-from-classical-evowp} we now need to ``pull out'' the randomness from $\overline{\mathsf{Samp}}_C$.  To do this, we construct efficiently computable deterministic functions 
\begin{enumerate}
\item $k_\lambda:\{0,1\}^{p_1(\lambda)}\rightarrow \mathbb{K}_{q,m,\lambda}$
\item $s_\lambda:\{0,1\}^{p_1(\lambda)}\times\big\{ [\lambda]\times\{0,1\}^{\lambda-1}\times \{X,Y,Z\}\big\}^{t(\lambda)}\times\big\{\{0,1\}^\lambda\big\}^{t(\lambda)}\rightarrow S(\lambda)$
\end{enumerate}
which satisfy
\begin{equation}
\big(k_\lambda(r)\big)_{r\gets\{0,1\}^{p_1(\lambda)}} = \chi_{q,m,\lambda}
\end{equation}
and 
\begin{equation}\label{eq:pulled-out}
\overline{\mathsf{Samp}}_C(1^\lambda) = \big(k_\lambda(r_0),s_{\lambda}(r_0,r_1,r_2)\big)_{\begin{subarray}l r_0\gets\{0,1\}^{p_1(\lambda)}\\r_1\gets\big\{ [\lambda]\times\{0,1\}^{\lambda-1}\times \{X,Y,Z\}\big\}^{t(\lambda)} \\r_2\gets \big\{\{0,1\}^\lambda\big\}^{t(\lambda)}\end{subarray}}
\end{equation}
To this end, note that it follows from the assumption that $\chi_{q,m,\lambda}$ can be efficiently sampled from (with respect to $\lambda$) that there exists a polynomial $p_1$ and an efficiently computable deterministic function $\tilde{\chi}_{q,m,\lambda}:\{0,1\}^{p_1(\lambda)}\rightarrow\mathbb{K}_{q,m,\lambda} $ satisfying
\begin{equation}
\big(\tilde{\chi}_{q,m,\lambda}(r)\big)_{r\gets\{0,1\}^{p_1(\lambda)}} = \chi_{q,m,\lambda}.
\end{equation}
Given this, we simply define $k_\lambda:\{0,1\}^{p_1(\lambda)}\rightarrow\mathbb{K}_{q,m,\lambda}$ via $k_\lambda(r) =\tilde{\chi}_{q,m,\lambda}(r)$. 

To define $s_\lambda$, we start by noting that $\mathcal{M}_C$ works by calling $\hat{\mathcal{M}}_C$ (Algorithm~\ref{alg:classical-simulation-approx}) iteratively, and therefore to pull the randomness out of $\mathcal{M}_C$ we have to pull the randomness out of $\hat{\mathcal{M}}_C$, To this, end let $\tilde{\mathcal{B}}_j:\{0,1\}^{\lambda-1}\times [\lambda]\times \mathbb{K}_{q,m,\lambda}\times \{0,1\}^\lambda\rightarrow \{0,1\}$ be the efficiently computable deterministic function satisfying
\begin{equation}\label{eq:ber-pulled-out}
\left(\tilde{\mathcal{B}}_j(z,i,k,r)\right)_{r\gets\{0,1\}^\lambda} = \mathcal{B}_j(z,i,k,\lambda)
\end{equation}
where $\mathcal{B}_j$ is as defined in Lemma~\ref{lem:eff-approx-Bernoulli}. With this, 
we then define 
\begin{equation}
\tilde{s}_\lambda: \mathbb{K}_{q,m,\lambda} \times [\lambda]\times\{0,1\}^{\lambda-1}\times \{X,Y,Z\}\times \{0,1\}^\lambda\rightarrow [\lambda]\times \{0,1\}^{\lambda-1}\times\{X,Y,Z\}\times \{0,1\}
\end{equation}
via
\begin{equation}
\tilde{s}_\lambda(k,i,z,M,r)) = \begin{cases} (i,z,M, r(1)) &\text{if }M=Z\\
(i,z,M, \tilde{\mathcal{B}}_1(z,i,k,r)) &\text{if }M=X\\
(i,z,M, \tilde{\mathcal{B}}_2(z,i,k,r)) &\text{if }M=Y
\end{cases}
\end{equation}
where $r(1)$ denotes the first bit of the string $r$. With these definitions, it then follows from Eq.~\eqref{eq:ber-pulled-out} and the definition of $\hat{\mathcal{M}}_C$ that
\begin{equation}
\hat{\mathcal{M}}_C(k,\lambda,\lambda) = \big(\tilde{s}_\lambda(k,i,z,M,r)\big)_{\begin{subarray}l
(i,z,m)\gets [\lambda]\times \{0,1\}^{\lambda -1}\times \{X,Y,Z\} \\
r\gets \{0,1\}^\lambda
\end{subarray}}
\end{equation}
In other words, $\tilde{s}_\lambda$ is the function obtained by pulling out the randomness from $\hat{\mathcal{M}}_C$. With this, for any
\begin{equation}
(r_0,r_1,r_2) \in \{0,1\}^{p_1(\lambda)}\times\big\{ [\lambda]\times\{0,1\}^{\lambda-1}\times \{X,Y,Z\}\big\}^{t(\lambda)}\times\big\{\{0,1\}^\lambda\big\}^{t(\lambda)}
\end{equation}
with
\begin{align}
r_1 &= \left((i^{(1))},z^{(1))},M^{(1)}),\ldots,   (i^{(t(\lambda))},z^{(t(\lambda))},M^{(t(\lambda))}) \right)\\
r_2 &= (r^{(1)},\ldots, r^{t(\lambda)})
\end{align}
we can finally define the function $s_\lambda:\{0,1\}^{p_1(\lambda)}\times\big\{ [\lambda]\times\{0,1\}^{\lambda-1}\times \{X,Y,Z\}\big\}^{t(\lambda)}\times\big\{\{0,1\}^\lambda\big\}^{t(\lambda)}\rightarrow S(\lambda)$ via
\begin{align}
s_{\lambda}(r_0,r_1,r_2) = \left(\tilde{s}_\lambda \big(k_\lambda(r_0),i^{(1)},z^{(1)},M^{(1)},r^{(1)}\big) ,\ldots, \tilde{s}_\lambda \big(k_\lambda(r_0),i^{(t(\lambda))},z^{(t(\lambda))},M^{(t(\lambda))},r^{(t(\lambda))}\big)\right),
\end{align}
which can be checked to satisfy Eq.~\eqref{eq:pulled-out}. With this established, let's define
\begin{equation}
R(\lambda) = \{0,1\}^{p_1(\lambda)}\times\big\{ [\lambda]\times\{0,1\}^{\lambda-1}\times \{X,Y,Z\}\big\}^{t(\lambda)}\times\big\{\{0,1\}^\lambda\big\}^{t(\lambda)}.
\end{equation}
Using the construction of Theorem~\ref{thm:owf-from-classical-evowp}, the OWF implied by Theorem~\ref{thm:HPS-implies-owf} is then
$\{F_\lambda: R(\lambda)\rightarrow \{0,1\}^*\}$  with 
\begin{equation}
F_\lambda(r) = 
\begin{cases}
(0,s_\lambda(r)) & r\in\mathrm{Good}_\lambda \\
(1,r) & r\in\mathrm{Bad}_\lambda
\end{cases}
\end{equation}
where
\begin{align}
\mathrm{Good}_\lambda &= \{r\,|\, \mathsf{Ver}(k_\lambda(r),s_\lambda(r)) = \mathcal{F}(k_\lambda(r),s_\lambda(r),1/8,2^{-\lambda})=1\},\\
\mathrm{Bad}_\lambda &= \{r\,|\, \mathsf{Ver}(k_\lambda(r),s_\lambda(r))=\mathcal{F}(k_\lambda(r),s_\lambda(r),1/8,2^{-\lambda})=0\}.
\end{align}
and $\mathcal{F}$ is the efficiently computable post-processing function of the state certification protocol, defined in Eq.~\eqref{eq:post-processing-def}. As mentioned in Remark~\ref{rem:explicit-owf}, we note that the domain $R(\lambda)$ depends on the copy complexity $t(\lambda)$, and that the length of strings in the co-domain could be reduced by considering state certification protocols with improved copy-complexity.

\printbibliography

@misc{srini-tomography,
      title={Tomography of quantum states with bounded extent}, 
      author={Srinivasan Arunachalam and Arkopal Dutt},
      year={2026},
      eprint={2606.07425},
      archivePrefix={arXiv},
      optprimaryClass={quant-ph},
      url={https://arxiv.org/abs/2606.07425}, 
}

@article{regevOnLattices2009,
author = {Regev, Oded},
title = {On lattices, learning with errors, random linear codes, and cryptography},
year = {2009},
issue_date = {September 2009},
publisher = {Association for Computing Machinery},
address = {New York, NY, USA},
volume = {56},
optnumber = {6},
optissn = {0004-5411},
doi = {10.1145/1568318.1568324},
journal = {J. ACM},
articleno = {34},
}

@misc{badescu2017quantumstatecertification,
      title={Quantum state certification}, 
      author={Costin B\u adescu and Ryan O'Donnell and John Wright},
      year={2017},
      eprint={1708.06002},
      archivePrefix={arXiv},
      primaryClass={quant-ph},
      url={https://arxiv.org/abs/1708.06002}, 
}

@misc{coladangelo2026powerbasesrobustcopyoptimal,
      title={The Power of Two Bases: Robust and copy-optimal certification of nearly all quantum states with few-qubit measurements}, 
      author={Andrea Coladangelo and Jerry Li and Joseph Slote and Ellen Wu},
      year={2026},
      eprint={2602.11616},
      archivePrefix={arXiv},
      primaryClass={quant-ph},
      url={https://arxiv.org/abs/2602.11616}, 
}

@inproceedings{chen2022exponential,
  title={Exponential separations between learning with and without quantum memory},
  author={Chen, Sitan and Cotler, Jordan and Huang, Hsin-Yuan and Li, Jerry},
  booktitle={2021 IEEE 62nd Annual Symposium on Foundations of Computer Science (FOCS)},
  pages={574--585},
  year={2022},
  organization={IEEE}
}

@misc{bostanci2025efficientquantumpseudorandomnesshamiltonian,
      title={Efficient quantum pseudorandomness from Hamiltonian phase states}, 
      author={John Bostanci and Jonas Haferkamp and Dominik Hangleiter and Alexander Poremba},
      year={2025},
      eprint={2410.08073},
      archivePrefix={arXiv},
      optprimaryClass={quant-ph},
      opturl={https://arxiv.org/abs/2410.08073}, 
}

@misc{khurana2024commitmentsquantumonewayness,
      title={Commitments from quantum one-wayness}, 
      author={Dakshita Khurana and Kabir Tomer},
      year={2024},
      eprint={2310.11526},
      archivePrefix={arXiv},
      optprimaryClass={quant-ph},
      opturl={https://arxiv.org/abs/2310.11526}, 
}

@misc{morimae2024onewaynessquantumcryptography,
      title={One-wayness in quantum cryptography}, 
      author={Tomoyuki Morimae and Takashi Yamakawa},
      year={2024},
      eprint={2210.03394},
      archivePrefix={arXiv},
      optprimaryClass={quant-ph},
      opturl={https://arxiv.org/abs/2210.03394}, 
}

@article{Cavalar_2025,
   title={On the computational hardness of quantum one-wayness},
   volume={9},
   optissn={2521-327X},
   opturl={http://dx.doi.org/10.22331/q-2025-03-27-1679},
   DOI={10.22331/q-2025-03-27-1679},
   journal={Quantum},
   publisher={Verein zur Forderung des Open Access Publizierens in den Quantenwissenschaften},
   author={Cavalar, Bruno and Goldin, Eli and Gray, Matthew and Hall, Peter and Liu, Yanyi and Pelecanos, Angelos},
   year={2025},
   optmonth=mar, pages={1679} }

@inbook{Morimae_2022,
   title={Quantum commitments and signatures without one-way functions},
   optISBN={9783031158025},
   optissn={1611-3349},
   opturl={http://dx.doi.org/10.1007/978-3-031-15802-5_10},
   DOI={10.1007/978-3-031-15802-5_10},
   booktitle={Advances in Cryptology – CRYPTO 2022},
   publisher={Springer Nature Switzerland},
   author={Morimae, Tomoyuki and Yamakawa, Takashi},
   year={2022},
   pages={269–295} }

@InProceedings{Kretschmer_2021,
  author =	{Kretschmer, William},
  title =	{{Quantum Pseudorandomness and Classical Complexity}},
  booktitle =	{16th Conference on the Theory of Quantum Computation, Communication and Cryptography (TQC 2021)},
  pages =	{2:1--2:20},
  series =	{Leibniz International Proceedings in Informatics (LIPIcs)},
  ISBN =	{978-3-95977-198-6},
  ISSN =	{1868-8969},
  year =	{2021},
  volume =	{197},
  editor =	{Hsieh, Min-Hsiu},
  publisher =	{Schloss Dagstuhl -- Leibniz-Zentrum f{\"u}r Informatik},
  address =	{Dagstuhl, Germany},
  URL =		{https://drops.dagstuhl.de/entities/document/10.4230/LIPIcs.TQC.2021.2},
  URN =		{urn:nbn:de:0030-drops-139975},
  doi =		{10.4230/LIPIcs.TQC.2021.2}
}

@InProceedings{PRSGdefinition,
author="Ji, Zhengfeng
and Liu, Yi-Kai
and Song, Fang",
editor="Shacham, Hovav
and Boldyreva, Alexandra",
title="Pseudorandom quantum states",
booktitle="Advances in Cryptology -- CRYPTO 2018",
year="2018",
publisher="Springer International Publishing",
address="Cham",
pages="126--152",
}

@misc{brakerski2020scalablepseudorandomquantumstates,
      title={Scalable pseudorandom quantum states}, 
      author={Zvika Brakerski and Omri Shmueli},
      year={2020},
      eprint={2004.01976},
      archivePrefix={arXiv},
      optprimaryClass={quant-ph},
      opturl={https://arxiv.org/abs/2004.01976}, 
}

@InProceedings{Prabhanjan,
author="Ananth, Prabhanjan
and Gulati, Aditya
and Qian, Luowen
and Yuen, Henry",
editor="Kiltz, Eike
and Vaikuntanathan, Vinod",
title="Pseudorandom (function-like) quantum state generators: New definitions and applications",
booktitle="Theory of Cryptography",
year="2022",
publisher="Springer Nature Switzerland",
address="Cham",
pages="237--265",
optISBN="978-3-031-22318-1"
}

@misc{ananth2023pseudorandomstringspseudorandomquantum,
      title={Pseudorandom strings from pseudorandom quantum states}, 
      author={Prabhanjan Ananth and Yao-Ting Lin and Henry Yuen},
      year={2023},
      eprint={2306.05613},
      archivePrefix={arXiv},
      optprimaryClass={quant-ph},
      opturl={https://arxiv.org/abs/2306.05613}, 
}

@inproceedings{Kretschmer_2023, 
series={STOC ’23},
   title={Quantum Cryptography in Algorithmica},
   opturl={http://dx.doi.org/10.1145/3564246.3585225},
   DOI={10.1145/3564246.3585225},
   booktitle={Proceedings of the 55th Annual ACM Symposium on Theory of Computing},
   publisher={ACM},
   author={Kretschmer, William and Qian, Luowen and Sinha, Makrand and Tal, Avishay},
   year={2023},
   optmonth=jun, pages={1589–1602},
   collection={STOC ’23} }

@inproceedings{Kretschmer_2025, series={STOC ’25},
   title={Quantum-computable one-way functions without one-way functions},
   opturl={http://dx.doi.org/10.1145/3717823.3718144},
   DOI={10.1145/3717823.3718144},
   booktitle={Proceedings of the 57th Annual ACM Symposium on Theory of Computing},
   publisher={ACM},
   author={Kretschmer, William and Qian, Luowen and Tal, Avishay},
   year={2025},
   optmonth=jun, pages={189–200},
   collection={STOC ’25} }

@inproceedings{khurana2025founding,
  title={Founding quantum cryptography on quantum advantage, or, towards cryptography from {\#P} hardness},
  author={Khurana, Dakshita and Tomer, Kabir},
  booktitle={Proceedings of the 57th Annual ACM Symposium on Theory of Computing},
  pages={178--188},
  year={2025}
}

@misc{microcryptzoo,
  author       = {Sattath, Or},
  title        = {{MicroCrypt Zoo}: An interactive visualization of quantum cryptographic primitives},
  year         = {2024},
  howpublished = {\url{https://sattath.github.io/microcrypt-zoo/}},
  note         = {Accessed: 2026-04-04}
}

@article{SuperconductingQubits,
title={Superconducting qubits: Current state of play},
author={Morten Kjaergaard and Mollie E. Schwartz and Jochen Braumüller and Philip Krantz and Joel I-Jan Wang and Simon Gustavsson and William D. Oliver},
journal={Ann. Rev. Cond. Matt. Phys.}, volume=11, pages={369-395}, year=2020,
DOI={10.1146/annurev-conmatphys-031119-050605}}

@article{RydbergIons,
title={{Trapped Rydberg ions: From spin chains to fast quantum gates}},
author={Markus Mueller and Lin-Mei Liang and Igor Lesanovsky and Peter Zoller},
journal={New J. Phys.}, volume=10, pages=093009, year=2008,
DOI={10.1088/1367-2630/10/9/093009}}

@inproceedings{ananth2022cryptography,
  title={Cryptography from pseudorandom quantum states},
  author={Ananth, Prabhanjan and Qian, Luowen and Yuen, Henry},
  booktitle={Annual International Cryptology Conference},
  pages={208--236},
  year={2022},
  organization={Springer}
}

@article{grilo2025quantum,
  title={Quantum pseudoresources imply cryptography},
  author={Grilo, Alex B. and Y{\'a}ng{\"u}ez, {\'A}lvaro},
  eprint    = {2504.15025},
  archiveprefix = {arXiv},
  year={2025}
}

@misc{anshu2023surveycomplexitylearningquantum,
      title={A survey on the complexity of learning quantum states}, 
      author={Anurag Anshu and Srinivasan Arunachalam},
      year={2023},
      eprint={2305.20069},
      archivePrefix={arXiv},
      primaryClass={quant-ph},
      url={https://arxiv.org/abs/2305.20069}, 
}

@misc{goldin2024countcrypt,
      title={CountCrypt: Quantum Cryptography between QCMA and PP}, 
      author={Eli Goldin and Tomoyuki Morimae and Saachi Mutreja and Takashi Yamakawa},
      year={2025},
      eprint={2410.14792},
      archivePrefix={arXiv},
      primaryClass={quant-ph},
      url={https://arxiv.org/abs/2410.14792}, 
}

@InProceedings{BCQ2023,
  author    = {Brakerski, Zvika and Canetti, Ran and Qian, Luowen},
  title     = {{On the computational hardness needed for quantum cryptography}},
  booktitle = {14th Innovations in Theoretical Computer Science Conference
               (ITCS 2023)},
  pages     = {24:1--24:21},
  series    = {Leibniz International Proceedings in Informatics (LIPIcs)},
  volume    = {251},
  editor    = {Tauman Kalai, Yael},
  publisher = {Schloss Dagstuhl -- Leibniz-Zentrum f{\"u}r Informatik},
  address   = {Dagstuhl, Germany},
  year      = {2023},
  optISBN      = {978-3-95977-263-1},
  optissn      = {1868-8969},
  doi       = {10.4230/LIPIcs.ITCS.2023.24},
  eprint    = {2209.04101},
  archiveprefix = {arXiv},
}

@article{Huang_2020,
   title={Predicting many properties of a quantum system from very few measurements},
   volume={16},
   optissn={1745-2481},
   opturl={http://dx.doi.org/10.1038/s41567-020-0932-7},
   DOI={10.1038/s41567-020-0932-7},
   number={10},
   journal={Nature Physics},
   publisher={Springer Science and Business Media LLC},
   author={Huang, Hsin-Yuan and Kueng, Richard and Preskill, John},
   year={2020},
   optmonth=jun, pages={1050–1057} }

@misc{niroula2026digitalsignaturesclassicalshadows,
      title={Digital signatures with classical shadows on near-term quantum computers}, 
      author={Pradeep Niroula and Minzhao Liu and Sivaprasad Omanakuttan and David Amaro and Shouvanik Chakrabarti and Soumik Ghosh and Zichang He and Yuwei Jin and Fatih Kaleoglu and Steven Kordonowy and Rohan Kumar and Michael A. Perlin and Akshay Seshadri and Matthew Steinberg and Joseph Sullivan and Jacob Watkins and Henry Yuen and Ruslan Shaydulin},
      year={2026},
      eprint={2602.04859},
      archivePrefix={arXiv},
      optprimaryClass={quant-ph},
      opturl={https://arxiv.org/abs/2602.04859}, 
}

@misc{fefferman2025hardnesslearningquantumcircuits,
      title={The hardness of learning quantum circuits and its cryptographic applications}, 
      author={Bill Fefferman and Soumik Ghosh and Makrand Sinha and Henry Yuen},
      year={2025},
      eprint={2504.15343},
      archivePrefix={arXiv},
      optprimaryClass={quant-ph},
      opturl={https://arxiv.org/abs/2504.15343}, 
}

@misc{QCCCCrypto,
      author = {Kai-Min Chung and Eli Goldin and Matthew Gray},
      title = {On central primitives for quantum cryptography with classical communication},
      howpublished = {Cryptology {ePrint} Archive, Paper 2024/356},
      year = {2024},
      opturl = {https://eprint.iacr.org/2024/356}
}

@misc{park2026samplehardwareefficientfidelityestimation,
      title={Sample- and hardware-efficient fidelity estimation by stripping phase-dominated magic}, 
      author={Guedong Park and Jaekwon Chang and Yosep Kim and Yong Siah Teo and Hyunseok Jeong},
      year={2026},
      eprint={2602.09710},
      archivePrefix={arXiv},
      optprimaryClass={quant-ph},
      opturl={https://arxiv.org/abs/2602.09710}, 
}

@misc{YihuiLSN1,
      title={The learning stabilizers with noise problem}, 
      author={Alexander Poremba and Yihui Quek and Peter Shor},
      year={2025},
      eprint={2410.18953},
      archivePrefix={arXiv},
      optprimaryClass={quant-ph},
      opturl={https://arxiv.org/abs/2410.18953}, 
}

@misc{YihuiLSN2,
      title={Post-quantum cryptography from quantum stabilizer decoding}, 
      author={Jonathan Z. Lu and Alexander Poremba and Yihui Quek and Akshar Ramkumar},
      year={2026},
      eprint={2603.19110},
      archivePrefix={arXiv},
      optprimaryClass={quant-ph},
      opturl={https://arxiv.org/abs/2603.19110}, 
}

@misc{KhesinLSN3,
      title={Average-case complexity of quantum stabilizer decoding}, 
      author={Andrey Boris Khesin and Jonathan Z. Lu and Alexander Poremba and Akshar Ramkumar and Vinod Vaikuntanathan},
      year={2025},
      eprint={2509.20697},
      archivePrefix={arXiv},
      optprimaryClass={quant-ph},
      opturl={https://arxiv.org/abs/2509.20697}, 
}

@misc{hiroka2025hardnessquantumdistributionlearning,
      title={Hardness of quantum distribution learning and quantum cryptography}, 
      author={Taiga Hiroka and Min-Hsiu Hsieh and Tomoyuki Morimae},
      year={2025},
      eprint={2507.01292},
      archivePrefix={arXiv},
      optprimaryClass={quant-ph},
      opturl={https://arxiv.org/abs/2507.01292}, 
}

@misc{hiroka2024computationalcomplexitylearningefficiently,
      title={Computational complexity of learning efficiently generatable pure states}, 
      author={Taiga Hiroka and Min-Hsiu Hsieh},
      year={2024},
      eprint={2410.04373},
      archivePrefix={arXiv},
      optprimaryClass={quant-ph},
      opturl={https://arxiv.org/abs/2410.04373}, 
}

@article{Hangleiter_2023,
   title={Computational advantage of quantum random sampling},
   volume={95},
   optissn={1539-0756},
   opturl={http://dx.doi.org/10.1103/RevModPhys.95.035001},
   DOI={10.1103/revmodphys.95.035001},
   number={3},
   journal={Rev. Mod. Phys.},
   publisher={American Physical Society (APS)},
   author={Hangleiter, Dominik and Eisert, Jens},
   year={2023},
   optmonth=jul }

@article{shepherd2009temporally,
  title={Temporally unstructured quantum computation},
  author={Shepherd, Dan and Bremner, Michael J.},
  journal={Proc. Roy. Soc. A},
  volume={465},
  number={2105},
  pages={1413--1439},
  year={2009},
  publisher={The Royal Society London}
}

@article{elben2023randomized,
  title={The randomized measurement toolbox},
  author={Elben, Andreas and Flammia, Steven T. and Huang, Hsin-Yuan and Kueng, Richard and Preskill, John and Vermersch, Benoit and Zoller, Peter},
  journal={Nature Rev. Phys.},
  volume={5},
  number={1},
  pages={9--24},
  year={2023},
  publisher={Nature Publishing Group UK London}
}

@article{HuangPreskillSoleimanifar2025,
  author  = {Huang, Hsin-Yuan and Preskill, John and Soleimanifar, Mehdi},
  title   = {Certifying almost all quantum states with few single-qubit measurements},
  journal = {Nat. Phys.},
  volume  = {21},
  pages   = {1834},
  year    = {2025},
  doi     = {10.1038/s41567-025-03025-1},
  eprint  = {2404.07281},
  archivePrefix = {arXiv},
  note    = {Also in Proc.\ FOCS 2024},
}

@misc{gupta2025singlequbitmeasurementssufficecertify,
      title={Few single-qubit measurements suffice to certify any quantum state}, 
      author={Meghal Gupta and William He and Ryan O'Donnell},
      year={2025},
      eprint={2506.11355},
      archivePrefix={arXiv},
      primaryClass={quant-ph},
      url={https://arxiv.org/abs/2506.11355}, 
}

@article{HelsenWalter2023,
  author  = {Helsen, Jonas and Walter, Michael},
  title   = {Thrifty shadow estimation: Reusing quantum circuits and bounding tails},
  journal = {Phys. Rev. Lett.},
  volume  = {131},
  number  = {24},
  pages   = {240602},
  year    = {2023},
  doi     = {10.1103/PhysRevLett.131.240602},
}

@article{GrewalIngram2024,
  author  = {Daniel Grier and Hakop Pashayan and Luke Schaeffer},
  title   = {Sample-optimal classical shadows for pure states},
  journal = {Quantum},
  volume  = {8},
  pages   = {1373},
  year    = {2024},
  doi     = {10.22331/q-2024-06-17-1373},
}

@article{KohGrewal2022,
  author  = {Koh, Dax Enshan and Grewal, Sabee},
  title   = {Classical shadows with noise},
  journal = {Quantum},
  volume  = {6},
  pages   = {776},
  year    = {2022},
  doi     = {10.22331/q-2022-08-16-776},
}

@article{ZhaoRubinMiyake2021,
  author  = {Zhao, Andrew and Rubin, Nicholas C. and Miyake, Akimasa},
  title   = {Fermionic partial tomography via classical shadows},
  journal = {Phys. Rev. Lett.},
  volume  = {127},
  number  = {11},
  pages   = {110504},
  year    = {2021},
  doi     = {10.1103/PhysRevLett.127.110504},
}

@article{WanHugginsLeeBabbush2023,
  author  = {Wan, Kianna and Huggins, William J. and Lee, Joonho and Babbush, Ryan},
  title   = {Matchgate shadows for fermionic quantum simulation},
  journal = {Commun. Math. Phys.},
  volume  = {404},
  pages   = {629--700},
  year    = {2023},
  doi     = {10.1007/s00220-023-04844-0},
  eprint  = {2207.13723},
  archivePrefix = {arXiv},
}

@article{BertoniEtAl2024ShallowShadows,
  author  = {Bertoni, Christian and Haferkamp, Jonas and Hinsche, Marcel and Ioannou, Marios and Eisert, Jens and Pashayan, Hakop},
  title   = {Shallow shadows: Expectation estimation using low-depth random {Clifford} circuits},
  journal={Phys. Rev. Lett.}, volume=133, pages={020602}, year=2024,
DOI={10.1103/PhysRevLett.133.020602}
}

@misc{ma2025constructrandomunitaries,
      title={How to construct random unitaries}, 
      author={Fermi Ma and Hsin-Yuan Huang},
      year={2025},
      eprint={2410.10116},
      archivePrefix={arXiv},
      optprimaryClass={quant-ph},
      opturl={https://arxiv.org/abs/2410.10116}, 
}

@misc{metger2024simpleconstructionslineardepthtdesigns,
      title={Simple constructions of linear-depth t-designs and pseudorandom unitaries}, 
      author={Tony Metger and Alexander Poremba and Makrand Sinha and Henry Yuen},
      year={2024},
      eprint={2404.12647},
      archivePrefix={arXiv},
      optprimaryClass={quant-ph},
      opturl={https://arxiv.org/abs/2404.12647}, 
}

@misc{brakerski2019pseudorandomquantumstates,
      title={(Pseudo) random quantum states with binary phase}, 
      author={Zvika Brakerski and Omri Shmueli},
      year={2019},
      eprint={1906.10611},
      archivePrefix={arXiv},
      optprimaryClass={quant-ph},
      opturl={https://arxiv.org/abs/1906.10611}, 
}

@inproceedings{Chen_2024,
   title={Efficient unitary designs from random sums and permutations},
   opturl={http://dx.doi.org/10.1109/FOCS61266.2024.00037},
   DOI={10.1109/focs61266.2024.00037},
   booktitle={2024 IEEE 65th Annual Symposium on Foundations of Computer Science (FOCS)},
   publisher={IEEE},
   author={Chen, Chi-Fang and Docter, Jordan and Xu, Michelle and Bouland, Adam and Brandao, Fernando G.~S.~L. and Hayden, Patrick},
   year={2024},
   optmonth=oct, pages={476–484} }

@article{bremner2011classical,
  title={Classical simulation of commuting quantum computations implies collapse of the polynomial hierarchy},
  author={Bremner, Michael J. and Jozsa, Richard and Shepherd, Dan J.},
  journal={Proc. 
  Roy. Soc. A},
  volume={467},
  number={2126},
  pages={459--472},
  year={2011},
  publisher={The Royal Society}
}

@article{Jandura,
title={Time-optimal two- and three-qubit gates for Rydberg atoms},
author={Sven Jandura and Guido Pupillo},
journal={Quantum}, 
volume=6, 
pages=712,
year=2022,
DOI={10.22331/q-2022-05-13-712}
}

@article{PhysRevLett.85.2208,
  title = {Fast quantum gates for neutral atoms},
  author = {Jaksch, Dieter and Cirac, J. Ignacio and Zoller, Peter and Rolston,  Steven L. and C\^ot\'e, Robin and Lukin, Mischa D.},
  journal = {Phys. Rev. Lett.},
  volume = {85},
  issue = {10},
  pages = {2208--2211},
  numpages = {0},
  year = {2000},
  optoptmonth = {Sep},
  publisher = {American Physical Society},
  doi = {10.1103/PhysRevLett.85.2208}
}

@article{PhysRevA.110.032619,
  title = {High-fidelity and robust controlled-$Z$ gates implemented with Rydberg atoms via echoing rapid adiabatic passage},
  author = {Xue, Ming and Xu, Shijie and Li, Xinwei and Li, Xiangliang},
  journal = {Phys. Rev. A},
  volume = {110},
  issue = {3},
  pages = {032619},
  numpages = {7},
  year = {2024},
  optoptmonth = {Sep},
  publisher = {American Physical Society},
  doi = {10.1103/PhysRevA.110.032619}
}

@article{RevModPhys.82.2313,
  title = {Quantum information with {Rydberg} atoms},
  author = {Saffman, Mark and Walker, Thad G. and M\o{}lmer, Klaus},
  journal = {Rev. Mod. Phys.},
  volume = {82},
  issue = {3},
  pages = {2313--2363},
  year = {2010},
  publisher = {American Physical Society},
  doi = {10.1103/RevModPhys.82.2313}
}

@article{LevinePRL19,
  title = {Parallel Implementation of High-Fidelity Multiqubit Gates with Neutral Atoms},
  author = {Levine, Harry and Keesling, Alexander and Semeghini, Giulia and Omran, Ahmed and Wang, Tout T. and Ebadi, Sepehr and Bernien, Hannes and Greiner, Markus and Vuleti\ifmmode \acute{c}\else \'{c}\fi{}, Vladan and Pichler, Hannes and Lukin, Mikhail D.},
  journal = {Phys. Rev. Lett.},
  volume = {123},
  issue = {17},
  pages = {170503},
  numpages = {6},
  year = {2019},
  optmonth = {Oct},
  publisher = {American Physical Society},
  doi = {10.1103/PhysRevLett.123.170503},
  url = {https://link.aps.org/doi/10.1103/PhysRevLett.123.170503}
}

@article{Wassner2026holonomicquantum,
  doi = {10.22331/q-2026-04-23-2080},
  title = {Holonomic quantum computation: a scalable adiabatic architecture},
  author = {Wassner, Clara and Guaita, Tommaso and Eisert, Jens and Carrasco, Jose},
  journal = {{Quantum}},
  optissn = {2521-327X},
  publisher = {{Verein zur F{\"{o}}rderung des Open Access Publizierens in den Quantenwissenschaften}},
  volume = {10},
  pages = {2080},
  optmonth = apr,
  year = {2026}
}

@inproceedings{impagliazzo1989one,
  title={One-way functions are essential for complexity based cryptography},
  author={Impagliazzo, Russell and Luby, Michael},
  booktitle={30th Annual Symposium on Foundations of Computer Science},
  pages={230--235},
  year={1989},
  organization={IEEE Computer Society}
}

@inbook{Behera_2025,
   title={A new world in the depths of Microcrypt: Separating OWSGs and quantum money from QEFID},
   ISBN={9783031910982},
   ISSN={1611-3349},
   url={http://dx.doi.org/10.1007/978-3-031-91098-2_2},
   DOI={10.1007/978-3-031-91098-2_2},
   booktitle={Advances in Cryptology – EUROCRYPT 2025},
   publisher={Springer Nature Switzerland},
   author={Behera, Amit and Malavolta, Giulio and Morimae, Tomoyuki and Mour, Tamer and Yamakawa, Takashi},
   year={2025},
   pages={23–52} }

@misc{PKE-deletion,
      author = {Fuyuki Kitagawa and Ryo Nishimaki and Takashi Yamakawa},
      title = {Publicly Verifiable Deletion from Minimal Assumptions},
      howpublished = {Cryptology {ePrint} Archive, Paper 2023/538},
      year = {2023},
      url = {https://eprint.iacr.org/2023/538}
}

@article{Li_2026,
   title={Universal and efficient quantum state verification via Schmidt decomposition and mutually unbiased bases},
   volume={10},
   ISSN={2521-327X},
   url={http://dx.doi.org/10.22331/q-2026-03-04-2011},
   DOI={10.22331/q-2026-03-04-2011},
   journal={Quantum},
   publisher={Verein zur Forderung des Open Access Publizierens in den Quantenwissenschaften},
   author={Li, Yunting and Zhu, Huangjun},
   year={2026},
   month=Mar, pages={2011} }

@misc{owsgevowpseperations,
      author = {John Bostanci and Boyang Chen and Barak Nehoran},
      title = {Oracle separation between quantum commitments and quantum one-wayness},
      howpublished = {Cryptology {ePrint} Archive, Paper 2024/1568},
      year = {2024},
      url = {https://eprint.iacr.org/2024/1568}
}

@article{Conrad_2026,
   title={Chasing shadows with Gottesman-Kitaev-Preskill codes},
   volume={10},
   ISSN={2521-327X},
   url={http://dx.doi.org/10.22331/q-2026-01-19-1973},
   DOI={10.22331/q-2026-01-19-1973},
   journal={Quantum},
   publisher={Verein zur Forderung des Open Access Publizierens in den Quantenwissenschaften},
   author={Conrad, Jonathan and Eisert, Jens and Flammia, Steven T.},
   year={2026},
   month=Jan, pages={1973} }

@article{Becker_2024,
   title={Classical shadow tomography for continuous variables quantum systems},
   volume={70},
   ISSN={1557-9654},
   url={http://dx.doi.org/10.1109/TIT.2024.3357972},
   DOI={10.1109/tit.2024.3357972},
   number={5},
   journal={IEEE Transactions on Information Theory},
   publisher={Institute of Electrical and Electronics Engineers (IEEE)},
   author={Becker, Simon and Datta, Nilanjana and Lami, Ludovico and Rouze, Cambyse},
   year={2024},
   month=May, pages={3427–3452} }

@misc{coladangelo2026robustquantumstatecertification,
      title={Robust quantum state certification and uncertainty principles for total influence}, 
      author={Andrea Coladangelo and Jerry Li and Joseph Slote},
      year={2026},
      eprint={2607.27184},
      archivePrefix={arXiv},
      optprimaryClass={quant-ph},
      url={https://arxiv.org/abs/2607.27184}, 
}

@misc{cojocaru2026equivalenceaveragecasehardnesslearning,
      title={Equivalence Between Average-Case Hardness of Learning and Cryptography for Mixed Quantum States}, 
      author={Alexandru Cojocaru and Laura Lewis},
      year={2026},
      eprint={2608.14331},
      archivePrefix={arXiv},
      primaryClass={quant-ph},
      url={https://arxiv.org/abs/2608.14331}, 
}

@article{Lanyon_2017,
   title={Efficient tomography of a quantum many-body system},
   volume={13},
   ISSN={1745-2481},
   url={http://dx.doi.org/10.1038/nphys4244},
   DOI={10.1038/nphys4244},
   number={12},
   journal={Nature Physics},
   publisher={Springer Science and Business Media LLC},
   author={Lanyon, B. P. and Maier, C. and Holzäpfel, M. and Baumgratz, T. and Hempel, C. and Jurcevic, P. and Dhand, I. and Buyskikh, A. S. and Daley, A. J. and Cramer, M. and Plenio, M. B. and Blatt, R. and Roos, C. F.},
   year={2017},
 pages={1158–1162} }

@book{goldreich2001foundations,
  title={Foundations of cryptography},
  author={Goldreich, Oded and others},
  volume={1},
  number={3},
  year={2001},
  publisher={Cambridge university press Cambridge}
}

@inproceedings{Radian_2019, series={AFT ’19},
   title={Semi-Quantum Money},
   url={http://dx.doi.org/10.1145/3318041.3355462},
   DOI={10.1145/3318041.3355462},
   booktitle={Proceedings of the 1st ACM Conference on Advances in Financial Technologies},
   publisher={ACM},
   author={Radian, Roy and Sattath, Or},
   year={2019},
   month=Oct, pages={132–146},
   collection={AFT ’19} }

@misc{kashefi2007statisticalzeroknowledgequantum,
      title={Statistical Zero Knowledge and quantum one-way functions}, 
      author={Elham Kashefi and Iordanis Kerenidis},
      year={2007},
      eprint={quant-ph/0511266},
      archivePrefix={arXiv},
      primaryClass={quant-ph},
      url={https://arxiv.org/abs/quant-ph/0511266}, 
}

@phdthesis{impagliazzo1992pseudo,
  title={Pseudo-random generators for cryptography and for randomized algorithms},
  author={Impagliazzo, Russell},
  year={1992},
  school={PhD thesis, University of California, Berkeley, 1992. http://cseweb. ucsd~…}
}

@article{Arapinis_2021,
   title={Quantum Physical Unclonable Functions: Possibilities and Impossibilities},
   volume={5},
   ISSN={2521-327X},
   url={http://dx.doi.org/10.22331/q-2021-06-15-475},
   DOI={10.22331/q-2021-06-15-475},
   journal={Quantum},
   publisher={Verein zur Forderung des Open Access Publizierens in den Quantenwissenschaften},
   author={Arapinis, Myrto and Delavar, Mahshid and Doosti, Mina and Kashefi, Elham},
   year={2021},
   pages={475} }
\end{document}